\pgfplotsset{compat=1.14}
\newcommand{\refP}[1]{%
	\def\InputString{#1}%
	\IfBeginWith{\InputString}{Equation}{%
		(\ref{#1})}{%
	\IfBeginWith{\InputString}{Section}{%
		Section \ref{#1}}{%
	\IfBeginWith{\InputString}{Subsection}{%
		Subsection \ref{#1}}{%
	\IfBeginWith{\InputString}{Chapter}{%
		Chapter \ref{#1}}{%
	\IfBeginWith{\InputString}{Subsubsection}{%
		Subsubsection \ref{#1}}{%
	\IfBeginWith{\InputString}{Problem}{%
		(\ref{#1})}{%
	\IfBeginWith{\InputString}{Property}{%
		Property (\ref{#1})}{%
	\IfBeginWith{\InputString}{Algorithm}{%
		Algorithm \ref{#1}}{%
	\IfBeginWith{\InputString}{Figure}{%
		Figure (\ref{#1})}{%
	\IfBeginWith{\InputString}{Question}{%
		Question (\ref{#1})}{%
	\IfBeginWith{\InputString}{Footnote}{%
		Footnote \ref{#1}}{%
		\ref{#1}}}}}}}}}}}}%
}
\newcommand{\thref}[1]{%
	\def\InputString{#1}%
	\IfBeginWith{\InputString}{Definition}{%
		Definition \ref{#1}}{%
	\IfBeginWith{\InputString}{Theorem}{%
		Theorem \ref{#1}}{%
	\IfBeginWith{\InputString}{Proposition}{%
		Proposition \ref{#1}}{%
	\IfBeginWith{\InputString}{Lemma}{%
		Lemma \ref{#1}}{%
	\IfBeginWith{\InputString}{Corollary}{%
		Corollary \ref{#1}}{%
	\IfBeginWith{\InputString}{Remark}{%
		Remark \ref{#1}}{%
	\IfBeginWith{\InputString}{Citedresult}{%
		Cited Result \ref{#1}}{%
	\IfBeginWith{\InputString}{Example}{%
		Example \ref{#1}}{%
	\IfBeginWith{\InputString}{Conjecture}{%
		Conjecture \ref{#1}}{%
	\IfBeginWith{\InputString}{Question}{%
		Question \ref{#1}}{%
	\IfBeginWith{\InputString}{Experiment}{%
		Experiment \ref{#1}}{%
	\IfBeginWith{\InputString}{Goal}{%
		Goal \ref{#1}}{%
		\ref{#1}}}}}}}}}}}}}%
}
\definecolor{TodoRed}{RGB}{150,50,0}
\newtheorem{CounterTheorem}{}[section]
\newtheorem{Definition}[CounterTheorem]{Definition}
\newtheorem{Theorem}[CounterTheorem]{Theorem}
\newtheorem{Proposition}[CounterTheorem]{Proposition}
\newtheorem{Lemma}[CounterTheorem]{Lemma}
\newtheorem{Remark}[CounterTheorem]{Remark}
\newtheorem{AlgorithmEnvirorementForNTheorem}[CounterTheorem]{Algorithm}
\renewcommand{\Vec}[1]{\mathbf{#1}}
\newcommand{\aVec}{\Vec{a}}
\newcommand{\eVec}{\Vec{e}}
\newcommand{\hVec}{\Vec{h}}
\newcommand{\vVec}{\Vec{v}}
\newcommand{\xVec}{\Vec{x}}
\newcommand{\yVec}{\Vec{y}}
\newcommand{\zVec}{\Vec{z}}
\newcommand{\muVec}{\Vec{\mu}}
\newcommand{\Mat}[1]{\mathbf{#1}}
\newcommand{\MatGreek}[1]{\boldsymbol{#1}}
\newcommand{\AMat}{\Mat{A}}
\newcommand{\BMat}{\Mat{B}}
\newcommand{\EMat}{\Mat{E}}
\newcommand{\HMat}{\Mat{H}}
\newcommand{\NMat}{\Mat{N}}
\newcommand{\WMat}{\Mat{W}}
\newcommand{\XMat}{\Mat{X}}
\newcommand{\YMat}{\Mat{Y}}
\newcommand{\ZMat}{\Mat{Z}}
\newcommand{\SigmaMat}{\MatGreek{\Sigma}}
\newcommand{\IDMat}{\Mat{I}}
\newcommand{\TextForAll}{\hspace{2pt} \text{ for all } \hspace{2pt}}
\newcommand{\TextIf}{\hspace{2pt}\text{ if }\hspace{2pt}}
\newcommand{\TextAnd}{\hspace{2pt}\text{ and }\hspace{2pt}}
\newcommand{\ZeroSet}{\left\{0\right\}}
\newcommand{\abs}[1]{\left|#1\right|}
\newcommand{\norm}[1]{\left\|#1\right\|}
\newcommand{\scprod}[2]{\langle#1,#2\rangle}
\newcommand{\SetOf}[1]{\left[#1\right]}
\newcommand{\argmin}[1]{\underset{#1}{\textnormal{argmin}}\hspace{1pt}}
\newcommand{\Prob}[1]{\mathbb{P}\left[#1\right]}
\newcommand{\Expect}[1]{\mathbb{E}\left[#1\right]}
\newcommand{\GaussianRV}[2]{\mathcal{N}\left(#1,#2\right)}
\newcommand{\CGaussianRV}[2]{\mathcal{CN}\left(#1,#2\right)}
\def\AddBasicFunction#1#2{
	\expandafter\def\csname #1\endcsname##1{
		\def\InputString{##1}
		\def\CheckString{}
		\ifx\InputString\CheckString 
			#2
		\else
			#2 \left(##1\right)
		\fi
	}
}
\newcommand{\ExpE}{\mathrm{e}}
\newcommand{\SetSize}[1]{\#\left(#1\right)}
\title{Robustness of Covariance Estimators with Application in Activity Detection}
\date{}
\author{
	Hendrik Bernd Zarucha
	\footnote{
		Communications and Information Theory Group,
		Technische Universtität Berlin, Berlin,
		\href{mailto:petersen@tu-berlin.de}{petersen@tu-berlin.de}
	}
	\and
	Peter Jung
	\footnote{
		Communications and Information Theory Group,
		Technische Universtität Berlin, Berlin,
		and German Aerospace Center (DLR)
		\href{mailto:peter.jung@tu-berlin.de}{peter.jung@tu-berlin.de}
	}
	\and
	Giuseppe Caire
	\footnote{
		Communications and Information Theory Group,
		Technische Universtität Berlin, Berlin,
		\href{mailto:caire@tu-berlin.de}{caire@tu-berlin.de}
	}
}
\newcommand{\Trace}[1]{\textnormal{trace}\left(#1\right)}
\newcommand{\Det}[1]{\textnormal{det}\left(#1\right)}
\newcommand{\Diag}[1]{\textnormal{diag}\left(#1\right)}
\newcommand{\Real}[1]{\textnormal{real}\left(#1\right)}
\newcommand{\Imag}[1]{\textnormal{imag}\left(#1\right)}
\newcommand{\CC}[1]{\overline{#1}}
\definecolor{AddGreen}{RGB}{0,255,0}
\definecolor{RemoveRed}{RGB}{255,0,0}
\definecolor{EditYellow}{RGB}{255,255,0}
\begin{document}
	\maketitle
	\begin{abstract}
The first part of this work
considers a general class of covariance estimators.
Each estimator of that class is generated by a real-valued function $\g{}$
and a set of model covariance matrices $\mathcal{H}$.
If $\WMat$ is a potentially perturbed observation of a searched covariance matrix,
then the estimator is the minimizer of the sum of $g$ applied to each eigenvalue
of $\WMat^\frac{1}{2}\ZMat^{-1}\WMat^\frac{1}{2}$
under the constraint that $\ZMat$ is from $\mathcal{H}$.
It is shown that under mild conditions on $\g{}$ and $\mathcal{H}$
such estimators are robust, meaning the estimation error
can be made arbitrarily small if the perturbation of $\WMat$
gets small enough.
\par
In the second part of this work the previous results are applied to
activity detection in random access with multiple receive antennas.
In activity detection recovering the large scale fading coefficients
is a sparse recovery problem
which can be reduced to a structured covariance estimation problem.
The recovery can be done with a non-negative least
squares estimator or with a relaxed maximum likelihood estimator.
It is shown that under suitable assumptions on the distributions of the noise
and the channel coefficients, the relaxed maximum likelihood estimator
is from the general class of covariance estimators considered in the first
part of this work.
Then, codebooks based upon a signed kernel condition are proposed.
It is shown that with the proposed codebooks both estimators
can recover the large-scale fading coefficients
if the number of receive antennas
is high enough and $S\leq\left\lceil\frac{1}{2}M^2\right\rceil-1$
where $S$ is the number of active users and $M$ is number of pilot symbols per user.
\end{abstract}
	\section{Introduction}\label{Section:Introduction}
\noindent
This work considers a compressed sensing problem with
$K$ measurement processes of the form
\begin{align}
	\label{Equation:measurement}
		\yVec_k
	=
		\AMat\sqrt{\Diag{\xVec}}\hVec_k+\eVec_k
	\in
		\mathbb{C}^{M}
\end{align}
for $k=1,\dots,K$. Here $\xVec\in\mathbb{R}^N$ is assumed to be $S$-sparse
and non-negative,
$\hVec_k$ are mutually independent, complex normal distributed random vectors with expectation $0$ and identity $\IDMat$ as
covariance matrix, $\eVec_k$ are mutually independent, complex normal distributed random vectors
with expectation $0$ and covariance matrix $\SigmaMat$
and
$\AMat\in\mathbb{C}^{M\times N}$ has columns $\aVec_n$.
\par
Activity detection in random access with multiple receive antennas
can be modeled as such a problem.
In this case $\aVec_n$ are the pilot symbols assigned to the $n$-th user,
$\AMat$ is the codebook,
and the channel coefficient
of the $n$-th user transmitting to the $k$-th receive antenna
is given by $(\sqrt{\Diag{\xVec}}\hVec_k)_n$.
The variances of the channel coefficients are called large-scale fading coefficients
and are the entries of $\xVec$ so that the entries of $\xVec$ are non-negative.
Further, $N$ is the total number of devices, $S$ is the number of active devices,
$K$ is the number of receive antennas,
and the $k$-th receive antenna observes the signal $\yVec_k$.
Since devices only transmit sporadically and the large scale fading coefficient
vanishes if a device is inactive, $\xVec$ is indeed a sparse vector.
\par
Activity detection can be modeled in other ways, such as \cite{other_model_1,other_model_2}, and the
validity of this model has been discussed frequently.
A further discussion about the validity of the model
is thus omitted. It suffices to say that it is justified to study it,
since this work focuses on mathematical properties instead of performance,
and since this model has been studied several times,
for instance in \cite{model_1,model_2,model_3,model_4,model_5,model_6}.
Note that the measurement process \eqref{Equation:measurement} also appears in works about
unsourced random access, such as \cite{ura_mimo,fengler}. Unsourced random access differs in two aspects
from activity detection. In unsourced random access the columns $\aVec_n$ represent the
codewords of a common codebook used by all users, and $x_n$ will then be the $\ell_2$-norm
of the vector of all large-scale fading coefficients of users transmitting the $n$-th codeword so that this work's results can also be applied to
unsourced random access.
\par
If the columns of $\YMat\in\mathbb{C}^{M\times K}$ are $\yVec_k$,
the columns of $\HMat$ are $\hVec_k$, the columns of $\EMat$ are $\eVec_k$,
and $K$ is the number of receive antennas, then
\eqref{Equation:measurement} yields
\begin{align}
	\nonumber
		\frac{1}{K}\YMat\YMat^H
	=&
		\AMat\Diag{\xVec}\AMat^H+\SigmaMat
		+\AMat\sqrt{\Diag{\xVec}}\left(
			\frac{1}{K}\HMat\HMat^H-\IDMat
		\right)\sqrt{\Diag{\xVec}}\AMat^H
		+\frac{1}{K}\EMat\EMat^H-\SigmaMat
	\\&
	\label{Equation:measurement_covariance}
		+\AMat\sqrt{\Diag{\xVec}}\frac{1}{K}\HMat\EMat^H
		+\frac{1}{K}\EMat\HMat^H\sqrt{\Diag{\xVec}}\AMat^H.
\end{align}
Due to \eqref{Equation:measurement_covariance} the sample covariance matrix
$\frac{1}{K}\YMat\YMat^H$ is a perturbed observation of
the covariance matrix
\begin{align}
	\nonumber
		\Expect{\frac{1}{K}\YMat\YMat^H}
	=&
		\AMat\Diag{\xVec}\AMat^H+\SigmaMat
	=
		\sum_{n=1}^N\aVec_n\aVec_n^Hx_n+\SigmaMat
\end{align}
with the mean-zero perturbation
\begin{align}
	\nonumber
		\AMat\sqrt{\Diag{\xVec}}\left(
			\frac{1}{K}\HMat\HMat^H-\IDMat
		\right)\sqrt{\Diag{\xVec}}\AMat^H
		+\frac{1}{K}\EMat\EMat^H-\SigmaMat
		+\AMat\sqrt{\Diag{\xVec}}\frac{1}{K}\HMat\EMat^H
		+\frac{1}{K}\EMat\HMat^H\sqrt{\Diag{\xVec}}\AMat^H.
\end{align}
Thus, \eqref{Equation:measurement_covariance} describes a covariance estimation
problem where the searched covariance matrix $\sum_{n=1}^N\aVec_n\aVec_n^Hx_n+\SigmaMat$
is from the structured model
$\left\{\sum_{n=1}^N\aVec_n\aVec_n^Hz_n+\SigmaMat:\zVec\geq 0\right\}$.
Several estimators can be considered to estimate $\xVec$.
One can consider the non-negative least squares estimator which
is given as any solution of
\begin{align}
	\label{Equation:NNLR}
		\min_{\zVec\geq 0}
		\norm{\sum_{n=1}^N\aVec_n\aVec_n^Hz_n+\SigmaMat-\frac{1}{K}\YMat\YMat^H}_2
\end{align}
where the norm is the Frobenius norm.
The non-negative least squares can be considered a relaxed version
of more common $\ell_1$-regularized estimators in compressed sensing \cite{NNLS}.
It can recover sparse non-negative signals even without
$\ell_0$-regularization or $\ell_1$-regularization \cite{NNLS}.
On the other hand, one can consider the maximum likelihood estimator which
can be found by solving
\begin{align}
	\nonumber
			\min_{\zVec\geq 0 \textnormal{ is $S$-sparse}}
			\Trace{\left(\sum_{n=1}^N\aVec_n\aVec_n^Hz_n+\SigmaMat\right)^{-1}\frac{1}{K}\YMat\YMat^H}
			+\Ln{\Det{\sum_{n=1}^N\aVec_n\aVec_n^Hz_n+\SigmaMat}},
\end{align}
see for instance \cite{fengler}.
Due to the combinatorial nature of the constraints, one
often considers the relaxed maximum likelihood estimator
which is any minimizer of
\begin{align}
	\label{Equation:trace_log_det}
		\min_{\zVec\geq 0}
		\Trace{\left(\sum_{n=1}^N\aVec_n\aVec_n^Hz_n+\SigmaMat\right)^{-1}\frac{1}{K}\YMat\YMat^H}
		+\Ln{\Det{\sum_{n=1}^N\aVec_n\aVec_n^Hz_n+\SigmaMat}}.
\end{align}
Since the relaxation removes the combinatorial constraints,
finding the minimizer is significantly easier, and one can use,
for example, coordinate-wise descent methods to approximate a minimizer.
Due to the $\ell_0$-regularizer being obsolete for the non-negative least squares,
one could hope that a similar result also holds
for the relaxed maximum likelihood estimator.
In particular, robust recovery guarantees are sought after.
Very generally speaking,
robustness in this work will refer to a property that bounds the estimation
error of an estimator as a function of the magnitude of the perturbation;
however, the exact statements will be specified in theorems below.
\subsection{Prior Work}
The question arises under what conditions the estimators \eqref{Equation:NNLR}
and \eqref{Equation:trace_log_det} can recover the unknown
$\xVec$ and thus be used to estimate the active users. This has been investigated in \cite{fengler,chen} for the relaxed maximum likelihood
estimator and in \cite{fengler} for the non-negative least squares.
In \cite[Theorem~2]{fengler} it was shown that for a certain randomly drawn codebook $\AMat$ with
\begin{align}
	\label{Equation:M^2=SLn}
		M^2\asymp S\left(\Ln{\ExpE\frac{N}{S}}\right)^2
\end{align}
the linear operator $\mathcal{A}\left(\zVec\right)=\sum_{n=1}^N\aVec_n\aVec_n^Hz_n$
satisfies a restricted isometry property with a high probability
and in \cite[Theorem~2]{sparse_psd} a similar result was shown for more general
$\AMat$ for a similar number of measurements.
If the restricted isometry property is fulfilled, then
a robust recovery guarantee is given for \eqref{Equation:NNLR} that enables the recovery
of large large scale fading and determination of active users according to
\cite[Theorem~3]{fengler}.
In \cite[Theorem~2]{chen} a unique identifiability condition was established 
under which the minimizers of \eqref{Equation:trace_log_det}
converge in probability to the vector of large scale fading coefficients $\xVec$ as
$K\rightarrow\infty$.
In \cite[Theorem~5]{chen} a condition equivalent to the unique identifiability condition was established.
In \cite[Theorem~9]{chen} it was established that if a restricted
isometry property is fulfilled,  
the unique identifiability condition from \cite[Theorem~2~and~Theorem~5]{chen} is also fulfilled.
The unique identifiability condition was also investigated
in \cite{identifiability}.
The result \cite[Theorem~1]{fengler} considered the minimizer of a discretized version of
\eqref{Equation:trace_log_det} under additional knowledge about the large scale fading coefficients.
It was shown that for a certain randomly drawn codebook the discretized estimator of \eqref{Equation:trace_log_det} is
estimating the active users correctly with a high probability
if \eqref{Equation:M^2=SLn} is fulfilled.
\par
The work \cite{minimizers_geodesic_convexity} considers
a generalized version of \eqref{Equation:trace_log_det}
where the trace operator 
in \eqref{Equation:trace_log_det}
is replaced by a general geodesic function
and discusses a fixed point method to solve the generalized version
of \eqref{Equation:trace_log_det}.
However, the work \cite{minimizers_geodesic_convexity} does not answer
when the optimization problem can accurately recover the unknown vector
or when the fixed point method converges to a minimizer.
\subsection{This Work's Contribution}
In the first part of this work a general class of covariance estimators is considered.
Given some closed set of positive definite matrices
$\mathcal{H}$ of structured covariance matrices and $\g{}:\left(0,\infty\right)\rightarrow\mathbb{R}$
the covariance estimators considered in this work are minimizers of
\begin{align}
	\label{Equation:covariance_estimator}
		\min_{\ZMat\in\mathcal{H}}
		\sum_{m=1}^M\g{\lambda_m\left(\WMat^\frac{1}{2}\ZMat^{-1}\WMat^\frac{1}{2}\right)},
\end{align}
where $\lambda_m\left(\ZMat\right)$ is the $m$-th largest eigenvalue of $\ZMat$
and $\WMat$ is a Hermitian positive definite perturbed observation of a searched and unknown covariance matrix
$\XMat$. For instance, $\WMat$ could be a sample covariance matrix.
Note that, since $\WMat$ is not necessarily in $\mathcal{H}$,
the minimization problem is non-trivial.
It is shown that if certain conditions on $g$ are fulfilled, then this estimator is robust,
meaning that its minimizers are arbitrarily close to the searched and unknown covariance matrix $\XMat$ as long as the perturbed $\WMat$ is close enough to $\XMat$.
\par
The second part of this work considers deterministic codebook constructions
from \cite{NNLR} with
\begin{align}
	\label{Equation:M^2=S}
		M^2\asymp S
\end{align}
such that the linear operator $\mathcal{A}\left(\zVec\right)=\sum_{n=1}^N\aVec_n\aVec_n^Hz_n$ satisfies a signed kernel condition \cite{NNLR} instead of
a restricted isometry property. It is shown that for such constructions the minimizers of
\eqref{Equation:NNLR} and \eqref{Equation:trace_log_det}
each converge in probability to the unknown $\xVec$.
It is discussed that this improvement comes with a trade-off. The number
of receive antennas has to increase significantly if
one improves from \eqref{Equation:M^2=SLn} to \eqref{Equation:M^2=S}.
\par
Further, it is proven that, if $\mathcal{A}\left(\zVec\right)=\sum_{n=1}^N\aVec_n\aVec_n^Hz_n$ satisfies the signed kernel condition,
the optimization problem \eqref{Equation:trace_log_det}
is robust, meaning that the estimation error of the minimizer \eqref{Equation:trace_log_det}
can be controlled by making
\begin{align}
	\nonumber
		\sum_{n=1}^N\aVec_n\aVec_n^Hx_n+\SigmaMat-\frac{1}{K}\YMat\YMat
\end{align}
small enough.
This gives a direct relation between the number of receive antennas and the probability to
make the estimation error smaller than a given target.
This result is proven by applying the results of the first part of this work with
$\WMat=\frac{1}{K}\YMat\YMat^H$, $\XMat=\sum_{n=1}^N\aVec_n\aVec_n^Hx_n+\SigmaMat$,
$\mathcal{H}=\left\{\sum_{n=1}^N\aVec_n\aVec_n^Hz_n+\SigmaMat\TextForAll \zVec\geq 0\right\}$
and $\g{x}=x-\Ln{x}$.
\par
At last, it is shown that, if $\mathcal{A}\left(\zVec\right)=\sum_{n=1}^N\aVec_n\aVec_n^Hz_n$ satisfies the signed kernel condition,
cluster points of a common coordinate descent method to solve
\eqref{Equation:trace_log_det} are indeed stationary points of \eqref{Equation:trace_log_det}.
\subsection{Notation}\label{Section:notation}
Given $N\in\mathbb{N}$ set $\SetOf{N}:=\left\{1,\dots,N\right\}$.
The set of Hermitian matrices is denoted by
$\mathbb{H}^M:=\big\{\AMat\in\mathbb{C}^{M\times M}:\AMat=\AMat^H\big\}$
and the set of Hermitian positive definite matrices by
$\mathbb{HPD}^M:=\left\{\AMat\in\mathbb{H}^M:\textnormal{$\AMat$ is positive definite}\right\}$.
The $m$-th largest eigenvalue of $\AMat\in\mathbb{H}^M$ is denoted by
$\lambda_m\left(\AMat\right)$.
For any $\AMat\in\mathbb{C}^{M\times N}$ the $\ell_p$ norm of its entries is denoted by
$\norm{\AMat}_p$ so that $\norm{\AMat}_2$ is the frobenious norm.
For any $\AMat\in\mathbb{C}^{M\times N}$ the operator norm as an operator from $\ell_p$ to $\ell_q$
is denoted by $\norm{\AMat}_{p\rightarrow q}:=\sup_{\norm{\xVec}_p\leq 1}\norm{\AMat\xVec}_q$.
The space $\mathbb{H}^M$ is embedded with the topology induced by the frobenius norm $\norm{\cdot}_2$.
Note that for any $\AMat\in\mathbb{H}^M$ one has
$\norm{\AMat}_2^2=\sum_{m=1}^M\lambda_m\left(\AMat\right)^2$,
which will be used frequently.
$\mathbb{HPD}^M\subset\mathbb{H}^M$ is equipped with the subspace topology.
Note that sets are compact in $\mathbb{HPD}^M$ if and only if they are compact in
$\mathbb{H}^M$. However, sets that are closed in $\mathbb{HPD}^M$ are not necessarily closed in
$\mathbb{H}^M$, since $\mathbb{HPD}^M$ is not closed in $\mathbb{H}^M$,
and thus not a complete metric space.
The same is true for any set $\mathcal{H}\subset\mathbb{HPD}^M$
that is closed in $\mathbb{HPD}^M$ which is always embedded with the subspace topology of
$\mathbb{HPD}^M$.
The set of $S$-sparse vectors is denoted by $\Sigma_S^N:=\left\{\xVec\in\mathbb{R}^N:\textnormal{$\xVec$ has at most $S$ non-zero coordinates}\right\}$ and the set of non-negative vectors is denoted by $\mathbb{R}_+^N:=\left\{\xVec\in\mathbb{R}^N:x_n\geq 0\right\}$.
By $\xVec\sim\GaussianRV{\muVec}{\SigmaMat}$ it is denoted that $\xVec$ is a normal distributed random vector with expectation $\muVec\in\mathbb{R}^{M}$ and covariance $\SigmaMat\in\mathbb{R}^{M\times M}$.
By $\xVec\sim\CGaussianRV{\muVec}{\SigmaMat}$ it is denoted that $\xVec$ is a complex normal distributed random variable with expectation $\muVec\in\mathbb{C}^{M}$ and covariance $\SigmaMat\in\mathbb{C}^{M\times M}$.
	\section{Main Results}\label{Section:main_results}
\subsection{Robustness of Covariance Estimation}
Given some $\mathcal{H}\subset\mathbb{HPD}^M$ and
$\g{}:\left(0,\infty\right)\rightarrow\mathbb{R}$
the covariance estimators considered in this work are minimizers of
\begin{align}
	\nonumber
		\min_{\ZMat\in\mathcal{H}}
		\sum_{m=1}^M\g{\lambda_m\left(\WMat^\frac{1}{2}\ZMat^{-1}\WMat^\frac{1}{2}\right)}.
\end{align}
Here $\mathcal{H}$ is any set of potential covariance matrices fitting a structured model
and $\WMat$ is a perturbed observation of a searched and unknown
covariance matrix $\XMat$. For instance, $\WMat$
could be a sample covariance matrix.
Certainly some restrictions on $\mathcal{H}$ and $\g{}$ are required, because otherwise there might not
even be a solution to the optimization problem. In this work $\mathcal{H}$ will be a closed set in
$\mathbb{HPD}^M$ and one of two different conditions on $g$ is considered.
The following requirements on $g$ are the minimal requirements for the proof of the main result.
\begin{Definition}\label{Definition:tuple_nice}
	Let $\g{}:\left(0,\infty\right)\rightarrow\mathbb{R}$,
	$g_1:\left[\g{1},\infty\right)\rightarrow\left(0,1\right]$,
	$g_2:\left[\g{1},\infty\right)\rightarrow\left[1,\infty\right)$
	and
	$\delta_1,\delta_2:\left(0,\infty\right)\rightarrow\left(0,\infty\right)$.
	$g$ and the tuple $\left(\g{},g_1,g_2,\delta_1,\delta_2\right)$ are each called
	sufficiently nice if the following properties are fulfilled.
	\begin{enumerate}
		\item
			\label{Property:Definition:tuple_nice:grow}
			$\g{}$ is sufficiently growing, namely
			$\lim_{x\rightarrow 0}\g{x}=\infty=\lim_{x\rightarrow\infty}\g{x}$.
		\item
			\label{Property:Definition:tuple_nice:cont}
			$\g{}$ is continuous everywhere.
		\item
			\label{Property:Definition:tuple_nice:cont_around_1}
			$\g{}$ is continuous around $1$ with
			$\abs{x-1}\leq\delta_1\left(\epsilon\right)\Rightarrow\abs{\g{x}-\g{1}}\leq\epsilon$
			for all $\epsilon>0$.
		\item
			\label{Property:Definition:tuple_nice:minimizer}
			The minimizer is sufficiently explicit, namely
			$\g{x}-\g{1}\leq\delta_2\left(\epsilon\right)
				\Rightarrow\abs{x-1}\leq\epsilon$
			for all $\epsilon>0$.
		\item
			\label{Property:Definition:tuple_nice:inver}
			The almost inverse functions exist and are defined by
			$g_1\left(y\right)=\inf_{z\in\left(0,1\right]:\g{z}\leq y}z$
			and \\
			$g_2\left(y\right)=\sup_{z\in\left[1,\infty\right):\g{z}\leq y}z$
			for all $y\in\left[\g{1},\infty\right)$.
	\end{enumerate}
\end{Definition}
Note that \refP{Property:Definition:tuple_nice:minimizer}
implies that $\g{x}\geq\g{1}$ for all $x\in\left(0,1\right)$
and that $1$ is the unique global minimizer of $\g{}$.
This further guarantees that the minimizer for
$\mathcal{H}=\mathbb{HPD}^M$ is always $\WMat$ since the identity is the only matrix
with only $1$ as eigenvalue. In this case the robustness is trivial.
Whenever $\mathcal{H}\neq\mathbb{HPD}^M$, solving the optimization problem
and proving robustness is not
trivial however. Due to \refP{Property:Definition:tuple_nice:grow} the almost inverse functions
$g_1,g_2$ satisfying \refP{Property:Definition:tuple_nice:inver} always exist and are well defined.
Further, \refP{Property:Definition:tuple_nice:grow} and \refP{Property:Definition:tuple_nice:cont}
yield that they are strictly monotonic.
The following stricter condition on $g$ can be used to improve the robustness.
\begin{Definition}\label{Definition:tuple_convex}
	Let $\g{}:\left(0,\infty\right)\rightarrow\mathbb{R}$,
	$g_1:\left[\g{1},\infty\right)\rightarrow\left(0,1\right]$,
	$g_2:\left[\g{1},\infty\right)\rightarrow\left[1,\infty\right)$,
	$\nu>0$ and $\epsilon_0\in\left(0,1\right)$.
	$g$ and the tuple $\left(\g{},g_1,g_2,\nu,\epsilon_0\right)$ are each called
	sufficiently convex if the following properties are fulfilled.
	\begin{enumerate}
		\item
			\label{Property:Definition:tuple_convex:grow}
			$\g{}$ is sufficiently growing, namely
			$\lim_{x\rightarrow 0}\g{x}=\infty=\lim_{x\rightarrow\infty}\g{x}$.
		\item
			\label{Property:Definition:tuple_convex:cont}
			$\g{}$ is continuous everywhere.
		\item
			\label{Property:Definition:tuple_convex:monotonic_g1}
			$\g{}$ is strictly monotonically falling in $\left(0,1\right]$ with inverse function
			$g_1$.
		\item
			\label{Property:Definition:tuple_convex:monotonic_g2}
			$\g{}$ is strictly monotonically increasing in $\left[1,\infty\right)$ with inverse function
			$g_2$.
		\item
			\label{Property:Definition:tuple_convex:inver}
			$\g{1+\epsilon}\leq\g{1-\epsilon}$ for all $\epsilon\in\left(0,1\right)$.
		\item
			\label{Property:Definition:tuple_convex:convex}
			$\g{}$ is convex on $\left[1,\infty\right)$.
		\item
			\label{Property:Definition:tuple_convex:composition_derivative}
			$\g{}$ is differentiable everywhere with
			$g'\left(x\right)\neq 0$ for all $x\neq 1$ and
			$-\frac{g'\left(1+\epsilon\right)}{g'\left(1-\epsilon\right)}\geq \nu$
			for all $\epsilon\in\left(0,\epsilon_0\right]$.
	\end{enumerate}
\end{Definition}
It is later proven that $g$ being sufficiently convex is indeed a strictly stronger condition
than $g$ being sufficiently nice.
Under any of these conditions, the corresponding covariance estimators are robust.
\begin{Theorem}\label{Theorem:covariance_robust}
	Let the tuple $g$ be sufficiently nice or sufficiently convex,
	$\mathcal{H}\subset\mathbb{HPD}^M$ be closed in $\mathbb{HPD}^M$ and $\XMat\in\mathcal{H}$.
	Then, there exists a function $\delta:\left(0,\infty\right)\rightarrow\left(0,\infty\right)$ such that
	the following holds true:
	For every $\epsilon>0$ and $\WMat\in\mathbb{HPD}^M$ with
	$\norm{\WMat-\XMat}_{2\rightarrow 2}\leq\delta\left(\epsilon\right)$,
	any minimizer $\ZMat$ of
	\begin{align}
		\label{Equation:Theorem:covariance_robust:optimizer}
		&
			\min_{\ZMat\in\mathcal{H}}
			\sum_{m=1}^M\g{\lambda_m\left(\WMat^\frac{1}{2}\ZMat^{-1}\WMat^\frac{1}{2}\right)}
	\end{align}
	obeys $\norm{\XMat-\ZMat}_{2\rightarrow 2}\leq \epsilon$.
	If $g$ is sufficiently convex, then $\delta$ can be chosen such that it is
	linear in a neighborhood around $0$.
\end{Theorem}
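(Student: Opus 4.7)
The plan is to exploit that the objective $F(\ZMat):=\sum_{m=1}^{M}\g{\lambda_m(\WMat^{\frac{1}{2}}\ZMat^{-1}\WMat^{\frac{1}{2}})}$ is bounded below by $M\g{1}$, since $1$ is the unique global minimizer of $\g{}$, and that $F(\XMat)$ approaches this floor as $\WMat\to\XMat$. First I would confirm that a minimizer of $F$ on $\mathcal{H}$ exists: the growth of $\g{}$ at $0$ and $\infty$ combined with continuity of the eigenvalue map makes $F$ coercive on $\mathbb{HPD}^M$, so every sublevel set is relatively compact in $\mathbb{HPD}^M$; intersecting with the closed set $\mathcal{H}$ and invoking continuity of $F$ produces a minimizer.

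For the sufficiently nice case I proceed in four steps. First, the eigenvalues of $\WMat^{\frac{1}{2}}\XMat^{-1}\WMat^{\frac{1}{2}}$ coincide with those of $\XMat^{-1}\WMat$ and tend uniformly to $1$ with $\norm{\WMat-\XMat}_{2\rightarrow 2}\to 0$ at a rate controlled by $\norm{\XMat^{-1}}_{2\rightarrow 2}$. Second, by continuity of $\g{}$ at $1$ (Property 3 of \thref{Definition:tuple_nice}), this bounds $F(\XMat)-M\g{1}$ by any prescribed $M\epsilon'$ once $\norm{\WMat-\XMat}_{2\rightarrow 2}$ is small enough, the quantification coming from $\delta_1$. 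Third, since any minimizer $\ZMat$ satisfies $F(\ZMat)\le F(\XMat)$ and every summand $\g{\mu_m}-\g{1}$ with $\mu_m:=\lambda_m(\WMat^{\frac{1}{2}}\ZMat^{-1}\WMat^{\frac{1}{2}})$ is non-negative, each summand is at most $M\epsilon'$; inverting through $\delta_2$ (Property 4) forces $|\mu_m-1|\le\eta$ for every $m$, hence $\norm{\WMat^{\frac{1}{2}}\ZMat^{-1}\WMat^{\frac{1}{2}}-\IDMat}_{2\rightarrow 2}\le\eta$. Fourth, the identity $\ZMat^{-1}-\WMat^{-1}=\WMat^{-\frac{1}{2}}(\WMat^{\frac{1}{2}}\ZMat^{-1}\WMat^{\frac{1}{2}}-\IDMat)\WMat^{-\frac{1}{2}}$ combined with $\ZMat-\WMat=-\ZMat(\ZMat^{-1}-\WMat^{-1})\WMat$ and the uniform bounds on $\norm{\WMat}_{2\rightarrow 2}$, $\norm{\WMat^{-1}}_{2\rightarrow 2}$ and $\norm{\ZMat}_{2\rightarrow 2}$ furnished by the preceding eigenvalue control gives $\norm{\ZMat-\XMat}_{2\rightarrow 2}\le C\eta+\norm{\WMat-\XMat}_{2\rightarrow 2}$. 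Choosing $\eta$ small enough in terms of $\epsilon$ then defines $\delta$.

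For the sufficiently convex case, the plan to upgrade to a linear $\delta$ is to combine a quadratic upper bound $F(\XMat)-M\g{1}\le C_1\norm{\WMat-\XMat}_{2\rightarrow 2}^2$ with a local quadratic lower bound $F(\ZMat)-M\g{1}\ge C_2\norm{\ZMat-\WMat}_{2\rightarrow 2}^2$ valid in a neighbourhood of $\WMat$. The upper bound is obtained by a term-by-term expansion of each $\g{\lambda_m(\XMat^{-1}\WMat)}$ around $1$, using the fact that $\WMat$ is the unconstrained global minimizer of $F$ (so first-order contributions vanish) together with the quantitative growth of $\g{}$ near $1$ encoded by convexity on $[1,\infty)$ and Properties 5 and 7 of \thref{Definition:tuple_convex}. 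The lower bound uses the same quantitative growth in the reverse direction. The nice-case argument already places $\ZMat$ into the required neighbourhood once $\norm{\WMat-\XMat}_{2\rightarrow 2}$ is small, after which $F(\ZMat)\le F(\XMat)$ combined with the two estimates yields $\norm{\ZMat-\WMat}_{2\rightarrow 2}\le\sqrt{C_1/C_2}\norm{\WMat-\XMat}_{2\rightarrow 2}$, and the triangle inequality produces a linear $\delta$.

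The main obstacle will be establishing the local quadratic lower bound $F(\ZMat)-M\g{1}\ge C_2\norm{\ZMat-\WMat}_{2\rightarrow 2}^2$. Transferring the pointwise growth behaviour of $\g{}$ near $1$ into a matrix-valued bound uniform in $\WMat$ close to $\XMat$ is delicate, and in particular the regime $\mu_m<1$, where $\g{}$ is not assumed convex, has to be handled through the comparison $\g{1+\epsilon}\le\g{1-\epsilon}$ from Property 5 together with the derivative-ratio bound of Property 7; these two ingredients are the quantitative content of sufficient convexity and the place where the proof is likely to be most technical.
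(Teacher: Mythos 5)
Your argument for the sufficiently nice case is essentially the paper's own proof: bound $\f{\XMat,\WMat}-M\g{1}$ via the continuity modulus $\delta_1$, pass the bound to each eigenvalue of $\WMat^{\frac{1}{2}}\ZMat^{-1}\WMat^{\frac{1}{2}}$ through $\delta_2$, and convert back to an operator-norm bound on $\ZMat-\WMat$ using the eigenvalue control on $\ZMat$ and $\WMat$ coming from the level sets. That part is sound.

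The sufficiently convex case, however, has a genuine gap. Your linear $\delta$ rests on a quadratic upper bound $\f{\XMat,\WMat}-M\g{1}\leq C_1\norm{\WMat-\XMat}_{2\rightarrow 2}^2$ and a matching quadratic lower bound, but neither follows from \thref{Definition:tuple_convex}: $\g{}$ is only assumed once differentiable, and first-order vanishing at the minimizer gives $o\left(\norm{\WMat-\XMat}_{2\rightarrow 2}\right)$, not $O\left(\norm{\WMat-\XMat}_{2\rightarrow 2}^2\right)$; convexity on $\left[1,\infty\right)$ imposes no upper growth rate either. Concretely, a $\g{}$ that behaves like $\abs{x-1}^{\frac{3}{2}}$ near $1$ (suitably extended to satisfy the growth, monotonicity and derivative-ratio conditions, with $\nu=1$) is sufficiently convex, yet violates your quadratic upper bound; running your chain for such a $\g{}$ yields an error of order $\norm{\WMat-\XMat}_{2\rightarrow 2}^{3/4}$, which does not give a $\delta$ that is linear near $0$, even though the theorem asserts one exists. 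The paper's mechanism avoids fixing any exponent: it shows (Lemma 3.6) that for $\delta_1\left(\epsilon\right)=1-g_1\left(\g{1}+\epsilon\right)$ and $\delta_2\left(\epsilon\right)=\g{1+\epsilon}-\g{1}$ one has $\delta_1\left(M^{-1}\delta_2\left(\epsilon\right)\right)\geq\nu\epsilon M^{-1}$, using convexity on $\left[1,\infty\right)$ to pull the factor $M^{-1}$ inside $\g{}$, the comparison $\g{1+\epsilon}\leq\g{1-\epsilon}$, and the derivative-ratio bound to compare the two branches — i.e., the modulus of continuity is matched against the growth of $\g{}$ itself rather than against a prescribed quadratic. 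To repair your proof you would need to replace both quadratic bounds by this composition argument (or restrict to $\g{}$ with two-sided quadratic behavior near $1$, which is a strictly smaller class than the one in the theorem); you would also need to note explicitly that a sufficiently convex tuple admits valid $\delta_1,\delta_2$, so that the nice-case argument applies to it at all.
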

\begin{proof}
	This is just a shortened version of \thref{Theorem:tuple_nice} and \thref{Theorem:tuple_convex}.
\end{proof}
The robustness of \thref{Theorem:covariance_robust} yields that any such covariance estimator indeed
recovers matrices correctly if $\WMat$ is not perturbed.
If $\WMat=\XMat$, then $\norm{\WMat-\XMat}_{2\rightarrow 2}\leq\delta\left(\epsilon\right)$ and thus $\norm{\XMat-\ZMat}_{2\rightarrow 2}\leq\epsilon$ for all $\epsilon>0$.
It follows that $\ZMat=\XMat$.
Moreover, the robustness gives control over the estimation error.
If the estimation error $\norm{\XMat-\ZMat}_{2\rightarrow 2}$ is supposed to be
small, then one just needs to control the magnitude of the perturbation $\norm{\WMat-\XMat}_{2\rightarrow 2}$.
This control implies that any function that maps $\WMat$ to any
minimizer of \eqref{Equation:Theorem:covariance_robust:optimizer}
is continuous in all $\XMat\in\mathcal{H}$.
\thref{Theorem:covariance_robust} will be used with $g(x)=x-\Ln{x}$, $\WMat=\frac{1}{K}\YMat\YMat^H$, $\XMat=\sum_{n=1}^N\aVec_n\aVec_n^Hx_n+\SigmaMat$
and $\mathcal{H}=\left\{\sum_{n=1}^N\aVec_n\aVec_n^Hz_n+\SigmaMat\TextForAll \zVec\geq 0\right\}$
to get a result for the relaxed maximum likelihood estimator in \thref{Theorem:ad_errors}.
However, this will only be a part of the proof.
	\subsection{Application to Activity Detection}
In this work codebooks that generate linear operators $\mathcal{A}\left(\zVec\right)=\sum_{n=1}^N\aVec_n\aVec_n^Hz_n$
with a signed kernel condition are considered.
The signed kernel condition was introduced in \cite{NNLR}.
\begin{Definition}
	\label{Definition:skc}
	Let $\mathcal{A}:\mathbb{C}^N\rightarrow\mathbb{C}^{M\times M}$ be a linear operator and $S\in\mathbb{N}$.
	$\mathcal{A}$ is said to have the signed kernel condition of order $S$
	if
	\begin{align}
		\nonumber
			\SetSize{\left\{n\in \SetOf{N}:\vVec_n<0\right\}}
		>
			S
		\TextForAll
			\vVec\in\Kernel{\mathcal{A}}\cap\mathbb{R}^N\setminus\ZeroSet
	\end{align}
	holds true.
\end{Definition}
The signed kernel condition is an equivalent condition for robust
recovery with the non-negative least squares
\cite[Theorem~3.2, Theorem~2.2, Proposition~2.8]{NNLR}.
By \cite[Proposition~3.11]{NNLR} codebooks $\AMat$
such that the linear operator $\mathcal{A}\left(\zVec\right)=\sum_{n=1}^N\aVec_n\aVec_n^Hz_n$ satisfies the signed kernel condition exist
whenever $S\asymp M^2$. In particular, one has:
\begin{Theorem}
	\label{Theorem:sample_rate}
	There exists $\AMat\in\mathbb{C}^{M\times N}$
	with columns $\aVec_n$ for all $n\in\SetOf{N}$ such that
	the linear operator defined by
	$\mathcal{A}\left(\zVec\right):=\sum_{n=1}^N\aVec_n\aVec_n^Hz_n$
	has signed kernel condition of order $S$
	for all $S\leq\left\lceil\frac{1}{2}M^2\right\rceil-1$.
\end{Theorem}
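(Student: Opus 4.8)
The plan is to exhibit an explicit construction and reduce the signed kernel condition to a statement about the real kernel of the operator $\mathcal{A}$. First I would observe that the operator $\mathcal{A}(\zVec) = \sum_{n=1}^N \aVec_n\aVec_n^H z_n$ maps into the real vector space $\mathbb{H}^M$, which has real dimension $M^2$. The map $n \mapsto \aVec_n\aVec_n^H$ produces $N$ rank-one Hermitian matrices, and $\zVec \in \Kernel{\mathcal{A}} \cap \mathbb{R}^N$ precisely when $\sum_n \aVec_n\aVec_n^H z_n = 0$ as a real linear relation among these $N$ elements of an $M^2$-dimensional space. The signed kernel condition of order $S$ demands that every nonzero such $\zVec$ have strictly more than $S$ negative entries; by applying this to $-\zVec$ as well, it equivalently demands more than $S$ positive entries, so each nonzero kernel vector has at least $S+1$ entries of each sign and hence at least $2S+2$ nonzero entries. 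The target $S \le \lceil \tfrac12 M^2\rceil - 1$ is exactly the condition $2S+2 \le 2\lceil\tfrac12 M^2\rceil \le M^2 + 1$, i.e. $2S+1 \le M^2$, so the combinatorial arithmetic lines up with a ``generic position'' statement: any $M^2$ of the rank-one matrices $\aVec_n\aVec_n^H$ should be linearly independent over $\mathbb{R}$.

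The key step is therefore to choose the $\aVec_n$ so that every subset of size $M^2$ of $\{\aVec_n\aVec_n^H\}$ is $\mathbb{R}$-linearly independent in $\mathbb{H}^M$. If this holds, then any nonzero $\zVec \in \Kernel{\mathcal{A}} \cap \mathbb{R}^N$ must have support of size at least $M^2+1 \ge 2S+2$, giving at least $S+1$ entries of each sign (if it had $S$ or fewer negative entries, the complementary $\ge S+2$ support entries would all be positive, but one still needs the two-sided count — so I would argue: a kernel vector with support size $\ge 2S+2$ cannot have both $\le S$ negatives and $\le S$ positives, hence has $> S$ of one sign; combined with the sign-symmetry of the kernel this forces $> S$ negatives for the vector or its negation, and since the condition must hold for all kernel vectors, applying it to both $\zVec$ and $-\zVec$ closes the argument). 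To get the genericity, I would take $\aVec_n$ to be points on the moment curve or a Vandermonde-type construction: e.g. $\aVec_n = (1, t_n, t_n^2, \dots, t_n^{M-1})^T$ for distinct reals $t_n$, or a complex analogue. The map $t \mapsto \aVec(t)\aVec(t)^H$ then has polynomial entries in $t$, and a linear dependence among $M^2$ of them would be a nonzero polynomial identity of bounded degree vanishing at the chosen $t_n$; choosing the $t_n$ to avoid the finitely many bad configurations (or invoking that the bad set is a proper algebraic subvariety, hence measure zero) yields the claim.

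I expect the main obstacle to be verifying that the rank-one matrices coming from the moment curve are in general position in $\mathbb{H}^M$ — that is, bounding the degree of the relevant polynomials and confirming that a generic choice of $N$ parameters avoids all $\binom{N}{M^2}$ bad subsets simultaneously. This is where I would most likely just cite \cite[Proposition~3.11]{NNLR}, which the excerpt already invokes for exactly this existence statement with $S \asymp M^2$; the present theorem is the sharp quantitative form $S \le \lceil\tfrac12 M^2\rceil - 1$, so the proof should amount to extracting the explicit constant from that construction. A secondary, purely bookkeeping point is to handle the ceiling carefully: for $M^2$ even, $2\lceil\tfrac12 M^2\rceil = M^2$ and the bound reads $S \le \tfrac12 M^2 - 1$; for $M^2$ odd, $2\lceil\tfrac12 M^2\rceil = M^2+1$ and the bound reads $S \le \tfrac12(M^2+1) - 1 = \tfrac12(M^2-1)$. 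In both cases $2S+2 \le M^2+1$, so a kernel vector of support $\ge M^2+1$ has at least $\lceil\tfrac12(M^2+1)\rceil \ge S+1$ entries of its majority sign, and the sign-symmetry of the real kernel finishes it. I would present the construction first, then the linear-algebra independence lemma, then the short counting argument, deferring the genericity estimate to the cited reference.
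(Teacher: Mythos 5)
You ultimately defer to \cite[Proposition~3.11]{NNLR}, which is exactly what the paper does, so the citation is in order. However, the self-contained sketch you give before that deferral has a genuine gap, and it is worth flagging because the gap is exactly the place where the cited reference does nontrivial work.

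Your reduction to ``any $M^2$ of the matrices $\aVec_n\aVec_n^H$ are $\mathbb{R}$-linearly independent'' only yields that every nonzero $\vVec\in\Kernel{\mathcal{A}}\cap\mathbb{R}^N$ has support of size at least $M^2+1$. The signed kernel condition of order $S$ requires, for \emph{every} such $\vVec$, at least $S+1$ negative entries; by applying the condition to $-\vVec$ as well, it equivalently requires at least $S+1$ entries of \emph{each} sign. Your argument establishes only that $\vVec$ has at least $S+1$ entries of its majority sign. The step ``combined with the sign-symmetry of the kernel this forces $>S$ negatives for the vector or its negation, and since the condition must hold for all kernel vectors, applying it to both $\zVec$ and $-\zVec$ closes the argument'' is circular: you are invoking the signed kernel condition (``the condition must hold for all kernel vectors'') to conclude the signed kernel condition. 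A kernel vector of support $M^2+1$ with a single negative entry and $M^2$ positive ones is not excluded by generic position, yet would violate the signed kernel condition for any $S\geq 1$. The rank-one PSD structure forbids an all-one-sign kernel vector, but it does not by itself enforce the two-sided count $\geq S+1$ up to $S=\lceil M^2/2\rceil-1$.

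The actual mechanism in \cite[Proposition~3.11]{NNLR} relies on the specific Vandermonde-type exponential construction \eqref{Equation:Theorem:sample_rate:defA} (note the incommensurate, prime-dependent frequencies) to control the \emph{sign pattern}, not merely the support size, of kernel vectors — this is a Descartes-rule / Chebyshev-system type argument rather than a pure general-position argument. Your moment-curve intuition is therefore pointing in the right qualitative direction, but ``genericity'' is the wrong abstraction: what is needed is a total-positivity-like property of the curve $n\mapsto\aVec_n\aVec_n^H$ that bounds sign changes, and that does not follow from linear independence of $M^2$-subsets alone. If you keep the sketch, the counting paragraph should be replaced by an appeal to that sign-change bound; otherwise simply cite the reference without the intermediate (and incorrect) generic-position reduction, as the paper does.
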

\begin{proof}
	The proof is given in \cite[Proposition~3.11]{NNLR}
	and the entries of $\AMat$ can be chosen as
	\begin{align}
		\label{Equation:Theorem:sample_rate:defA}
			a_{m,n}
		=
			m^{-\frac{1}{2}}
			\Exp{
				i\sqrt{\frac{\pi_m}{\pi_{M+1}}}
				\frac{\pi}{N+N'+1-M^2}
				\left(n-1+N'\right)
			}
	\end{align}
	for all $m\in\SetOf{M},n\in\SetOf{N}$
	where $N':=\max\left\{M^2-N,0\right\}$
	and $\pi_m$ is the $m$-th prime number.
\end{proof}

The matrix $\AMat$ and the linear operator $\mathcal{A}$ are independent of $S$
and one can always choose $S=\left\lceil\frac{1}{2}M^2\right\rceil-1$.
Thus, at this point the parameter $S$ is superfluous in the theorem.
However, the robustness constant introduced in \thref{Definition:robustness_constant}
will be relevant for the later proofs and it depends on $S$ and can improve for smaller choices of $S$.
\par
It should be noted that the construction
\eqref{Equation:Theorem:sample_rate:defA} is not suitable for implementation.
In fact, the construction \eqref{Equation:Theorem:sample_rate:defA}
struggles in simulations since the robustness constant introduced in
\thref{Definition:robustness_constant} of the linear operator
$\mathcal{A}\left(\zVec\right):=\sum_{n=1}^N\aVec_n\aVec_n^Hz_n$ is nearly zero
almost violating \eqref{Equation:Theorem:skc_properties:nnlr_robustness}.
For an implementation one would need to construct other codebooks
with better robustness constants. This is done exemplary once in 
\refP{Section:Simulations}.
\par
Further, it should be noted that the construction
\eqref{Equation:Theorem:sample_rate:defA}
used in the proof of \eqref{Theorem:sample_rate}
is only of theoretical value since it proves
that matrices with a signed kernel condition of the claimed order exist.
By \cite[Remark~3.14]{NNLR} this construction is optimal
in the sense that no other construction can have a higher order
of the signed kernel condition.
However, it should be understood that all further results of this work hold
for all codebooks $\AMat$ such that
$\mathcal{A}\left(\zVec\right):=\sum_{n=1}^N\aVec_n\aVec_n^Hz_n$
has the signed kernel condition of order $S$
and are not constrained to the construction in \eqref{Equation:Theorem:sample_rate:defA}.
More details on the signed kernel condition and the robustness constant
can be found in \refP{Subsection:proof_signed_kernel_condition}
or in \cite{NNLR}.
\par
Given a codebook such that the
$\mathcal{A}\left(\zVec\right):=\sum_{n=1}^N\aVec_n\aVec_n^Hz_n$
has the signed kernel condition of order $S$
the large scale fading coefficients can be estimated
by the non-negative least squares estimator
or the relaxed maximum likelihood estimator arbitrarily good
if the number of receive antennas is sufficiently large.
\begin{Theorem}\label{Theorem:ad_errors}
	Let the matrix $\AMat\in\mathbb{C}^{M\times N}$
	with columns $\aVec_n\in\mathbb{C}^M$ for all $n\in\SetOf{N}$
	be such that the linear operator defined by
	$\mathcal{A}\left(\zVec\right):=\sum_{n=1}^N\aVec_n\aVec_n^Hz_n$
	has signed kernel condition of order
	$S\leq\left\lceil\frac{1}{2}M^2\right\rceil-1$.
	Then, for all $\SigmaMat\in\mathbb{HPD}^M$, $\xVec\in\Sigma_S^N\cap\mathbb{R}^N_+$,
	$\epsilon>0$ and $p\in\left(0,1\right)$
	there exists a sufficiently large $K_0>0$ such that for all $K\geq K_0$ the following holds true:
	Let the columns of
	$\HMat\in\mathbb{C}^{N\times K}$ be $\hVec_k\sim\CGaussianRV{0}{\IDMat}$ and mutually independent,
	the columns of
	$\EMat\in\mathbb{C}^{M\times K}$ be $\eVec_k\sim\CGaussianRV{0}{\SigmaMat}$ and mutually independent
	and $\YMat=\AMat\sqrt{\Diag{\xVec}}\HMat+\EMat$.
	Then, any minimizer $\zVec$ of
	\begin{align}
		&\label{Equation:Theorem:ad_errors:nnlr_optimization}
			\min_{\zVec\in\mathbb{R}_{\geq 0}^N}
			\norm{\sum_{n=1}^N\aVec_n\aVec_n^Hz_n+\SigmaMat-\frac{1}{K}\YMat\YMat^H}_2
	\end{align}
	or
	\begin{align}
		&\label{Equation:Theorem:ad_errors:ml_optimization}
			\min_{\zVec\in\mathbb{R}_{\geq 0}^N}
			\Trace{\left(\sum_{n=1}^N\aVec_n\aVec_n^Hz_n+\SigmaMat\right)^{-1}\frac{1}{K}\YMat\YMat^H}
			+\Ln{\Det{\sum_{n=1}^N\aVec_n\aVec_n^Hz_n+\SigmaMat}}
	\end{align}
	obeys
	\begin{align}
		\nonumber
			\norm{\xVec-\zVec}_2
		\leq
			\epsilon
	\end{align}
	with probability of at least $p$.
\end{Theorem}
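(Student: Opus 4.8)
The plan is to realize both estimators as instances of recovering $\xVec$ from a perturbed observation $\WMat=\frac{1}{K}\YMat\YMat^H$ of the structured covariance matrix $\XMat=\sum_{n=1}^{N}\aVec_n\aVec_n^Hx_n+\SigmaMat$, and then to drive the perturbation to $0$ by enlarging $K$. Two preparatory facts are needed. First, since $\hVec_k\sim\CGaussianRV{0}{\IDMat}$ and $\eVec_k\sim\CGaussianRV{0}{\SigmaMat}$ are independent, each column $\yVec_k=\AMat\sqrt{\Diag{\xVec}}\hVec_k+\eVec_k$ is $\CGaussianRV{0}{\XMat}$ with $\XMat=\AMat\Diag{\xVec}\AMat^H+\SigmaMat=\Expect{\frac{1}{K}\YMat\YMat^H}$, and the $\yVec_k$ are i.i.d.; hence $\WMat=\frac{1}{K}\sum_k\yVec_k\yVec_k^H\to\XMat$ almost surely, so for every $t>0$ and every $p\in\left(0,1\right)$ there is a $K_0\geq M$ with $\Prob{\WMat\in\mathbb{HPD}^M\ \text{and}\ \norm{\WMat-\XMat}_{2\rightarrow 2}\leq t}\geq p$ for all $K\geq K_0$ (using also $\Prob{\WMat\notin\mathbb{HPD}^M}=0$ once $K\geq M$). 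Second, the signed kernel condition of order $S$, together with the symmetry of $\Kernel{\mathcal{A}}$ under negation and a compactness argument over the finitely many admissible sign patterns, yields a constant $\mathcal{C}>0$ (closely related to the robustness constant of \thref{Definition:robustness_constant}) with $\norm{\vVec}_2\leq\mathcal{C}\norm{\mathcal{A}\left(\vVec\right)}_2$ for every $\vVec\in\mathbb{R}^N$ having at most $S$ positive entries; moreover $\Kernel{\mathcal{A}}\cap\mathbb{R}_+^N=\ZeroSet$, which (again by a compactness argument bounding $\mathcal{A}$ below on the closed cone $\mathbb{R}_+^N$) shows that $\mathcal{H}:=\left\{\sum_{n=1}^N\aVec_n\aVec_n^Hz_n+\SigmaMat:\zVec\geq 0\right\}$ is closed in $\mathbb{HPD}^M$, while $\XMat\in\mathcal{H}$ is clear. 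In particular, if $\xVec\in\Sigma_S^N\cap\mathbb{R}_+^N$ and $\zVec\in\mathbb{R}_+^N$ then $\xVec-\zVec$ has at most $S$ positive entries, so $\norm{\xVec-\zVec}_2\leq\mathcal{C}\norm{\mathcal{A}\left(\xVec-\zVec\right)}_2$.

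For a minimizer $\zVec$ of the non-negative least squares problem \eqref{Equation:Theorem:ad_errors:nnlr_optimization} the argument is elementary: evaluating the objective at the feasible point $\xVec$ gives $\norm{\sum_n\aVec_n\aVec_n^Hz_n+\SigmaMat-\WMat}_2\leq\norm{\XMat-\WMat}_2$, and subtracting $\mathcal{A}\left(\xVec\right)+\SigmaMat-\WMat$ from $\mathcal{A}\left(\zVec\right)+\SigmaMat-\WMat$ gives, by the triangle inequality, $\norm{\mathcal{A}\left(\zVec-\xVec\right)}_2\leq 2\norm{\XMat-\WMat}_2$. Combining this with the inequality $\norm{\xVec-\zVec}_2\leq\mathcal{C}\norm{\mathcal{A}\left(\xVec-\zVec\right)}_2$ and with $\norm{\cdot}_2\leq\sqrt{M}\norm{\cdot}_{2\rightarrow 2}$ on $\mathbb{C}^{M\times M}$ yields $\norm{\xVec-\zVec}_2\leq 2\mathcal{C}\sqrt{M}\,\norm{\WMat-\XMat}_{2\rightarrow 2}$.

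For a minimizer $\zVec$ of the relaxed maximum likelihood problem \eqref{Equation:Theorem:ad_errors:ml_optimization} I would first check that $\g{x}=x-\Ln{x}$ is sufficiently convex in the sense of \thref{Definition:tuple_convex}: it tends to $\infty$ as $x\to 0$ and as $x\to\infty$, is smooth, strictly decreasing on $\left(0,1\right]$, strictly increasing on $\left[1,\infty\right)$, convex because $g''\left(x\right)=x^{-2}>0$, satisfies $\g{1+\epsilon}\leq\g{1-\epsilon}$ for $\epsilon\in\left(0,1\right)$ since $\Ln{\frac{1+\epsilon}{1-\epsilon}}\geq 2\epsilon$, and has $-g'\left(1+\epsilon\right)/g'\left(1-\epsilon\right)=\frac{1-\epsilon}{1+\epsilon}\geq\frac{1-\epsilon_0}{1+\epsilon_0}=:\nu>0$ for $\epsilon\in\left(0,\epsilon_0\right]$ with any fixed $\epsilon_0\in\left(0,1\right)$. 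Next, on the event $\WMat\in\mathbb{HPD}^M$ and for every $\ZMat\in\mathcal{H}$ one has $\sum_{m=1}^M\g{\lambda_m\left(\WMat^{\frac{1}{2}}\ZMat^{-1}\WMat^{\frac{1}{2}}\right)}=\Trace{\ZMat^{-1}\WMat}+\Ln{\Det{\ZMat}}-\Ln{\Det{\WMat}}$, because $\sum_m\lambda_m\left(\WMat^{\frac{1}{2}}\ZMat^{-1}\WMat^{\frac{1}{2}}\right)=\Trace{\ZMat^{-1}\WMat}$ and $\sum_m\Ln{\lambda_m\left(\WMat^{\frac{1}{2}}\ZMat^{-1}\WMat^{\frac{1}{2}}\right)}=\Ln{\Det{\WMat}}-\Ln{\Det{\ZMat}}$. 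Since $\Ln{\Det{\WMat}}$ does not depend on $\ZMat$, any minimizer $\zVec$ of \eqref{Equation:Theorem:ad_errors:ml_optimization} makes $\ZMat:=\sum_n\aVec_n\aVec_n^Hz_n+\SigmaMat$ a minimizer of \eqref{Equation:Theorem:covariance_robust:optimizer} over $\mathcal{H}$ (the possible non-injectivity of $\zVec\mapsto\sum_n\aVec_n\aVec_n^Hz_n+\SigmaMat$ is harmless, as only the minimality inequality is transferred). Applying \thref{Theorem:covariance_robust} with this $g$, this $\mathcal{H}$, $\XMat$ and $\WMat$ then provides a function $\delta$, linear in a neighbourhood of $0$, such that $\norm{\WMat-\XMat}_{2\rightarrow 2}\leq\delta\left(\epsilon'\right)$ forces $\norm{\XMat-\ZMat}_{2\rightarrow 2}\leq\epsilon'$, i.e.\ $\norm{\mathcal{A}\left(\xVec-\zVec\right)}_{2\rightarrow 2}\leq\epsilon'$; combining with $\norm{\xVec-\zVec}_2\leq\mathcal{C}\norm{\mathcal{A}\left(\xVec-\zVec\right)}_2$ and $\norm{\cdot}_2\leq\sqrt{M}\norm{\cdot}_{2\rightarrow 2}$ gives $\norm{\xVec-\zVec}_2\leq\mathcal{C}\sqrt{M}\,\epsilon'$.

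To conclude, given $\epsilon>0$ set $\epsilon':=\epsilon/(\mathcal{C}\sqrt{M})$ and $t:=\min\left\{\epsilon/(2\mathcal{C}\sqrt{M}),\,\delta\left(\epsilon'\right)\right\}>0$. On the event $\left\{\WMat\in\mathbb{HPD}^M,\ \norm{\WMat-\XMat}_{2\rightarrow 2}\leq t\right\}$ the two estimates above show that every minimizer of \eqref{Equation:Theorem:ad_errors:nnlr_optimization} and every minimizer of \eqref{Equation:Theorem:ad_errors:ml_optimization} obeys $\norm{\xVec-\zVec}_2\leq\epsilon$, and by the first preparatory fact there is a $K_0$ for which this event has probability at least $p$ whenever $K\geq K_0$; this is exactly the claim. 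The step I expect to be the main obstacle is the maximum likelihood case: carefully checking that $\g{x}=x-\Ln{x}$ satisfies every clause of \thref{Definition:tuple_convex}, establishing the log--det identity so that the constant offset genuinely turns \eqref{Equation:Theorem:ad_errors:ml_optimization} into \eqref{Equation:Theorem:covariance_robust:optimizer}, and transferring the matrix-level bound $\norm{\mathcal{A}\left(\xVec-\zVec\right)}_{2\rightarrow 2}\leq\epsilon'$ back to a coefficient bound via the signed kernel condition; the sample-covariance concentration and the non-negative least squares estimate are comparatively routine.
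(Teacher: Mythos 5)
Your proof is correct, and it follows the same conceptual route as the paper: write $\WMat=\frac{1}{K}\YMat\YMat^H$ as a perturbation of $\XMat=\mathcal{A}\left(\xVec\right)+\SigmaMat$, recognize the relaxed ML estimator as a covariance estimator of the form \eqref{Equation:Theorem:covariance_robust:optimizer} for $\g{x}=x-\Ln{x}$ via the log--det identity, apply \thref{Theorem:covariance_robust}, and convert the resulting matrix-level bound to a coefficient-level bound using a constant furnished by the signed kernel condition. The differences are in how you produce the two inputs to that strategy, and they are worth noting. First, you invoke the law of large numbers to get non-constructive existence of $K_0$, whereas the paper proves a Bernstein-type concentration inequality (\thref{Lemma:concentration}) and chooses $K_0$ explicitly; both suffice for the theorem as stated, but the paper's quantitative route is deliberate — the text emphasizes that the explicit dependence of $K_0$ on $\tau\left(\mathcal{A}\right)$, $\epsilon$, $M$ and $\SigmaMat$ is the actual payoff and underlies the entire discussion in \refP{Section:number_of_receive_antennas}, so your version proves the stated result but loses the content the authors most care about. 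Second, you re-derive the inequality $\norm{\xVec-\zVec}_2\leq\mathcal{C}\norm{\mathcal{A}\left(\xVec-\zVec\right)}_2$ via a compactness argument over the finitely many sign patterns, and likewise re-derive the robustness bound for non-negative least squares by a direct two-line triangle-inequality argument, instead of citing \thref{Theorem:skc_properties} and \thref{Theorem:skc}; this is more self-contained and equally valid (and your argument for closedness of $\mathcal{H}$ via $\Kernel{\mathcal{A}}\cap\mathbb{R}_+^N=\ZeroSet$ is in fact more careful than the paper's one-line remark about continuity), at the cost of abandoning the bookkeeping of the robustness constant $\tau\left(\mathcal{A}\right)$ with a fixed norm. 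Finally, your Frobenius-to-operator-norm conversion introduces an extra factor $\sqrt{M}$ that the paper avoids by choosing $\norm{\cdot}_{2\rightarrow 2}$ for the robustness constant throughout; this only affects constants, not correctness.
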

\begin{proof}
	The proof is given in
	\refP{Subsection:concentration_argument}
\end{proof}
This theorem states that the minimizers of
\eqref{Equation:NNLR} and \eqref{Equation:trace_log_det}
each converge in probability to the unknown vector of large scale fading coefficients $\xVec$
as $K\rightarrow\infty$.
In general one can use \cite[Theorem~9]{chen} in conjunction with \cite[Theorem~5]{chen}
to show that relaxed maximum likelihood estimation converges in probability to the
true solution.
\thref{Theorem:ad_errors} improves the result \cite[Theorem~5,Theorem~9]{chen}
by reducing the number of required pilot symbols from
$M^2\asymp S\left(\Ln{\ExpE\frac{N}{S}}\right)^2$ to $M^2\asymp S$.
\thref{Theorem:ad_errors} further uses a deterministic construction unlike
\cite[Theorem~9]{chen} and \cite[Theorem~1]{fengler} which use a random construction.
In \cite[Theorem~9]{chen} and \cite[Theorem~1]{fengler}
there always remains a slight chance to not create a matrix with
restricted isometry property. This chance is unaffected by the number of receive antennas $K$
and thus, the probability to achieve any error tolerance by applying \cite[Theorem~5, Theorem~9]{chen}
can not be made arbitrarily high by increasing $K$. Due to the deterministic construction \thref{Theorem:ad_errors}
allows one to do exactly that by increasing the number of receive antennas.
\par
\thref{Theorem:ad_errors} states that the minimizers of relaxed maximum likelihood estimation
converge in probability to the true solution. Thus, the unique identifiability condition in
\cite[Theorem~2~and~Theorem~5]{chen} must be fulfilled for all combinations of $S$ active users
for the codebook of this theorem.
Unlike the convergence from \cite[Theorem~2~and~Theorem~5]{chen},
the proof of \thref{Theorem:ad_errors} gives a direct condition on the number of receive antennas $K$.
To describe the condition additional properties need to be defined due to which
the discussion of the magnitude of $K$ is postponed to \refP{Section:number_of_receive_antennas}.
It will be shown that there is a trade off.
If the number of pilot symbols is reduced from 
$M^2\asymp S\left(\Ln{\ExpE\frac{N}{S}}\right)^2$ to $M^2\asymp S$
the number of receive antennas has to increase significantly.
Further, $K$ may be different depending on whether \eqref{Equation:Theorem:ad_errors:nnlr_optimization}
or \eqref{Equation:Theorem:ad_errors:ml_optimization} is considered.
In particular, $K$ might be larger for the relaxed maximum likelihood estimator.
\par
\thref{Theorem:ad_errors} can be coupled with thresholding to determine the active users.
By choosing $\epsilon>0$ small enough, one can make error probabilities in this case
arbitrarily small.
\begin{Remark}[Thresholding]\label{Remark:thresholding}
	Let the matrix $\AMat\in\mathbb{C}^{M\times N}$
	with columns $\aVec_n\in\mathbb{C}^M$ for all $n\in\SetOf{N}$
	be such that the linear operator defined by
	$\mathcal{A}\left(\zVec\right):=\sum_{n=1}^N\aVec_n\aVec_n^Hz_n$
	has signed kernel condition of order
	$S\leq\left\lceil\frac{1}{2}M^2\right\rceil-1$.
	Then, for all $\SigmaMat\in\mathbb{HPD}^M$, $\xVec\in\Sigma_S^N\cap\mathbb{R}^N_+$,
	$\epsilon\in\left(0,\frac{1}{2}\min_{n:x_n\neq 0}\abs{x_n}\right)$ and $p\in\left(0,1\right)$
	there exists a sufficiently large $K_0>0$ such that for all $K\geq K_0$ the following holds true:
	Let the columns of
	$\HMat\in\mathbb{C}^{N\times K}$ be $\hVec_k\sim\CGaussianRV{0}{\IDMat}$ and mutually independent,
	the columns of
	$\EMat\in\mathbb{C}^{M\times K}$ be $\eVec_k\sim\CGaussianRV{0}{\SigmaMat}$ and mutually independent
	and $\YMat=\AMat\sqrt{\Diag{\xVec}}\HMat+\EMat$.
	Let $\zVec$ be any minimizer of
	\begin{align}
		\nonumber
			\min_{\zVec\in\mathbb{R}_{\geq 0}^N}
			\norm{\sum_{n=1}^N\aVec_n\aVec_n^Hz_n+\SigmaMat-\frac{1}{K}\YMat\YMat^H}_2
	\end{align}
	or
	\begin{align}
		\nonumber
			\min_{\zVec\in\mathbb{R}_{\geq 0}^N}
			\Trace{\left(\sum_{n=1}^N\aVec_n\aVec_n^Hz_n+\SigmaMat\right)^{-1}\frac{1}{K}\YMat\YMat^H}
			+\Ln{\Det{\sum_{n=1}^N\aVec_n\aVec_n^Hz_n+\SigmaMat}}.
	\end{align}
	Let $T:=\{n:x_n\neq 0\}$, $T_1:=\{n:z_n>\epsilon\}$ and $T_2$
	be the indices of the $\SetSize{T}$ largest entries of $\zVec$.
	Then, the probability that $T=T_1$ and $T=T_2$ is at least $p$.
\end{Remark}
\begin{proof}[Proof of \thref{Remark:thresholding}]
	The proof follows from \thref{Theorem:ad_errors}
	after noting that
	$\norm{\xVec-\zVec}_\infty\leq\norm{\xVec-\zVec}_2\leq\epsilon$
	implies $T_1=T=T_2$.
\end{proof}
Both thresholding methods require prior knowledge of $\xVec$ however. In applications
users are generally defined to be active if $x_n>\epsilon_1$ for some known $\epsilon_1>0$.
All other users are treated as noise. Thus, in many applications the prior knowledge required to
choose $\epsilon$ and $T_1$ is known a priori.
\par
It should be noted that both \thref{Theorem:ad_errors} and \thref{Remark:thresholding} are independent
of the signal-to-noise ratio, i.e. any ratio between $\xVec$ and $\SigmaMat$.
Any change in the signal-to-noise ratio is compensated by increasing
the number of receive antennas in \thref{Remark:thresholding}
as explained in \refP{Section:number_of_receive_antennas}.
The convergence of the relaxed maximum likelihood estimators
to the vector of large scale fading coefficients as in \cite{chen}
or as a consequence of \thref{Theorem:ad_errors}
is a weak property.
This property is not enough to precisely pin down the probability of misdetection in the
finite antenna regime.
For this
one needs to consider the finite antenna case and understand the dependence
of $K$ on the other parameters. The discussion in \refP{Section:number_of_receive_antennas}
and proof of \thref{Theorem:ad_errors} explain this dependence up to some degree.
Due to this, the proof of \thref{Theorem:ad_errors} is significantly more important than
the statement itself.
	\subsection{Coordinate Descent for Relaxed Maximum Likelihood Estimation}
Coordinate descent is a common method to solve \eqref{Equation:Theorem:ad_errors:ml_optimization}.
An implementation of coordinate descent with optimal step size is given in
\refP{Algorithm:coord_desc} and was derived in \cite{fengler}.
\begin{algorithm}[H]
\caption{Coordinate Descent for Relaxed Maximum Likelihood Decoding}
\label{Algorithm:coord_desc}
\begin{algorithmic}
	\STATE {\textsc{INPUT:}}
	\STATE \hspace{0.5cm}
	measurement $\YMat\in\mathbb{C}^{M\times K}$, measurement matrix $\AMat\in\mathbb{C}^{M\times N}$
	with columns $\aVec_n$ for $n\in\SetOf{N}$,
	\STATE \hspace{0.5cm}
	covariance matrix $\SigmaMat\in\mathbb{HPD}^M$,
	permutation $\sigma:\SetOf{N}\rightarrow\SetOf{N}$,
	initialization $\xVec^0\in\mathbb{R}_+^N$
	\STATE {\textsc{OUTPUT:}}
	\STATE \hspace{0.5cm}
	estimator $\xVec^\#\in\mathbb{R}_+^N$
	\STATE
	\STATE
	$\SigmaMat'\leftarrow\SigmaMat^{-1}$
	\STATE {\textsc{WHILE}}
		any suitable stopping condition is not yet fulfilled
	{\textsc{DO}}
		\STATE \hspace{0.5cm} {\textsc{FOR}}
			$n'=1,\dots,N$ (ordered!)
		{\textsc{DO}}
			\STATE \hspace{1.0cm}
			$n:=\sigma\left(n'\right)$
			\STATE \hspace{1.0cm}
			$t\leftarrow \max\left\{-x_n,\left(\aVec_n^H\SigmaMat'\frac{1}{K}\YMat\YMat^H\SigmaMat'\aVec_n-\aVec_n^H\SigmaMat'\aVec_n\right)\left(\aVec_n^H\SigmaMat'\aVec_n\right)^{-2}\right\}$
			\STATE \hspace{1.0cm}
			$x_n\leftarrow x_n+t$
			\STATE \hspace{1.0cm}
			$\SigmaMat'\leftarrow\SigmaMat'-t\left(1+t\aVec_n^H\SigmaMat'\aVec_n\right)^{-1}\SigmaMat'\aVec_n\aVec_n^H\SigmaMat'$
			\STATE \hspace{1.0cm}
	\STATE {\textsc{RETURN:}}
		$\xVec^\#\leftarrow\xVec$
\end{algorithmic}
\end{algorithm}
The signed kernel condition will guarantee that any cluster point of this algorithm is a stationary point of the problem.
\begin{Theorem}\label{Theorem:coord_desc_convergence}
	Let the matrix $\AMat\in\mathbb{C}^{M\times N}$
	with columns $\aVec_n\in\mathbb{C}^M$ for all $n\in\SetOf{N}$
	be such that the linear operator defined by
	$\mathcal{A}\left(\zVec\right):=\sum_{n=1}^N\aVec_n\aVec_n^Hz_n$
	has signed kernel condition of order
	$S\leq\left\lceil\frac{1}{2}M^2\right\rceil-1$
	and $\YMat$ have full rank.
	Let $\xVec'_{i,n'}$ and $\SigmaMat'_{i,n'}$ be the vector $\xVec$ and the matrix $\SigmaMat'$
	from \refP{Algorithm:coord_desc} at the end of the $n'$-th iteration of the
	for loop in the $i$-th while loop.
	Let $\xVec_{\left(i-1\right)N+n'}:=\xVec'_{i,n'}$.
	Then, any cluster point of $\left(\xVec_{j}\right)_{j\in\mathbb{N}}$ is a stationary point
	and a coordinate-wise global minimum of
	\begin{align}
		\nonumber
			\min_{\zVec\in\mathbb{R}_{\geq 0}^N}
			\Trace{\left(\sum_{n=1}^N\aVec_n\aVec_n^Hz_n+\SigmaMat\right)^{-1}\frac{1}{K}\YMat\YMat^H}
			+\Ln{\Det{\sum_{n=1}^N\aVec_n\aVec_n^Hz_n+\SigmaMat}}.
	\end{align}
\end{Theorem}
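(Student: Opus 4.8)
The plan is to recognise \refP{Algorithm:coord_desc} as cyclic (Gauss--Seidel) block coordinate descent with exact one-coordinate minimisation applied to
\[
	\f{\zVec}=\Trace{\MMat\left(\zVec\right)^{-1}\WMat}+\Ln{\Det{\MMat\left(\zVec\right)}},\qquad\MMat\left(\zVec\right):=\sum_{n=1}^N\aVec_n\aVec_n^Hz_n+\SigmaMat,\quad\WMat:=\frac{1}{K}\YMat\YMat^H,
\]
over the closed convex product set $\mathbb{R}_{\geq 0}^N=\prod_{n=1}^N\left[0,\infty\right)$, and then to invoke a (self-contained) version of the classical convergence theorem for such schemes, whose only non-trivial hypothesis is that every one-coordinate subproblem has a \emph{unique} minimiser. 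The signed kernel condition (\thref{Definition:skc}) enters exactly here: it forces $\aVec_n\neq 0$ for all $n\in\SetOf{N}$, for otherwise the $n$-th standard basis vector of $\mathbb{R}^N$ would be a nonzero real element of $\Kernel{\mathcal{A}}$ with no negative entry. (A short compactness argument shows that the signed kernel condition also makes $\zVec\mapsto\mathcal{A}\left(\zVec\right)$ coercive on $\mathbb{R}_{\geq 0}^N$, so that $\f{}$ has bounded sublevel sets and, being non-increasing along the algorithm, has cluster points at all.)

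First I would analyse a single subproblem. Fix $n$ and the remaining coordinates and put $\MMat_{-n}:=\sum_{m\neq n}\aVec_m\aVec_m^Hz_m+\SigmaMat\in\mathbb{HPD}^M$. By the matrix determinant lemma and the Sherman--Morrison formula, as a function of the new value $s=z_n\geq 0$ the objective equals, up to an additive constant,
\[
	-\frac{d\,s}{1+c\,s}+\Ln{1+c\,s},\qquad c:=\aVec_n^H\MMat_{-n}^{-1}\aVec_n,\quad d:=\aVec_n^H\MMat_{-n}^{-1}\WMat\MMat_{-n}^{-1}\aVec_n,
\]
with $c>0$ because $\aVec_n\neq 0$ and $\MMat_{-n}\in\mathbb{HPD}^M$, and $d\geq 0$. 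Its $s$-derivative is $\left(c\left(1+c\,s\right)-d\right)\left(1+c\,s\right)^{-2}$, whose numerator $c\left(1+c\,s\right)-d$ is strictly increasing in $s$; hence the function is strictly decreasing and then strictly increasing (or strictly monotone on $\left[0,\infty\right)$), and it is coercive since $\Ln{1+c\,s}\to\infty$. Therefore the subproblem $\min_{s\geq 0}$ has the \emph{unique} minimiser $s^\star=\max\left\{0,\tfrac{d-c}{c^2}\right\}$; a short computation (carried out in \cite{fengler}) identifies $s^\star$ with the update $x_n\leftarrow x_n+t$ of \refP{Algorithm:coord_desc} and shows that the rank-one update of $\SigmaMat'$ maintains the invariant $\SigmaMat'=\MMat\left(\xVec\right)^{-1}$. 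In particular $\f{}$ is non-increasing along the iterates, and it is bounded below on $\mathbb{R}_{\geq 0}^N$ because $\MMat\left(\zVec\right)\succeq\SigmaMat$ gives $\f{\zVec}\geq\Ln{\Det{\SigmaMat}}$; hence $\f{\xVec_j}$ decreases to a finite limit $\mu$.

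Next, let $\bar\xVec\in\mathbb{R}_{\geq 0}^N$ be a cluster point of $\left(\xVec_j\right)_{j\in\mathbb{N}}$ and choose a subsequence $\xVec_{j_\ell}\to\bar\xVec$; by the pigeonhole principle I may assume that all $j_\ell$ sit at the same position of the cycle, so that $\xVec_{j_\ell+1}$ is obtained from $\xVec_{j_\ell}$ by exactly minimising $\f{}$ over one fixed coordinate, say the $n_1$-th. On a compact neighbourhood of $\bar\xVec$ the quantities $c,d$ depend continuously on the other coordinates and $c$ stays bounded away from $0$ (as $\MMat_{-n_1}^{-1}$ does, and $\aVec_{n_1}\neq 0$), so $s^\star$ stays in a fixed compact interval; Berge's maximum theorem then shows the one-coordinate update map is continuous at $\bar\xVec$, whence $\xVec_{j_\ell+1}\to\uVec$, where $\uVec$ arises from $\bar\xVec$ by minimising $\f{}$ over the $n_1$-th coordinate. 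Since $\f{\xVec_{j_\ell}}\to\mu$ and $\f{\xVec_{j_\ell+1}}\to\mu$, continuity of $\f{}$ gives $\f{\uVec}=\f{\bar\xVec}=\mu$, and uniqueness of the minimiser forces $\uVec=\bar\xVec$: thus $\bar\xVec$ already minimises $\f{}$ along the $n_1$-th coordinate, and $\xVec_{j_\ell+1}\to\bar\xVec$. Repeating this $N$ times, once around the cycle, shows that $\bar\xVec$ minimises $\f{}$ along every coordinate, i.e.\ it is a coordinate-wise global minimum. Finally, since each subproblem minimises the differentiable $\f{}$ over the convex set $\left[0,\infty\right)$, coordinate-wise minimality implies $\partial_n\f{\bar\xVec}\geq 0$ for every $n$, with equality whenever $\bar x_n>0$; summing $\partial_n\f{\bar\xVec}\left(y_n-\bar x_n\right)\geq 0$ over $n$ gives $\scprod{\nabla\f{\bar\xVec}}{\yVec-\bar\xVec}\geq 0$ for all $\yVec\in\mathbb{R}_{\geq 0}^N$, which is precisely the statement that $\bar\xVec$ is a stationary point of \eqref{Equation:trace_log_det}.

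The main obstacle I expect is the convergence argument just sketched: cyclic coordinate descent on a \emph{non-convex} objective can fail to reach a stationary point --- Powell's well-known example cycles forever --- and the feature that rescues it here is precisely the uniqueness of the one-coordinate minimisers, which in turn rests on the strict unimodality and uniform coercivity of the subproblems, both consequences of $\aVec_n\neq 0$ and hence of the signed kernel condition. Turning the informal ``the update steps cannot jump'' into a rigorous statement --- in particular checking that Berge's theorem applies uniformly in a neighbourhood of the cluster point --- is where the real work lies; the matrix identities of the first two paragraphs are routine. (The full-rank hypothesis on $\YMat$ guarantees $\WMat\succ 0$, which is convenient but not essential, since the argument uses only $\WMat\succeq 0$ together with $\aVec_n\neq 0$.)
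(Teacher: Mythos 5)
Your argument is correct and reaches the same conclusion, but by a genuinely different, more self-contained route than the paper. The paper first proves compactness of sublevel sets of the least-residual and relaxed-ML objectives (\thref{Lemma:nnlr_level_set_compact} and \thref{Lemma:ml_level_set_compact}, the latter also extracting $\aVec_n\neq 0$ from the signed kernel condition exactly as you do), then casts \refP{Algorithm:coord_desc} in the decomposition $f_0+\sum_n f_n$ with indicator penalties enforcing $z_n\geq 0$, verifies regularity and the essentially cyclic rule, and invokes Tseng's block coordinate descent theorem \cite[Theorem~4.1(c)]{convergence} as a black box, finishing with a re-indexing argument (shifted permutation and shifted initialization) to upgrade the conclusion from the subsequence of cycle endpoints to arbitrary cluster points. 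You instead reconstruct the relevant piece of that convergence theory directly: the closed form $s^\star=\max\left\{0,\left(d-c\right)/c^2\right\}$ with $c>0$ gives continuity of the per-coordinate update map (so Berge's theorem is overkill but harmless), and monotone decrease of the objective together with uniqueness of the one-coordinate minimiser forces, by a squeeze argument, any cluster point to be fixed under each coordinate update; stationarity then follows from coordinate-wise optimality over the box $\mathbb{R}_{\geq 0}^N$. Your pigeonhole passage to a subsequence at a fixed cycle position plays the role of the paper's re-indexing step, and your parenthetical coercivity remark corresponds exactly to \thref{Lemma:nnlr_level_set_compact} and \thref{Lemma:ml_level_set_compact}. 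The paper's route is shorter and handles the nondifferentiable indicator terms and essentially cyclic orderings via a ready-made result; yours is elementary and makes transparent that the two hypotheses doing the work are $\aVec_n\neq 0$ (hence uniqueness of each coordinate minimiser) and sublevel-set compactness, both traceable to the signed kernel condition.
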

\begin{proof}
	The proof is given in \refP{Subsection:proof_coordinate_descent}.
\end{proof}
The stationary point however, does not have to be a global minimizer.
The algorithm can still suffer from a bad initialization, get stuck in a local minimizer or converge slowly.
	\section{Proof of Theorem \ref{Theorem:covariance_robust}: Robustness of Covariance Estimation}
\noindent
In order to shorten notation one sets
\begin{align}
		\nonumber
		\f{\ZMat,\WMat}
	:=
		\sum_{m=1}^M\g{\lambda_m\left(\WMat^\frac{1}{2}\ZMat^{-1}\WMat^\frac{1}{2}\right)}.
\end{align}
\subsection{Sufficiently Nice Tuples}
In this subsection the part about sufficiently nice $g$ in \thref{Theorem:covariance_robust} is proven.
The following three lemmas contain simple statements about eigenvalues and the compactness of certain sets.
\begin{Lemma}
	\label{Lemma:level_set_compact1}
	Let the tuple $\left(\g{},g_1,g_2,\delta_1,\delta_2\right)$ be sufficiently nice
	and $\mathcal{H}\subset\mathbb{HPD}^M$ be closed in $\mathbb{HPD}^M$.
	For all $\WMat\in\mathbb{HPD}^M$ the problem
	\begin{align}
		\nonumber
		\min_{\ZMat\in\mathcal{H}}
		\sum_{m=1}^M\g{\lambda_m\left(\WMat^\frac{1}{2}\ZMat^{-1}\WMat^\frac{1}{2}\right)}
	\end{align}
	has a minimizer. Further, for all $\gamma\in\mathbb{R}$
	the level set
	$\mathcal{G}:=\left\{\ZMat\in\mathcal{H}:\f{\ZMat,\WMat}\leq \gamma\right\}$
	is compact.
	In particular, for $\gamma\geq M\g{1}$ one has
	\begin{align}
		\label{Equation:Lemma:level_set_compact1:ev_bound}
			\frac{\lambda_1\left(\WMat\right)}{
				g_2\left(\gamma-(M-1)\g{1}\right)
			}
		&\leq
			\lambda_m\left(\ZMat\right)
		\leq
			\frac{\lambda_M\left(\WMat\right)}{
				g_1\left(\gamma-(M-1)\g{1}\right)
			}
		\TextForAll
			\ZMat\in\mathcal{G}
		\TextAnd
			m\in\SetOf{M}.
	\end{align}
\end{Lemma}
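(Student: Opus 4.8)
The plan is to establish the three assertions in order of dependence: first the eigenvalue sandwich \refP{Equation:Lemma:level_set_compact1:ev_bound}, then compactness of $\mathcal{G}$, and finally existence of a minimizer, each step feeding the next. Throughout I would abbreviate $\mu_m(\ZMat):=\lambda_m\left(\WMat^{\frac12}\ZMat^{-1}\WMat^{\frac12}\right)$, which is a positive real because $\WMat^{\frac12}\ZMat^{-1}\WMat^{\frac12}\in\mathbb{HPD}^M$ (congruence of the positive definite $\ZMat^{-1}$ by the invertible $\WMat^{\frac12}$), and $y:=\gamma-(M-1)\g{1}$.

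For the eigenvalue bound, fix $\ZMat\in\mathcal{G}$ and assume $\gamma\ge M\g{1}$. Since \refP{Property:Definition:tuple_nice:minimizer} forces $\g{x}\ge\g{1}$ for every $x\in(0,\infty)$, each of the $M$ summands of $\f{\ZMat,\WMat}=\sum_m\g{\mu_m(\ZMat)}$ is at least $\g{1}$, so $\g{\mu_m(\ZMat)}\le\gamma-(M-1)\g{1}=y$ for all $m$, and $y\ge\g{1}$. Using \refP{Property:Definition:tuple_nice:grow} and \refP{Property:Definition:tuple_nice:cont}, the sublevel set $\{z\in(0,\infty):\g{z}\le y\}$ is nonempty and compact in $(0,\infty)$, and reading off the formulas in \refP{Property:Definition:tuple_nice:inver} it is contained in $\left[g_1(y),g_2(y)\right]$ with $0<g_1(y)\le 1\le g_2(y)<\infty$. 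Hence $\mu_m(\ZMat)\in\left[g_1(y),g_2(y)\right]$ for all $m$, i.e. $g_1(y)\IDMat\preceq\WMat^{\frac12}\ZMat^{-1}\WMat^{\frac12}\preceq g_2(y)\IDMat$; conjugating by $\WMat^{-\frac12}$ and inverting, which reverses the Loewner order, gives $\frac{1}{g_2(y)}\WMat\preceq\ZMat\preceq\frac{1}{g_1(y)}\WMat$. Applying eigenvalue monotonicity under the Loewner order and then bounding $\lambda_m(\WMat)$ by the extreme eigenvalues $\lambda_1(\WMat)$ and $\lambda_M(\WMat)$ yields the uniform two-sided estimate \refP{Equation:Lemma:level_set_compact1:ev_bound}.

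For compactness: if $\gamma<M\g{1}$ then $\mathcal{G}=\emptyset$ because $\f{\cdot,\WMat}\ge M\g{1}$ pointwise, so I may assume $\gamma\ge M\g{1}$. By the estimate just proved there are $0<a\le b<\infty$ depending only on $\gamma$ and $\WMat$ so that every eigenvalue of every $\ZMat\in\mathcal{G}$ lies in $[a,b]$; in particular $\norm{\ZMat}_2^2=\sum_m\lambda_m(\ZMat)^2\le Mb^2$, so $\mathcal{G}$ is bounded in $\mathbb{H}^M$. For closedness, take $\ZMat_k\in\mathcal{G}$ with $\ZMat_k\to\ZMat_\infty$ in $\mathbb{H}^M$; continuity of the eigenvalue maps puts all eigenvalues of $\ZMat_\infty$ in $[a,b]\subset(0,\infty)$, so $\ZMat_\infty\in\mathbb{HPD}^M$, the convergence is also convergence in $\mathbb{HPD}^M$, and closedness of $\mathcal{H}$ in $\mathbb{HPD}^M$ gives $\ZMat_\infty\in\mathcal{H}$; moreover $\ZMat\mapsto\f{\ZMat,\WMat}$ is continuous on $\mathbb{HPD}^M$, being a composition of matrix inversion, congruence by $\WMat^{\frac12}$, the eigenvalue maps, and $g$ (continuous by \refP{Property:Definition:tuple_nice:cont}), so $\f{\ZMat_\infty,\WMat}\le\gamma$ and $\ZMat_\infty\in\mathcal{G}$. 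Thus $\mathcal{G}$ is closed and bounded in the finite-dimensional space $\mathbb{H}^M$, hence compact, and by the remark in the notation section compact in $\mathbb{HPD}^M$ as well. For existence, assuming $\mathcal{H}\neq\emptyset$, the value $I:=\inf_{\ZMat\in\mathcal{H}}\f{\ZMat,\WMat}$ is finite (it is $\ge M\g{1}$); choosing any $\gamma>I$ makes $\mathcal{G}$ nonempty and compact, so the continuous $\f{\cdot,\WMat}$ attains its minimum on $\mathcal{G}$, and that point is a global minimizer over $\mathcal{H}$ because every $\ZMat\in\mathcal{H}$ with $\f{\ZMat,\WMat}>\gamma$ has strictly larger value.

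The main obstacle, and the reason the statement carefully works inside $\mathbb{HPD}^M$, is that $\mathbb{HPD}^M$ is not closed in $\mathbb{H}^M$, so "closed in $\mathcal{H}$ and bounded" alone does not give compactness: one must exclude sequences degenerating toward a singular limit. This is precisely what \refP{Property:Definition:tuple_nice:grow} provides — growth of $g$ at $\infty$ forces $g_1(y)>0$, hence the upper Loewner bound $\ZMat\preceq\frac{1}{g_1(y)}\WMat$ and boundedness of $\mathcal{G}$, while growth of $g$ at $0$ forces $g_2(y)<\infty$, hence the lower Loewner bound $\ZMat\succeq\frac{1}{g_2(y)}\WMat$ and uniform non-degeneracy of the limits. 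Everything else is bookkeeping with Weyl's monotonicity theorem and continuity of eigenvalue functions.
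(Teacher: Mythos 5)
Your proof is correct and follows the same overall structure as the paper's: first derive the two-sided eigenvalue bound, then deduce boundedness and closedness in $\mathbb{H}^M$ (being careful about $\mathbb{HPD}^M$ not being closed in $\mathbb{H}^M$), then obtain existence of a minimizer by taking a strict suplevel of the infimum. The one place where you genuinely diverge is in establishing \refP{Equation:Lemma:level_set_compact1:ev_bound}. The paper works directly with Rayleigh quotients: it picks a unit eigenvector $\vVec$ of $\ZMat^{-1}$, rewrites $\lambda_m\left(\ZMat^{-1}\right)=\scprod{\WMat^{\frac12}\ZMat^{-1}\WMat^{\frac12}\WMat^{-\frac12}\vVec}{\WMat^{-\frac12}\vVec}$, and sandwiches this quotient between extreme eigenvalues of $\WMat^{\frac12}\ZMat^{-1}\WMat^{\frac12}$ and $\WMat$, finishing by reindexing via $\lambda_m\left(\ZMat\right)=\lambda_{M+1-m}\left(\ZMat^{-1}\right)^{-1}$. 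You instead package the same content as Loewner inequalities: $g_1(y)\IDMat\preceq\WMat^{\frac12}\ZMat^{-1}\WMat^{\frac12}\preceq g_2(y)\IDMat$, conjugate by $\WMat^{-\frac12}$, invert (reversing the order), and apply Weyl monotonicity. Both routes yield the identical bounds; yours is arguably cleaner and more modular since it avoids the explicit eigenvector bookkeeping and the $M+1-m$ reindexing, at the small cost of invoking the operator-monotonicity of inversion rather than arguing from scratch. Your remark that one must verify $g_1(y)\le 1\le g_2(y)$ to conclude the sublevel set of $g$ is contained in $\left[g_1(y),g_2(y)\right]$ (since \refP{Property:Definition:tuple_nice:inver} only controls the parts of the sublevel set in $(0,1]$ and $[1,\infty)$ separately) is a careful touch that the paper elides.
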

\begin{proof}
	Note that if $\gamma<M\g{1}$, then $\mathcal{G}$ is empty and hence compact,
	so without loss of generality let $\gamma\geq M\g{1}$ and $\ZMat\in\mathcal{G}$.
	Let $\vVec$ be an eigenvector for the eigenvalue $\lambda_m\left(\ZMat^{-1}\right)$
	with $\norm{\vVec}_2=1$. Then
	\begin{align}
		\nonumber
			\lambda_m\left(\ZMat^{-1}\right)
		&=
			\lambda_m\left(\ZMat^{-1}\right)\scprod{\vVec}{\vVec}
		=
			\scprod{\ZMat^{-1}\vVec}{\vVec}
		=
			\scprod{
				\ZMat^{-1}\WMat^\frac{1}{2}\WMat^{-\frac{1}{2}}\vVec
			}{\WMat^\frac{1}{2}\WMat^{-\frac{1}{2}}\vVec}
		=
			\scprod{
				\WMat^\frac{1}{2}\ZMat^{-1}\WMat^\frac{1}{2}\WMat^{-\frac{1}{2}}\vVec
			}{\WMat^{-\frac{1}{2}}\vVec}.
	\end{align}
	From this one can get the lower bound
	\begin{align}
		\label{Equation:Lemma:level_set_compact1:eq1}
			\lambda_m\left(\ZMat^{-1}\right)
		\geq&
			\lambda_1\left(\WMat^\frac{1}{2}\ZMat^{-1}\WMat^\frac{1}{2}\right)
			\lambda_1\left(\WMat^{-\frac{1}{2}}\right)^2\norm{\vVec}_2^2
		=
			\frac{
				\lambda_1\left(\WMat^\frac{1}{2}\ZMat^{-1}\WMat^\frac{1}{2}\right)
			}{\lambda_M\left(\WMat\right)}
	\end{align}
	and the upper bound
	\begin{align}
		\label{Equation:Lemma:level_set_compact1:eq2}
			\lambda_m\left(\ZMat^{-1}\right)
		\leq&
			\lambda_M\left(\WMat^\frac{1}{2}\ZMat^{-1}\WMat^\frac{1}{2}\right)
			\lambda_M\left(\WMat^{-\frac{1}{2}}\right)^2\norm{\vVec}_2^2
		=
			\frac{
				\lambda_M\left(\WMat^\frac{1}{2}\ZMat^{-1}\WMat^\frac{1}{2}\right)
			}{\lambda_1\left(\WMat\right)}
	\end{align}
	for every $m$.
	Using $\lambda_m\left(\ZMat\right)=\lambda_{M+1-m}\left(\ZMat^{-1}\right)^{-1}$
	as well as \eqref{Equation:Lemma:level_set_compact1:eq1} and
	\eqref{Equation:Lemma:level_set_compact1:eq2}
	with $M+1-m$ instead of $m$ yields
	\begin{align}
		\label{Equation:Lemma:level_set_compact1:eq3}
			\frac{\lambda_1\left(\WMat\right)}{
				\lambda_M\left(\WMat^\frac{1}{2}\ZMat^{-1}\WMat^\frac{1}{2}\right)
			}
		&\leq
			\lambda_m\left(\ZMat\right)
		\leq
			\frac{\lambda_M\left(\WMat\right)}{
				\lambda_1\left(\WMat^\frac{1}{2}\ZMat^{-1}\WMat^\frac{1}{2}\right)
			}
	\end{align}
	for all $m\in\SetOf{M}$.
	On the other hand, $\ZMat\in \mathcal{G}$ yields
	\begin{align}
		\nonumber
			\g{\lambda_m\left(\WMat^\frac{1}{2}\ZMat^{-1}\WMat^\frac{1}{2}\right)}
		\leq
			-(M-1)\g{1}+\sum_{m'=1}^M
			\g{\lambda_{m'}\left(\WMat^\frac{1}{2}\ZMat^{-1}\WMat^\frac{1}{2}\right)}
		\leq
			\gamma-(M-1)\g{1}.
	\end{align}
	for all $m\in\SetOf{M}$. Since $\gamma-(M-1)\g{1}\geq \g{1}$, one can apply the definition of
	the almost inverse functions $g_1$ and $g_2$ to this and get
	\begin{align}
		\nonumber
			g_1\left(\gamma-(M-1)\g{1}\right)
		\leq
			\lambda_m\left(\WMat^\frac{1}{2}\ZMat^{-1}\WMat^\frac{1}{2}\right)
		\leq
			g_2\left(\gamma-(M-1)\g{1}\right)
	\end{align}
	for all $m\in\SetOf{M}$. 
	Together with \eqref{Equation:Lemma:level_set_compact1:eq3}
	this yields \eqref{Equation:Lemma:level_set_compact1:ev_bound}
	and that $\mathcal{G}$ is bounded.
	Note that by continuity $\mathcal{G}$ is closed in $\mathcal{H}$
	and thus in $\mathbb{HPD}^M$ but that does not mean $\mathcal{G}$ is closed in $\mathbb{H}^M$
	since $\mathbb{HPD}^M$ is not a complete metric space.
	However, due to \eqref{Equation:Lemma:level_set_compact1:ev_bound} the level set is bounded
	away from the boundary of $\mathbb{HPD}^M$ and hence the level set is also closed in
	the linear space $\mathbb{H}^M$. This proves the compactness of the level set.
	\par
	In order to prove the existence of minimizers let $\gamma>\inf_{\ZMat\in\mathcal{H}}\f{\ZMat,\WMat}$.
	Since $\ZMat\mapsto\f{\ZMat,\WMat}$ is continuous the function attains its minimal value
	over the compact set $\mathcal{G}$ at some $\ZMat$. Due to the definition of the level set,
	this has to be a minimizer of the problem.
\end{proof}
\begin{Lemma}
	\label{Lemma:level_set_compact2}
	Let the tuple $\left(\g{},g_1,g_2,\delta_1,\delta_2\right)$ be sufficiently nice.
	For all $\XMat\in\mathbb{HPD}^M$ and $\gamma\in\mathbb{R}$ the level set
	\begin{align}
			\mathcal{F}
		:=
			\left\{\WMat\in\mathbb{HPD}^M:\f{\XMat,\WMat}\leq \gamma\right\}
	\end{align}
	is compact.
	In particular, for $\gamma\geq M\g{1}$ one has
	\begin{align}
		\label{Equation:Lemma:level_set_compact2:ev_bound}
			\lambda_1\left(\XMat\right)g_1\left(\gamma-(M-1)\g{1}\right)
		\leq
			\lambda_m\left(\WMat\right)
		\leq
			\lambda_M\left(\XMat\right)g_2\left(\gamma-(M-1)\g{1}\right)
		\TextForAll
			\WMat\in\mathcal{F}
		\TextAnd
			m\in\SetOf{M}.
	\end{align}
\end{Lemma}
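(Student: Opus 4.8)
The plan is to run the argument of \thref{Lemma:level_set_compact1} with the roles of the fixed and the varying matrix exchanged. First, if $\gamma<M\g{1}$ then $\mathcal{F}$ is empty and hence compact: since \refP{Property:Definition:tuple_nice:minimizer} makes $1$ the unique global minimizer of $\g{}$, every summand of $\f{\XMat,\WMat}=\sum_{m=1}^M\g{\lambda_m(\WMat^\frac{1}{2}\XMat^{-1}\WMat^\frac{1}{2})}$ is at least $\g{1}$, so $\f{\XMat,\WMat}\geq M\g{1}>\gamma$ for all $\WMat\in\mathbb{HPD}^M$. So assume $\gamma\geq M\g{1}$ and fix $\WMat\in\mathcal{F}$. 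Writing $\mu_m:=\lambda_m(\WMat^\frac{1}{2}\XMat^{-1}\WMat^\frac{1}{2})$, the bound $\f{\XMat,\WMat}\leq\gamma$ together with $\g{\mu_{m'}}\geq\g{1}$ for the other indices $m'$ gives $\g{\mu_m}\leq\gamma-(M-1)\g{1}$ for every $m\in\SetOf{M}$. Because $\gamma-(M-1)\g{1}\geq\g{1}$, the defining formulas for $g_1$ and $g_2$ apply; using that $g_1$ maps into $(0,1]$ and $g_2$ into $[1,\infty)$ to handle the two cases $\mu_m\leq 1$ and $\mu_m\geq 1$, one gets
\[
g_1(\gamma-(M-1)\g{1})\leq\mu_m\leq g_2(\gamma-(M-1)\g{1})\TextForAll m\in\SetOf{M}.
\]

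Next I would transfer these bounds to the eigenvalues of $\WMat$. The matrices $\WMat^\frac{1}{2}\XMat^{-1}\WMat^\frac{1}{2}$ and $\XMat^{-\frac{1}{2}}\WMat\XMat^{-\frac{1}{2}}$ are conjugate, hence share the eigenvalues $\mu_1,\dots,\mu_M$. Picking a unit eigenvector $\vVec$ of $\WMat$ for $\lambda_m(\WMat)$, writing $\lambda_m(\WMat)=\scprod{(\XMat^{-\frac{1}{2}}\WMat\XMat^{-\frac{1}{2}})\XMat^\frac{1}{2}\vVec}{\XMat^\frac{1}{2}\vVec}$, estimating this quadratic form by the extreme eigenvalues $\mu_M$ and $\mu_1$, and using $\lambda_1(\XMat)\leq\scprod{\XMat\vVec}{\vVec}\leq\lambda_M(\XMat)$, yields $\lambda_1(\XMat)\mu_M\leq\lambda_m(\WMat)\leq\lambda_M(\XMat)\mu_1$ for all $m\in\SetOf{M}$. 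Combined with the previous display this is exactly \eqref{Equation:Lemma:level_set_compact2:ev_bound}.

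It remains to conclude compactness. The inequality \eqref{Equation:Lemma:level_set_compact2:ev_bound} shows $\mathcal{F}$ is bounded and, since $g_1(\gamma-(M-1)\g{1})>0$, that its spectrum stays bounded away from $0$, so $\mathcal{F}$ lies at positive distance from the boundary of $\mathbb{HPD}^M$. The map $\WMat\mapsto\f{\XMat,\WMat}$ is continuous on $\mathbb{HPD}^M$ (by \refP{Property:Definition:tuple_nice:cont} together with continuity of $\WMat\mapsto\WMat^\frac{1}{2}$ and of eigenvalues), so $\mathcal{F}$ is closed in $\mathbb{HPD}^M$; being bounded away from the boundary it is in fact closed in $\mathbb{H}^M$, and a bounded closed subset of the finite-dimensional space $\mathbb{H}^M$ is compact. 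I do not expect any real obstacle here: the only points needing a little care are the conjugacy identity $\lambda_m(\WMat^\frac{1}{2}\XMat^{-1}\WMat^\frac{1}{2})=\lambda_m(\XMat^{-\frac{1}{2}}\WMat\XMat^{-\frac{1}{2}})$ and — exactly as in \thref{Lemma:level_set_compact1} — the fact that establishing closedness in $\mathbb{H}^M$ (not merely in the non-complete space $\mathbb{HPD}^M$) genuinely uses the eigenvalue lower bound to keep $\mathcal{F}$ off the boundary.
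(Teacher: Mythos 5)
Your argument takes essentially the same route as the paper: bound the eigenvalues $\mu_m$ of $\WMat^\frac{1}{2}\XMat^{-1}\WMat^\frac{1}{2}$ via the level-set constraint together with the almost-inverse functions $g_1,g_2$, transfer these to the eigenvalues of $\WMat$ through a quadratic-form estimate, and conclude compactness from the resulting spectral bounds, taking the same care about closedness in $\mathbb{H}^M$ versus $\mathbb{HPD}^M$. The only slip is an index swap in the transfer step: with the convention $\lambda_1$ smallest and $\lambda_M$ largest (which your inequality $\lambda_1\left(\XMat\right)\leq\scprod{\XMat\vVec}{\vVec}\leq\lambda_M\left(\XMat\right)$ presupposes), the quadratic-form estimate yields $\lambda_1\left(\XMat\right)\mu_1\leq\lambda_m\left(\WMat\right)\leq\lambda_M\left(\XMat\right)\mu_M$, not $\lambda_1\left(\XMat\right)\mu_M\leq\lambda_m\left(\WMat\right)\leq\lambda_M\left(\XMat\right)\mu_1$; since you immediately replace every $\mu_m$ by the uniform bounds $g_1\left(\gamma-(M-1)\g{1}\right)$ and $g_2\left(\gamma-(M-1)\g{1}\right)$, the final display \eqref{Equation:Lemma:level_set_compact2:ev_bound} and the compactness conclusion are unaffected.
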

\begin{proof}
	Note that if $\gamma<M\g{1}$, then $\mathcal{F}$ is empty and hence compact,
	so without loss of generality let $\gamma\geq M\g{1}$.
	Let $\vVec$ be an eigenvector for the eigenvalue $\lambda_m\left(\WMat\right)$
	with $\norm{\vVec}_2=1$. Then $\vVec$ is an eigenvector for the eigenvalue
	$\lambda_m\left(\WMat^\frac{1}{2}\right)=\lambda_m\left(\WMat\right)^\frac{1}{2}$
	and hence
	\begin{align}
		\nonumber
			\lambda_m\left(\WMat\right)
		&=
			\lambda_m\left(\WMat\right)
			\frac{
				\scprod{\XMat^{-1}\vVec}{\vVec}
			}{
				\scprod{\XMat^{-1}\vVec}{\vVec}
			}
		=
			\frac{
				\scprod{\XMat^{-1}\lambda_m\left(\WMat\right)^\frac{1}{2}\vVec}{
					\lambda_m\left(\WMat\right)^\frac{1}{2}\vVec
				}
			}{
				\scprod{\XMat^{-1}\vVec}{\vVec}
			}
		=
			\frac{
				\scprod{\XMat^{-1}\lambda_m\left(\WMat^\frac{1}{2}\right)\vVec}{
					\lambda_m\left(\WMat^\frac{1}{2}\right)\vVec
				}
			}{
				\scprod{\XMat^{-1}\vVec}{\vVec}
			}
		\\=&\nonumber
			\frac{
				\scprod{\XMat^{-1}\WMat^\frac{1}{2}\vVec}{\WMat^\frac{1}{2}\vVec}
			}{
				\scprod{\XMat^{-1}\vVec}{\vVec}
			}
		=
			\frac{
				\scprod{\WMat^\frac{1}{2}\XMat^{-1}\WMat^\frac{1}{2}\vVec}{\vVec}
			}{
				\scprod{\XMat^{-1}\vVec}{\vVec}
			}.
	\end{align}
	From this one can get the lower bound
	\begin{align}
		\label{Equation:Lemma:level_set_compact2:eq1}
			\lambda_m\left(\WMat\right)
		\geq&
			\frac{
				\lambda_1\left(\WMat^\frac{1}{2}\XMat^{-1}\WMat^\frac{1}{2}\right)
			}{
				\lambda_M\left(\XMat^{-1}\right)
			}
		=
			\lambda_1\left(\WMat^\frac{1}{2}\XMat^{-1}\WMat^\frac{1}{2}\right)
			\lambda_1\left(\XMat\right)
	\end{align}
	and the upper bound
	\begin{align}
		\label{Equation:Lemma:level_set_compact2:eq2}
			\lambda_m\left(\WMat\right)
		\leq&
			\frac{
				\lambda_M\left(\WMat^\frac{1}{2}\XMat^{-1}\WMat^\frac{1}{2}\right)
			}{
				\lambda_1\left(\XMat^{-1}\right)
			}
		=
			\lambda_M\left(\WMat^\frac{1}{2}\XMat^{-1}\WMat^\frac{1}{2}\right)
			\lambda_M\left(\XMat\right)
	\end{align}
	for all $m\in\SetOf{M}$.
	On the other hand, $\WMat\in \mathcal{F}$ yields
	\begin{align}
		\nonumber
			\g{\lambda_m\left(\WMat^\frac{1}{2}\XMat^{-1}\WMat^\frac{1}{2}\right)}
		\leq
			-(M-1)\g{1}+\sum_{m'=1}^M
			\g{\lambda_{m'}\left(\WMat^\frac{1}{2}\XMat^{-1}\WMat^\frac{1}{2}\right)}
		\leq
			\gamma-(M-1)\g{1}
	\end{align}
	for all $m\in\SetOf{M}$.
	Since $\gamma-(M-1)\g{1}\geq \g{1}$, one can apply the definition of
	the almost inverse functions $g_1$ and $g_2$ to this and get
	\begin{align}
		\nonumber
			g_1\left(\gamma-(M-1)\g{1}\right)
		\leq
			\lambda_m\left(\WMat^\frac{1}{2}\XMat^{-1}\WMat^\frac{1}{2}\right)
		\leq
			g_2\left(\gamma-(M-1)\g{1}\right)
	\end{align}
	for all $m\in\SetOf{M}$.
	Together with \eqref{Equation:Lemma:level_set_compact2:eq1} and
	\eqref{Equation:Lemma:level_set_compact2:eq2}
	this yields \eqref{Equation:Lemma:level_set_compact2:ev_bound}
	and that $\mathcal{F}$ is bounded.
	Note that by continuity $\mathcal{F}$ is closed in $\mathcal{H}$
	and thus in $\mathbb{HPD}^M$ but that does not mean $\mathcal{F}$ is closed in $\mathbb{H}^M$
	since $\mathbb{HPD}^M$ is not a complete metric space.
	However, due to \eqref{Equation:Lemma:level_set_compact2:ev_bound} the level set is bounded
	away from the boundary of $\mathbb{HPD}^M$ and hence the level set is also closed in
	the linear space $\mathbb{H}^M$. This proves the compactness of the level set.
\end{proof}
\begin{Lemma}\label{Lemma:compact_neighboorhood}
	For all $\XMat\in\mathbb{HPD}^M$ and $\beta<\lambda_1\left(\XMat\right)$ the set
	$\mathcal{C}:=\left\{\ZMat\in\mathbb{HPD}^M:\norm{\ZMat-\XMat}_{2\rightarrow 2}\leq\beta\right\}$
	is compact in $\mathbb{HPD}^M$, and
	\begin{align}
		\label{Equation:Lemma:compact_neighboorhood:ev_bound}
			\lambda_1\left(\XMat\right)-\beta
		\leq
			\lambda_m\left(\ZMat\right)
		\leq
			\lambda_M\left(\XMat\right)+\beta
	\end{align}
	for all $\ZMat\in\mathcal{C}$ and $m\in\SetOf{M}$.
\end{Lemma}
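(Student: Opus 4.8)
The plan is to derive the eigenvalue sandwich \eqref{Equation:Lemma:compact_neighboorhood:ev_bound} from an elementary Rayleigh-quotient estimate, and then to obtain compactness of $\mathcal{C}$ from the Heine--Borel theorem in the finite-dimensional space $\mathbb{H}^M$. First I would prove the eigenvalue bound: fix $\ZMat\in\mathcal{C}$ and put $\AMat:=\ZMat-\XMat\in\mathbb{H}^M$, so that $\norm{\AMat}_{2\rightarrow 2}\leq\beta$. For every unit vector $\vVec$, Cauchy--Schwarz gives $\abs{\scprod{\AMat\vVec}{\vVec}}\leq\norm{\AMat\vVec}_2\leq\norm{\AMat}_{2\rightarrow 2}\leq\beta$, hence $\scprod{\XMat\vVec}{\vVec}-\beta\leq\scprod{\ZMat\vVec}{\vVec}\leq\scprod{\XMat\vVec}{\vVec}+\beta$. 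Since $\XMat$ and $\ZMat$ are Hermitian, taking the minimum and the maximum of the Rayleigh quotient over all unit $\vVec$ yields $\lambda_1\left(\ZMat\right)\geq\lambda_1\left(\XMat\right)-\beta$ and $\lambda_M\left(\ZMat\right)\leq\lambda_M\left(\XMat\right)+\beta$; combined with $\lambda_1\left(\ZMat\right)\leq\lambda_m\left(\ZMat\right)\leq\lambda_M\left(\ZMat\right)$ for all $m\in\SetOf{M}$ this is exactly \eqref{Equation:Lemma:compact_neighboorhood:ev_bound}.

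For compactness, I would first observe that the hypothesis $\beta<\lambda_1\left(\XMat\right)$ together with the lower bound just established gives $\lambda_m\left(\ZMat\right)\geq\lambda_1\left(\XMat\right)-\beta>0$ for every $m$, so any Hermitian $\ZMat$ with $\norm{\ZMat-\XMat}_{2\rightarrow 2}\leq\beta$ is automatically positive definite. Hence $\mathcal{C}$ equals the full closed ball $\left\{\ZMat\in\mathbb{H}^M:\norm{\ZMat-\XMat}_{2\rightarrow 2}\leq\beta\right\}$, i.e. the restriction to $\mathbb{HPD}^M$ in the definition of $\mathcal{C}$ is vacuous. This ball is closed and bounded in the finite-dimensional real normed space $\mathbb{H}^M$ --- here one uses that $\norm{\cdot}_{2\rightarrow 2}$ is a norm on $\mathbb{H}^M$ and that all norms on a finite-dimensional space induce the same topology, namely the Frobenius topology with which $\mathbb{H}^M$ is equipped --- so it is compact in $\mathbb{H}^M$ by Heine--Borel. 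Since $\mathcal{C}\subset\mathbb{HPD}^M$ and, as noted earlier in the paper, a subset of $\mathbb{HPD}^M$ is compact in $\mathbb{HPD}^M$ if and only if it is compact in $\mathbb{H}^M$, it follows that $\mathcal{C}$ is compact in $\mathbb{HPD}^M$.

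There is no genuinely hard step here. The only point that requires attention is that closedness in $\mathbb{HPD}^M$ does not imply closedness in $\mathbb{H}^M$, because $\mathbb{HPD}^M$ is not complete, so one has to make sure that $\mathcal{C}$ stays uniformly bounded away from the singular boundary of $\mathbb{HPD}^M$. This is precisely what the assumption $\beta<\lambda_1\left(\XMat\right)$ provides, through the lower eigenvalue estimate above; everything else reduces to Cauchy--Schwarz, the variational characterisation of the extreme eigenvalues, and Heine--Borel.
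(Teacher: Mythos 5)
Your proof is correct and follows essentially the same route as the paper: the eigenvalue sandwich comes from splitting $\scprod{\ZMat\vVec}{\vVec}$ as $\scprod{(\ZMat-\XMat)\vVec}{\vVec}+\scprod{\XMat\vVec}{\vVec}$ and using $\norm{\ZMat-\XMat}_{2\to 2}\le\beta$, and compactness comes from the lower eigenvalue bound keeping $\mathcal{C}$ away from the singular boundary plus Heine--Borel in $\mathbb{H}^M$. Your observation that the restriction to $\mathbb{HPD}^M$ in the definition of $\mathcal{C}$ is actually vacuous (so $\mathcal{C}$ is literally a closed ball in $\mathbb{H}^M$) is a marginally cleaner way to phrase the closedness step than the paper's embedding of $\mathcal{C}$ into a spectrally-defined compact set $\mathcal{F}$, but it is the same idea.
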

\begin{proof}
	For every $\ZMat\in\mathcal{C}$ and $m\in\SetOf{M}$ one has
	\begin{align}
		&\label{Equation:Lemma:compact_neighboorhood:eq1}
			\abs{\lambda_m\left(\ZMat-\XMat\right)}
		\leq
			\sup_{m'=1,\dots,M}\lambda_{m'}\abs{\left(\ZMat-\XMat\right)}
		=
			\norm{\ZMat-\XMat}_{2\rightarrow 2}
		\leq
			\beta.
	\end{align}
	If $\vVec$ is an eigenvector to the eigenvalue $\lambda_m\left(\ZMat\right)$ with $\norm{\vVec}_2=1$,
	\begin{align}
		\label{Equation:Lemma:compact_neighboorhood:eq2}
			\lambda_m\left(\ZMat\right)
		=&
			\scprod{\ZMat\vVec}{\vVec}
		=
			\scprod{\left(\ZMat-\XMat\right)\vVec}{\vVec}+\scprod{\XMat\vVec}{\vVec}.
	\end{align}
	Note that $\XMat$ is positive definite, but $\ZMat-\XMat$ not necessarily is.
	Keeping this in mind one can use \eqref{Equation:Lemma:compact_neighboorhood:eq1} on
	\eqref{Equation:Lemma:compact_neighboorhood:eq2} to get
	\begin{align}
		\label{Equation:Lemma:compact_neighboorhood:eq3}
			\lambda_m\left(\ZMat\right)
		\leq&
			\lambda_M\left(\ZMat-\XMat\right)+\lambda_M\left(\XMat\right)
		\leq
			\beta+\lambda_M\left(\XMat\right).
	\end{align}
	On the other hand, one can use \eqref{Equation:Lemma:compact_neighboorhood:eq1} on
	\eqref{Equation:Lemma:compact_neighboorhood:eq2} to obtain
	\begin{align}
		\label{Equation:Lemma:compact_neighboorhood:eq4}
			\lambda_m\left(\ZMat\right)
		\geq&
			\lambda_1\left(\ZMat-\XMat\right)+\lambda_1\left(\XMat\right)
		\geq
			-\beta+\lambda_1\left(\XMat\right).
	\end{align}
	From \eqref{Equation:Lemma:compact_neighboorhood:eq3} and
	\eqref{Equation:Lemma:compact_neighboorhood:eq4} the statement
	\eqref{Equation:Lemma:compact_neighboorhood:ev_bound} follows.
	From \eqref{Equation:Lemma:compact_neighboorhood:ev_bound} one gets
	\begin{align}
		\nonumber
			\mathcal{C}
		\subset
			\mathcal{F}
		:=
			\left\{
				\ZMat\in\mathbb{H}^M:
				\beta-\lambda_1\left(\XMat\right)
				\leq\lambda_m\left(\ZMat\right)\leq
				\lambda_M\left(\XMat\right)+\beta
				\TextForAll
				m\in\SetOf{M}
			\right\}
	\end{align}
	and the right hand side is compact in the finite-dimensional linear space $\mathbb{H}^M$ over
	$\mathbb{C}$ and hence compact in $\mathbb{HPD}^M$.
	Due to continuity of the norm
	$\mathcal{C}$ is closed in $\mathbb{HPD}^M$ and thus also closed in the compact
	$\mathcal{F}$. It follows that $\mathcal{C}$ is compact.
\end{proof}
Before the statement about sufficiently nice $g$ can be proven, it is shown that
small perturbations in $\WMat$ lead to small changes in the objective function.
This can be shown directly by a continuity argument; however, the next result
also gives the exact dependence of that on $\delta_1$ for a sufficiently nice tuple.
\begin{Proposition}\label{Proposition:obj_cont}
	Let the tuple $\left(\g{},g_1,g_2,\delta_1,\delta_2\right)$ be sufficiently nice
	and $\mathcal{H}\subset\mathbb{HPD}^M$ be closed in $\mathbb{HPD}^M$.
	For all $\XMat\in\mathcal{H}$ there exists a function
	$\delta_3:\left(0,\infty\right)\rightarrow\left(0,\infty\right)$
	such that the following holds true:
	For every $\epsilon>0$ and $\WMat\in\mathbb{HPD}^M$ with
	$\norm{\XMat-\WMat}_{2\rightarrow 2}\leq\delta_3\left(\epsilon\right)$
	one has $\f{\XMat,\WMat}\leq M\g{1}+\epsilon$.
	Moreover, one can choose
	\begin{align}
		\label{Equation:Proposition:obj_cont:delta_3_def}
			\delta_3\left(\epsilon\right)
		:=
			\min\left\{
				\lambda_1\left(\XMat\right)
				\left(
					\frac{\lambda_1\left(\XMat\right)-\beta}{\lambda_M\left(\XMat\right)+\beta}
				\right)^\frac{1}{2}
				\delta_1\left(\frac{\epsilon}{M}\right),
				\beta
			\right\}.
	\end{align}
\end{Proposition}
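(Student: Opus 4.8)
The plan is to exploit that the objective is $\f{\XMat,\WMat}=\sum_{m=1}^M\g{\lambda_m\left(\WMat^{\frac{1}{2}}\XMat^{-1}\WMat^{\frac{1}{2}}\right)}$ and that for $\WMat=\XMat$ the matrix $\WMat^{\frac{1}{2}}\XMat^{-1}\WMat^{\frac{1}{2}}$ equals $\IDMat$, so all of its eigenvalues are $1$ and the objective equals $M\g{1}$. Since $\WMat^{\frac{1}{2}}\XMat^{-1}\WMat^{\frac{1}{2}}$ is Hermitian positive definite its eigenvalues are positive, so $\g{}$ is defined on them. It therefore suffices to show that $\norm{\XMat-\WMat}_{2\rightarrow 2}\leq\delta_3\left(\epsilon\right)$ forces every eigenvalue $\lambda_m\left(\WMat^{\frac{1}{2}}\XMat^{-1}\WMat^{\frac{1}{2}}\right)$ into the interval $\left[1-\delta_1\left(\frac{\epsilon}{M}\right),1+\delta_1\left(\frac{\epsilon}{M}\right)\right]$: by \refP{Property:Definition:tuple_nice:cont_around_1} this yields $\abs{\g{\lambda_m\left(\WMat^{\frac{1}{2}}\XMat^{-1}\WMat^{\frac{1}{2}}\right)}-\g{1}}\leq\frac{\epsilon}{M}$ for each $m\in\SetOf{M}$, and summing over the $M$ eigenvalues gives $\f{\XMat,\WMat}\leq M\g{1}+\epsilon$. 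Because \refP{Property:Definition:tuple_nice:cont_around_1} only provides continuity in a neighbourhood of $1$ with an explicit modulus $\delta_1$, the argument really has to localise the spectrum of $\WMat^{\frac{1}{2}}\XMat^{-1}\WMat^{\frac{1}{2}}$ near $1$ quantitatively; a purely qualitative continuity argument would not reproduce the stated formula for $\delta_3$.

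The technical heart is an eigenvalue perturbation bound. The starting point is the identity
\[
	\WMat^{\frac{1}{2}}\XMat^{-1}\WMat^{\frac{1}{2}}-\IDMat
	=\WMat^{\frac{1}{2}}\left(\XMat^{-1}-\WMat^{-1}\right)\WMat^{\frac{1}{2}}
	=\WMat^{-\frac{1}{2}}\left(\WMat-\XMat\right)\XMat^{-1}\WMat^{\frac{1}{2}},
\]
where the first equality uses $\WMat^{\frac{1}{2}}\WMat^{-1}\WMat^{\frac{1}{2}}=\IDMat$ and the second uses $\XMat^{-1}-\WMat^{-1}=\WMat^{-1}\left(\WMat-\XMat\right)\XMat^{-1}$. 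The left-hand side is Hermitian, so its $2\rightarrow 2$ operator norm equals $\max_{m\in\SetOf{M}}\abs{\lambda_m\left(\WMat^{\frac{1}{2}}\XMat^{-1}\WMat^{\frac{1}{2}}\right)-1}$; estimating the right-hand side by submultiplicativity of the operator norm together with $\norm{\WMat^{-\frac{1}{2}}}_{2\rightarrow 2}=\lambda_1\left(\WMat\right)^{-\frac{1}{2}}$, $\norm{\XMat^{-1}}_{2\rightarrow 2}=\lambda_1\left(\XMat\right)^{-1}$ and $\norm{\WMat^{\frac{1}{2}}}_{2\rightarrow 2}=\lambda_M\left(\WMat\right)^{\frac{1}{2}}$ gives
\[
	\max_{m\in\SetOf{M}}\abs{\lambda_m\left(\WMat^{\frac{1}{2}}\XMat^{-1}\WMat^{\frac{1}{2}}\right)-1}
	\leq\frac{1}{\lambda_1\left(\XMat\right)}\left(\frac{\lambda_M\left(\WMat\right)}{\lambda_1\left(\WMat\right)}\right)^{\frac{1}{2}}\norm{\WMat-\XMat}_{2\rightarrow 2}.
\]

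To turn this into a bound involving $\XMat$ alone, fix $\beta\in\left(0,\lambda_1\left(\XMat\right)\right)$ (the function $\delta_3$ in \eqref{Equation:Proposition:obj_cont:delta_3_def} is built from this $\beta$). If $\norm{\XMat-\WMat}_{2\rightarrow 2}\leq\beta$ then \thref{Lemma:compact_neighboorhood}, applied with this $\beta$, yields $\lambda_1\left(\WMat\right)\geq\lambda_1\left(\XMat\right)-\beta$ and $\lambda_M\left(\WMat\right)\leq\lambda_M\left(\XMat\right)+\beta$, so the previous display improves to
\[
	\max_{m\in\SetOf{M}}\abs{\lambda_m\left(\WMat^{\frac{1}{2}}\XMat^{-1}\WMat^{\frac{1}{2}}\right)-1}
	\leq\frac{1}{\lambda_1\left(\XMat\right)}\left(\frac{\lambda_M\left(\XMat\right)+\beta}{\lambda_1\left(\XMat\right)-\beta}\right)^{\frac{1}{2}}\norm{\WMat-\XMat}_{2\rightarrow 2}.
\]
With $\delta_3$ defined by \eqref{Equation:Proposition:obj_cont:delta_3_def}, the hypothesis $\norm{\XMat-\WMat}_{2\rightarrow 2}\leq\delta_3\left(\epsilon\right)$ implies both $\norm{\XMat-\WMat}_{2\rightarrow 2}\leq\beta$, so the last display is valid, and $\frac{1}{\lambda_1\left(\XMat\right)}\left(\frac{\lambda_M\left(\XMat\right)+\beta}{\lambda_1\left(\XMat\right)-\beta}\right)^{\frac{1}{2}}\norm{\WMat-\XMat}_{2\rightarrow 2}\leq\delta_1\left(\frac{\epsilon}{M}\right)$; hence $\abs{\lambda_m\left(\WMat^{\frac{1}{2}}\XMat^{-1}\WMat^{\frac{1}{2}}\right)-1}\leq\delta_1\left(\frac{\epsilon}{M}\right)$ for every $m\in\SetOf{M}$, and the reduction of the first paragraph finishes the proof.

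I expect the only non-routine point to be choosing the right factorisation. Writing $\WMat^{\frac{1}{2}}\XMat^{-1}\WMat^{\frac{1}{2}}-\IDMat$ as the Hermitian matrix $\WMat^{\frac{1}{2}}\left(\XMat^{-1}-\WMat^{-1}\right)\WMat^{\frac{1}{2}}$ is what makes its operator norm coincide with $\max_m\abs{\lambda_m-1}$, but bounding that symmetric form factor by factor only gives the constant $\frac{\lambda_M\left(\WMat\right)}{\lambda_1\left(\XMat\right)\lambda_1\left(\WMat\right)}$; it is the further rewriting as $\WMat^{-\frac{1}{2}}\left(\WMat-\XMat\right)\XMat^{-1}\WMat^{\frac{1}{2}}$, with the operator-norm factors distributed as $\lambda_1\left(\WMat\right)^{-\frac{1}{2}}\cdot\norm{\WMat-\XMat}_{2\rightarrow 2}\cdot\lambda_1\left(\XMat\right)^{-1}\cdot\lambda_M\left(\WMat\right)^{\frac{1}{2}}$, that produces exactly the square-root constant in \eqref{Equation:Proposition:obj_cont:delta_3_def}. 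Everything else — Hermiticity, submultiplicativity, the invocation of \thref{Lemma:compact_neighboorhood}, and the final summation of $M$ terms — is mechanical.
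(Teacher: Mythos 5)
Your proof is correct and takes essentially the same approach as the paper. The central identity you use, $\WMat^{\frac{1}{2}}\XMat^{-1}\WMat^{\frac{1}{2}}-\IDMat=\WMat^{-\frac{1}{2}}\left(\WMat-\XMat\right)\XMat^{-1}\WMat^{\frac{1}{2}}$, is (up to sign) the same factorisation the paper writes as $\XMat-\WMat=\WMat^{\frac{1}{2}}\left(\IDMat-\WMat^{\frac{1}{2}}\XMat^{-1}\WMat^{\frac{1}{2}}\right)\WMat^{-\frac{1}{2}}\XMat$; you estimate it via submultiplicativity as an upper bound on $\norm{\WMat^{\frac{1}{2}}\XMat^{-1}\WMat^{\frac{1}{2}}-\IDMat}_{2\rightarrow 2}$, whereas the paper reads the same chain in the reverse direction as a lower bound on $\norm{\XMat-\WMat}_{2\rightarrow 2}$; the constants $\lambda_1\left(\XMat\right)^{-1}\left(\lambda_M\left(\WMat\right)/\lambda_1\left(\WMat\right)\right)^{1/2}$, the appeal to \thref{Lemma:compact_neighboorhood}, and the final use of \refP{Property:Definition:tuple_nice:cont_around_1} with a sum over the $M$ eigenvalues all coincide.
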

\begin{proof}
	Consider the choice \eqref{Equation:Proposition:obj_cont:delta_3_def} and let $\WMat\in\mathbb{HPD}^M$
	with $\norm{\XMat-\WMat}_{2\rightarrow 2}\leq\delta_3\left(\epsilon\right)$.
	Since $\norm{\XMat-\WMat}_{2\rightarrow 2}\leq\beta$, \thref{Lemma:compact_neighboorhood} yields
	\eqref{Equation:Lemma:compact_neighboorhood:ev_bound} which can be used to get
	\begin{align}
		&\nonumber
			\lambda_1\left(\XMat\right)
			\left(
				\frac{\lambda_1\left(\XMat\right)-\beta}{\lambda_M\left(\XMat\right)+\beta}
			\right)^\frac{1}{2}
			\delta_1\left(\frac{\epsilon}{M}\right)
		\geq
			\delta_3\left(\epsilon\right)
		\geq
			\norm{\XMat-\WMat}_{2\rightarrow 2}
		=
			\norm{
				\WMat^\frac{1}{2}
				\left(\IDMat-\WMat^\frac{1}{2}\XMat^{-1}\WMat^\frac{1}{2}\right)
				\WMat^{-\frac{1}{2}}\XMat
			}_{2\rightarrow 2}
		\\\geq&\nonumber
			\lambda_1\left(\XMat\right)\lambda_1\left(\WMat^\frac{1}{2}\right)
			\lambda_1\left(\WMat^{-\frac{1}{2}}\right)
			\norm{\IDMat-\WMat^\frac{1}{2}\XMat^{-1}\WMat^\frac{1}{2}}_{2\rightarrow 2}
		=
			\lambda_1\left(\XMat\right)
			\left(\frac{\lambda_1\left(\WMat\right)}{\lambda_M\left(\WMat\right)}\right)^\frac{1}{2}
			\norm{\IDMat-\WMat^\frac{1}{2}\XMat^{-1}\WMat^\frac{1}{2}}_{2\rightarrow 2}
		\\\geq&\nonumber
			\lambda_1\left(\XMat\right)
			\left(
				\frac{\lambda_1\left(\XMat\right)-\beta}{\lambda_M\left(\XMat\right)+\beta}
			\right)^\frac{1}{2}
			\norm{\IDMat-\WMat^\frac{1}{2}\XMat^{-1}\WMat^\frac{1}{2}}_{2\rightarrow 2}.
	\end{align}
	and thus
	\begin{align}
		\nonumber
			\abs{\lambda_m\left(\IDMat\right)
				-\lambda_m\left(
					\WMat^\frac{1}{2}\XMat^{-1}\WMat^\frac{1}{2}
				\right)
			}
		\leq
			\norm{\IDMat-\WMat^\frac{1}{2}\XMat^{-1}\WMat^\frac{1}{2}}_{2\rightarrow 2}
		\leq
			\delta_1\left(\frac{\epsilon}{M}\right).
	\end{align}
	By this and the continuity of $\g{}$ around $1$ one gets
	\begin{align}
		\nonumber
			\abs{\sum_{m=1}^M\g{\lambda_m\left(\WMat^\frac{1}{2}\XMat^{-1}\WMat^\frac{1}{2}\right)}-M\g{1}}
		=&
			\abs{\sum_{m=1}^M\g{\lambda_m\left(\WMat^\frac{1}{2}\XMat^{-1}\WMat^\frac{1}{2}\right)}
				-\sum_{m=1}^M\g{\lambda_m\left(\IDMat\right)}
			}
		\\\leq&\nonumber
			\sum_{m=1}^M\abs{\g{\lambda_m\left(\WMat^\frac{1}{2}\XMat^{-1}\WMat^\frac{1}{2}\right)}
				-\sum_{m=1}^M\g{\lambda_m\left(\IDMat\right)}
			}
		\leq
			M\frac{\epsilon}{M}
		\leq
			\epsilon
	\end{align}
	which yields the claim.
\end{proof}
At this point the part about sufficiently nice $g$ in \thref{Theorem:covariance_robust} can be proven.
\begin{Theorem}\label{Theorem:tuple_nice}
	Let the tuple $\left(\g{},g_1,g_2,\delta_1,\delta_2\right)$ be sufficiently nice
	and $\mathcal{H}\subset\mathbb{HPD}^M$ be closed in $\mathbb{HPD}^M$.
	For all $\XMat\in\mathcal{H}$, $0<\beta<\lambda_1\left(\XMat\right)$ and $\eta>0$
	there exists a function $\delta:\left(0,\infty\right)\rightarrow\left(0,\infty\right)$ such that
	the following holds true:
	For every $\epsilon>0$ and $\WMat\in\mathbb{HPD}^M$ with
	$\norm{\WMat-\XMat}_{2\rightarrow 2}\leq\delta\left(\epsilon\right)$,
	any minimizer $\ZMat$ of
	\begin{align}
		&\label{Equation:tuple_nice:optimization}
			\min_{\ZMat\in\mathcal{H}}
			\sum_{m=1}^M\g{\lambda_m\left(\WMat^\frac{1}{2}\ZMat^{-1}\WMat^\frac{1}{2}\right)}
	\end{align}
	obeys $\norm{\XMat-\ZMat}_{2\rightarrow 2}\leq \epsilon$.
	In particular, $\delta$ can be chosen as
	\begin{align}
		\nonumber
			\delta\left(\epsilon\right)
		:=&
			\min\Bigg\{
				\lambda_1\left(\XMat\right)
				\left(
					\frac{\lambda_1\left(\XMat\right)-\beta}{\lambda_M\left(\XMat\right)+\beta}
				\right)^\frac{1}{2}
				\delta_1\left(M^{-1}\delta_2\left(
					\frac{1}{2\lambda_M\left(\XMat\right)}
					\left(
						\frac{\lambda_1\left(\XMat\right)-\beta}{\lambda_M\left(\XMat\right)+\beta}
					\right)^\frac{1}{2}
					\frac{g_1\left(\g{1}+\eta\right)}{g_2\left(\g{1}+\eta\right)}
					\epsilon
				\right)\right),
		\\&\label{Equation:Theorem:tuple_nice:delta_def}
				\frac{1}{2}\epsilon,
				\lambda_1\left(\XMat\right)
				\left(
					\frac{\lambda_1\left(\XMat\right)-\beta}{\lambda_M\left(\XMat\right)+\beta}
				\right)^\frac{1}{2}
				\delta_1\left(\frac{\eta}{M}\right),
				\beta
			\Bigg\}
	\end{align}
\end{Theorem}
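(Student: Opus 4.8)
The plan is to pair optimality with a continuity estimate for the objective. Since $\XMat\in\mathcal{H}$, any minimizer $\ZMat$ of \eqref{Equation:tuple_nice:optimization} obeys $\f{\ZMat,\WMat}\le\f{\XMat,\WMat}$, while \thref{Proposition:obj_cont} shows that $\f{\XMat,\WMat}$ can be forced arbitrarily close to its global floor $M\g{1}$ by shrinking $\norm{\XMat-\WMat}_{2\rightarrow 2}$. Thus a small perturbation forces $\f{\ZMat,\WMat}$ close to $M\g{1}$; since $1$ is the unique global minimizer of $\g{}$ --- a consequence of \refP{Property:Definition:tuple_nice:minimizer} --- each summand $\g{\lambda_m(\WMat^{1/2}\ZMat^{-1}\WMat^{1/2})}$ lies above $\g{1}$, so all of them are close to $\g{1}$, and \refP{Property:Definition:tuple_nice:minimizer} then pins every eigenvalue of $\WMat^{1/2}\ZMat^{-1}\WMat^{1/2}$ near $1$. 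Converting this multiplicative closeness of $\ZMat$ to $\WMat$ into an additive bound, and combining with $\norm{\XMat-\WMat}_{2\rightarrow 2}$ being small, yields $\norm{\XMat-\ZMat}_{2\rightarrow 2}\le\epsilon$; the function $\delta$ in \eqref{Equation:Theorem:tuple_nice:delta_def} merely encodes the precisions needed at each step.

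Concretely, fix $\epsilon>0$, put $\epsilon':=\tfrac{1}{2\lambda_M(\XMat)}\left(\tfrac{\lambda_1(\XMat)-\beta}{\lambda_M(\XMat)+\beta}\right)^{1/2}\tfrac{g_1(\g{1}+\eta)}{g_2(\g{1}+\eta)}\,\epsilon$, and suppose $\norm{\WMat-\XMat}_{2\rightarrow 2}\le\delta(\epsilon)$. The fourth term $\beta$ of the minimum gives $\norm{\WMat-\XMat}_{2\rightarrow 2}\le\beta<\lambda_1(\XMat)$, so \thref{Lemma:compact_neighboorhood} yields $\lambda_M(\WMat)\le\lambda_M(\XMat)+\beta$. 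By inspection the third and fourth terms of \eqref{Equation:Theorem:tuple_nice:delta_def} together equal $\delta_3(\eta)$, where $\delta_3$ is the function from \thref{Proposition:obj_cont}, so $\f{\XMat,\WMat}\le M\g{1}+\eta$ and, by optimality, $\f{\ZMat,\WMat}\le M\g{1}+\eta$; hence $\ZMat$ lies in the level set $\mathcal{G}$ of \thref{Lemma:level_set_compact1} for $\gamma=M\g{1}+\eta$, which forces $\lambda_M(\ZMat)\le\lambda_M(\WMat)/g_1(\g{1}+\eta)\le(\lambda_M(\XMat)+\beta)/g_1(\g{1}+\eta)$. Likewise the first and fourth terms together equal $\delta_3(\delta_2(\epsilon'))$, so \thref{Proposition:obj_cont} also gives $\f{\XMat,\WMat}\le M\g{1}+\delta_2(\epsilon')$ and hence $\f{\ZMat,\WMat}\le M\g{1}+\delta_2(\epsilon')$.

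Next I would extract the sharp eigenvalue control. Because $\g{\lambda_m(\WMat^{1/2}\ZMat^{-1}\WMat^{1/2})}\ge\g{1}$ for all $m$, the bound $\f{\ZMat,\WMat}\le M\g{1}+\delta_2(\epsilon')$ forces $\g{\lambda_m(\WMat^{1/2}\ZMat^{-1}\WMat^{1/2})}-\g{1}\le\delta_2(\epsilon')$ for every $m$, so \refP{Property:Definition:tuple_nice:minimizer} gives $\left|\lambda_m(\WMat^{1/2}\ZMat^{-1}\WMat^{1/2})-1\right|\le\epsilon'$ for all $m$, i.e.\ $(1-\epsilon')\IDMat\preceq\WMat^{1/2}\ZMat^{-1}\WMat^{1/2}\preceq(1+\epsilon')\IDMat$. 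Conjugating by $\WMat^{-1/2}$ and using that inversion reverses the Loewner order on positive matrices turns this into $\tfrac{1}{1+\epsilon'}\WMat\preceq\ZMat\preceq\tfrac{1}{1-\epsilon'}\WMat$ when $\epsilon'<1$, which rearranges to $-\epsilon'\ZMat\preceq\WMat-\ZMat\preceq\epsilon'\ZMat$, whence $\norm{\WMat-\ZMat}_{2\rightarrow 2}\le\epsilon'\lambda_M(\ZMat)\le\epsilon'(\lambda_M(\XMat)+\beta)/g_1(\g{1}+\eta)$. Substituting the definition of $\epsilon'$ and using $(\lambda_M(\XMat)+\beta)(\lambda_1(\XMat)-\beta)\le\lambda_M(\XMat)^2$ together with $g_2(\g{1}+\eta)\ge1$ collapses this to $\norm{\WMat-\ZMat}_{2\rightarrow 2}\le\tfrac12\epsilon$; since the second term of the minimum forces $\norm{\WMat-\XMat}_{2\rightarrow 2}\le\tfrac12\epsilon$, the triangle inequality gives $\norm{\XMat-\ZMat}_{2\rightarrow 2}\le\epsilon$. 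In the degenerate regime $\epsilon'\ge1$ the bound $\ZMat\preceq\tfrac{1}{1-\epsilon'}\WMat$ is unavailable, but $\ZMat\in\mathcal{G}$ still gives the crude Loewner bound $\ZMat\preceq g_1(\g{1}+\eta)^{-1}\WMat$, and because $\epsilon$ is then correspondingly large this still yields $\norm{\WMat-\ZMat}_{2\rightarrow 2}\le(\lambda_M(\XMat)+\beta)/g_1(\g{1}+\eta)\le\tfrac12\epsilon$.

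The one delicate point is the bookkeeping just sketched: the choice of $\epsilon'$ and the nested shape $\delta_1\big(M^{-1}\delta_2(\cdots\epsilon)\big)$ of the first term of the minimum are dictated by two competing constraints --- the eigenvalue deviation $\epsilon'$ of $\WMat^{1/2}\ZMat^{-1}\WMat^{1/2}$, once amplified by the worst admissible $\lambda_M(\ZMat)$ from \thref{Lemma:level_set_compact1} and \thref{Lemma:compact_neighboorhood}, must stay below $\tfrac12\epsilon$, while the precision $\delta_2(\epsilon')$ it demands must be one that \thref{Proposition:obj_cont} can deliver within the perturbation budget $\delta(\epsilon)$ --- and one has to check that the stated $\delta$ reconciles them, including the scalar inequality $(\lambda_M(\XMat)+\beta)(\lambda_1(\XMat)-\beta)\le\lambda_M(\XMat)^2$ and the case $\epsilon'\ge1$. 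Everything else is routine assembly of \thref{Proposition:obj_cont}, \thref{Lemma:level_set_compact1}, \thref{Lemma:compact_neighboorhood}, and \refP{Property:Definition:tuple_nice:minimizer}.
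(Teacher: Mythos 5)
Your proof is correct, and it reaches the same conclusion via a genuinely different bounding strategy, so a comparison is warranted. Both proofs share the skeleton: apply \thref{Proposition:obj_cont} twice (once at precision $\delta_2(\epsilon')$, once at precision $\eta$), use optimality $\f{\ZMat,\WMat}\le\f{\XMat,\WMat}$, extract $\abs{\lambda_m(\WMat^{1/2}\ZMat^{-1}\WMat^{1/2})-1}\le\epsilon'$ from \refP{Property:Definition:tuple_nice:minimizer}, bound $\norm{\WMat-\ZMat}_{2\rightarrow 2}$, and finish by triangle inequality. The divergence is in the last bounding step. The paper writes $\WMat-\ZMat = \ZMat\WMat^{-1/2}(\WMat^{1/2}\ZMat^{-1}\WMat^{1/2}-\IDMat)\WMat^{1/2}$ and submultiplicativity picks up an extra factor $\sqrt{\lambda_M(\WMat)/\lambda_1(\WMat)}$; it then needs both \thref{Lemma:level_set_compact1} and \thref{Lemma:level_set_compact2} to obtain $\lambda_M(\ZMat)\le\lambda_M(\XMat)\,g_2(\g{1}+\eta)/g_1(\g{1}+\eta)$, and the condition-number factor and the $g_2/g_1$ factor then cancel exactly against the constants in $\epsilon'$. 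You instead invert the Loewner sandwich $(1-\epsilon')\IDMat\preceq\WMat^{1/2}\ZMat^{-1}\WMat^{1/2}\preceq(1+\epsilon')\IDMat$ through $\WMat^{-1/2}$-conjugation to get $-\epsilon'\ZMat\preceq\WMat-\ZMat\preceq\epsilon'\ZMat$, which avoids the condition-number factor entirely and lets you skip \thref{Lemma:level_set_compact2}, needing only \thref{Lemma:level_set_compact1} for $\lambda_M(\ZMat)\le\lambda_M(\WMat)/g_1(\g{1}+\eta)$ plus \thref{Lemma:compact_neighboorhood} for $\lambda_M(\WMat)\le\lambda_M(\XMat)+\beta$; the final cancellation is not exact but closes via the elementary inequality $(\lambda_1(\XMat)-\beta)(\lambda_M(\XMat)+\beta)\le\lambda_M(\XMat)^2$ and $g_2\ge 1$. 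The trade-off is that your Loewner bound needs $\epsilon'<1$ (the lower sandwich collapses otherwise), so you must dispose of the degenerate regime $\epsilon'\ge 1$ by the coarser level-set Loewner bound, whereas the paper's factorization route is uniform in $\epsilon'$ and requires no case split. Both routes are rigorous and verify the claim with the stated $\delta$; yours is arguably slightly leaner in lemma dependencies, the paper's is slightly leaner in bookkeeping.
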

\begin{proof}
	Consider the choice of \eqref{Equation:Theorem:tuple_nice:delta_def} and let
	$\delta_3$ be the function from \thref{Proposition:obj_cont} with $\beta$. Then
	\begin{align}
		\nonumber
			\delta\left(\epsilon\right)
		=&
			\min\left\{
				\delta_3\left(\delta_2\left(
					\frac{1}{2\lambda_M\left(\XMat\right)}
					\left(
						\frac{\lambda_1\left(\XMat\right)-\beta}{\lambda_M\left(\XMat\right)+\beta}
					\right)^\frac{1}{2}
					\frac{g_1\left(\g{1}+\eta\right)}{g_2\left(\g{1}+\eta\right)}
					\epsilon
				\right)\right),
				\frac{1}{2}\epsilon,
				\delta_3\left(\eta\right),
				\beta
			\right\}.
	\end{align}
	Now let $\WMat\in\mathbb{HPD}^M$ such that $\norm{\WMat-\XMat}_{2\rightarrow 2}\leq\delta\left(\epsilon\right)$
	and $\ZMat$ be the minimizer of \eqref{Equation:tuple_nice:optimization}.
	The minimization property yields	
	\begin{align}
		\label{Equation:Theorem:tuple_nice:eq1}
			\sum_{m'=1}^M\g{\lambda_{m'}\left(\WMat^\frac{1}{2}\ZMat^{-1}\WMat^\frac{1}{2}\right)}
		=&
			\f{\ZMat,\WMat}
		\leq
			\f{\XMat,\WMat}.
	\end{align}
	By
	\begin{align}
		\nonumber
			\norm{\XMat-\WMat}_{2\rightarrow 2}
		\leq
			\delta\left(\epsilon\right)
		\leq
			\delta_3\left(\delta_2\left(
				\frac{1}{2\lambda_M\left(\XMat\right)}
				\left(
					\frac{\lambda_1\left(\XMat\right)-\beta}{\lambda_M\left(\XMat\right)+\beta}
				\right)^\frac{1}{2}
				\frac{g_1\left(\g{1}+\eta\right)}{g_2\left(\g{1}+\eta\right)}
				\epsilon
			\right)\right)
	\end{align}
	and by the definition of $\delta_3$ it follows that
	\begin{align}
		\label{Equation:Theorem:tuple_nice:eq2}
			\f{\XMat,\WMat}
		\leq
			M\g{1}
			+\delta_2\left(
				\frac{1}{2\lambda_M\left(\XMat\right)}
				\left(
					\frac{\lambda_1\left(\XMat\right)-\beta}{\lambda_M\left(\XMat\right)+\beta}
				\right)^\frac{1}{2}
				\frac{g_1\left(\g{1}+\eta\right)}{g_2\left(\g{1}+\eta\right)}
				\epsilon
			\right).
	\end{align}
	Combining $\g{}\geq \g{1}$ with \eqref{Equation:Theorem:tuple_nice:eq1} and
	\eqref{Equation:Theorem:tuple_nice:eq2}, yields
	\begin{align}
		\nonumber
			\g{\lambda_m\left(\WMat^\frac{1}{2}\ZMat^{-1}\WMat^\frac{1}{2}\right)}
		\leq&
			\sum_{m'=1}^M\g{\lambda_{m'}\left(\WMat^\frac{1}{2}\ZMat^{-1}\WMat^\frac{1}{2}\right)}
			-(M-1)\g{1}
		\\\leq&\nonumber
			\g{1}
			+\delta_2\left(
				\frac{1}{2\lambda_M\left(\XMat\right)}
				\left(
					\frac{\lambda_1\left(\XMat\right)-\beta}{\lambda_M\left(\XMat\right)+\beta}
				\right)^\frac{1}{2}
				\frac{g_1\left(\g{1}+\eta\right)}{g_2\left(\g{1}+\eta\right)}
				\epsilon
			\right)
	\end{align}
	for all $m$. By the definition of $\delta_2$ one has
	\begin{align}
		\nonumber
			\abs{\lambda_m\left(\WMat^\frac{1}{2}\ZMat^{-1}\WMat^\frac{1}{2}\right)-1}
		\leq
			\frac{1}{2\lambda_M\left(\XMat\right)}
			\left(
				\frac{\lambda_1\left(\XMat\right)-\beta}{\lambda_M\left(\XMat\right)+\beta}
			\right)^\frac{1}{2}
			\frac{g_1\left(\g{1}+\eta\right)}{g_2\left(\g{1}+\eta\right)}
			\epsilon
	\end{align}
	for all $m\in\SetOf{M}$.
	Using this yields
	\begin{align}
		\nonumber
			\norm{\WMat-\ZMat}_{2\rightarrow 2}
		=&
			\norm{
				\ZMat\WMat^{-\frac{1}{2}}
				\left(\WMat^\frac{1}{2}\ZMat^{-1}\WMat^\frac{1}{2}-\IDMat\right)
				\WMat^\frac{1}{2}
			}_{2\rightarrow 2}
		\\\leq&\nonumber
			\lambda_M\left(\ZMat\right)
			\lambda_M\left(\WMat^{-\frac{1}{2}}\right)
			\lambda_M\left(\WMat^\frac{1}{2}\right)
			\norm{\WMat^\frac{1}{2}\ZMat^{-1}\WMat^\frac{1}{2}-\IDMat}_{2\rightarrow 2}
		\\=&\nonumber
			\lambda_M\left(\ZMat\right)
			\left(\frac{\lambda_M\left(\WMat\right)}{\lambda_1\left(\WMat\right)}\right)^\frac{1}{2}
			\sup_{m=1,\dots,M}
			\abs{\lambda_m\left(\WMat^\frac{1}{2}\ZMat^{-1}\WMat^\frac{1}{2}-1\right)}
		\\=&\nonumber
			\lambda_M\left(\ZMat\right)
			\left(\frac{\lambda_M\left(\WMat\right)}{\lambda_1\left(\WMat\right)}\right)^\frac{1}{2}
			\sup_{m=1,\dots,M}
			\abs{\lambda_m\left(\WMat^\frac{1}{2}\ZMat^{-1}\WMat^\frac{1}{2}\right)-1}
		\\\leq&\label{Equation:Theorem:tuple_nice:eq3}
			\frac{1}{2}
			\frac{\lambda_M\left(\ZMat\right)}{\lambda_M\left(\XMat\right)}
			\left(
				\frac{\lambda_1\left(\XMat\right)-\beta}{\lambda_M\left(\XMat\right)+\beta}
			\right)^\frac{1}{2}
			\left(\frac{\lambda_M\left(\WMat\right)}{\lambda_1\left(\WMat\right)}\right)^\frac{1}{2}
			\frac{g_1\left(\g{1}+\eta\right)}{g_2\left(\g{1}+\eta\right)}
			\epsilon.
	\end{align}
	Due to $\norm{\XMat-\WMat}_{2\rightarrow 2}\leq\delta\left(\epsilon\right)\leq\beta$
	and \thref{Lemma:compact_neighboorhood}	one gets
	$\lambda_1\left(\XMat\right)-\beta\leq\lambda_m\left(\WMat\right)\leq\lambda_M\left(\XMat\right)+\beta$
	which can be applied to \eqref{Equation:Theorem:tuple_nice:eq3} to yield
	\begin{align}
		\label{Equation:Theorem:tuple_nice:eq4}
			\norm{\WMat-\ZMat}_{2\rightarrow 2}
		\leq&
			\frac{1}{2}
			\frac{\lambda_M\left(\ZMat\right)}{\lambda_M\left(\XMat\right)}
			\frac{g_1\left(\g{1}+\eta\right)}{g_2\left(\g{1}+\eta\right)}
			\epsilon.
	\end{align}
	By the fact that $\ZMat$ is a minimizer, the inequality
	$\norm{\XMat-\WMat}_{2\rightarrow 2}\leq\delta\left(\epsilon\right)\leq\delta_3\left(\eta\right)$
	and the definition of $\delta_3$ one gets
	\begin{align}
		\label{Equation:Theorem:tuple_nice:eq5}
			\f{\ZMat,\WMat}
		\leq	
			\f{\XMat,\WMat}
		\leq
			M\g{1}+\eta.
	\end{align}
	Thus $\ZMat\in\left\{\ZMat'\in\mathcal{H}:\f{\ZMat',\WMat}\leq M\g{1}+\eta\right\}$
	and \thref{Lemma:level_set_compact1} yields
	\begin{align}
		\label{Equation:Theorem:tuple_nice:eq6}
			\frac{\lambda_1\left(\WMat\right)}{g_2\left(\g{1}+\eta\right)}
		\leq
			\lambda_m\left(\ZMat\right)
		\leq
			\frac{\lambda_M\left(\WMat\right)}{g_1\left(\g{1}+\eta\right)}.
	\end{align}
	Due to \eqref{Equation:Theorem:tuple_nice:eq5}
	$\WMat\in\left\{\WMat'\in\mathbb{HPD}^M:\f{\XMat,\WMat'}\leq M\g{1}+\eta\right\}$
	and \thref{Lemma:level_set_compact2} yields
	\begin{align}
		\label{Equation:Theorem:tuple_nice:eq7}
			\lambda_1\left(\XMat\right)g_1\left(\g{1}+\eta\right)
		\leq
			\lambda_m\left(\WMat\right)
		\leq
			\lambda_M\left(\XMat\right)g_2\left(\g{1}+\eta\right).
	\end{align}
	Combining \eqref{Equation:Theorem:tuple_nice:eq6} with \eqref{Equation:Theorem:tuple_nice:eq7}
	yields
	\begin{align}
		\nonumber
			\lambda_1\left(\XMat\right)\frac{g_1\left(\g{1}+\eta\right)}{g_2\left(\g{1}+\eta\right)}
		\leq
			\lambda_m\left(\ZMat\right)
		\leq
			\lambda_M\left(\XMat\right)\frac{g_2\left(\g{1}+\eta\right)}{g_1\left(\g{1}+\eta\right)}.
	\end{align}
	Applying this to \eqref{Equation:Theorem:tuple_nice:eq4} gives
	\begin{align}
		\nonumber
			\norm{\WMat-\ZMat}_{2\rightarrow 2}
		\leq
			\frac{1}{2}\epsilon.
	\end{align}
	Applying this and $\norm{\XMat-\WMat}_{2\rightarrow 2}\leq\delta\left(\epsilon\right)\leq\frac{1}{2}\epsilon$
	yields
	\begin{align}
		\nonumber
			\norm{\XMat-\ZMat}_{2\rightarrow 2}
		\leq&
			\norm{\XMat-\WMat}_{2\rightarrow 2}
			+\norm{\WMat-\ZMat}_{2\rightarrow 2}
		\leq
			\epsilon
	\end{align}
	which completes the proof.
\end{proof}
In general, one wants $\delta$ to be as large as possible.
$\beta$ and $\eta$ should be considered constants so that $\delta$ scales at best
linearly in $\epsilon$ around $0$. However, the first part of the minimimum which includes
$\delta_1,\delta_2$ can create a worse scaling.
	\subsection{Sufficiently Convex Tuples}
In this subsection the part about sufficiently convex $g$ of \thref{Theorem:covariance_robust} is proven.
Before that it is shown that sufficiently convex tuples actually are sufficiently nice.
\begin{Lemma}\label{Lemma:tuple_convex}
	Let $\left(\g{},g_1,g_2,\nu,\epsilon_0\right)$ be sufficiently convex and
	\begin{align}
		\label{Equation:Lemma:tuple_convex:def:delta1delta2}
			\delta_1\left(\epsilon\right)
		:=
			1-g_1\left(\g{1}+\epsilon\right)
		\TextAnd
			\delta_2\left(\epsilon\right)
		:=
			\g{1+\epsilon}-\g{1}.
	\end{align}
	Then, $\left(\g{},g_1,g_2,\delta_1,\delta_2\right)$ is sufficiently nice and
	\begin{align}
		\label{Equation:Lemma:tuple_convex:linear_scaling}
			\nu\epsilon M^{-1}
		\leq
			\delta_1\left(M^{-1}\delta_2\left(\epsilon\right)\right)
		\TextForAll
			\epsilon\in\left(0,\epsilon_0M\right].
	\end{align}
\end{Lemma}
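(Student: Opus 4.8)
The plan is to prove the two assertions in turn. \textbf{Sufficiently nice:} Properties \refP{Property:Definition:tuple_nice:grow} and \refP{Property:Definition:tuple_nice:cont} of \thref{Definition:tuple_nice} coincide with \refP{Property:Definition:tuple_convex:grow} and \refP{Property:Definition:tuple_convex:cont}, and \refP{Property:Definition:tuple_nice:inver} holds because $g_1,g_2$ are by hypothesis the inverses of the strictly monotone restrictions of $\g{}$ to $(0,1]$ and $[1,\infty)$, so that $\{z\in(0,1]:\g{z}\le y\}=[g_1(y),1]$ and $\{z\in[1,\infty):\g{z}\le y\}=[1,g_2(y)]$ for every $y\ge\g{1}$, the common domain being $[\g{1},\infty)$ by continuity and sufficient growth. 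For \refP{Property:Definition:tuple_nice:cont_around_1} I would first note $\delta_1(\epsilon)=1-g_1(\g{1}+\epsilon)\in(0,1)$ since $0<g_1(\g{1}+\epsilon)<g_1(\g{1})=1$; if $\abs{x-1}\le\delta_1(\epsilon)$ then $g_1(\g{1}+\epsilon)\le x\le 2-g_1(\g{1}+\epsilon)$, hence $x>0$ and $\abs{\g{x}-\g{1}}=\g{x}-\g{1}$ because $\g{1}$ is the global minimum by \refP{Property:Definition:tuple_convex:monotonic_g1} and \refP{Property:Definition:tuple_convex:monotonic_g2}. For $x\le 1$ monotonicity on $(0,1]$ gives $\g{x}\le\g{g_1(\g{1}+\epsilon)}=\g{1}+\epsilon$; for $x\ge 1$, writing $c:=\delta_1(\epsilon)\in(0,1)$, monotonicity on $[1,\infty)$ together with \refP{Property:Definition:tuple_convex:inver} gives $\g{x}\le\g{1+c}\le\g{1-c}=\g{g_1(\g{1}+\epsilon)}=\g{1}+\epsilon$. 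For \refP{Property:Definition:tuple_nice:minimizer}, $\delta_2(\epsilon)=\g{1+\epsilon}-\g{1}>0$; if $\g{x}\le\g{1+\epsilon}$ then for $x\ge 1$ monotonicity yields $x\le 1+\epsilon$, while for $x<1$ the hypothesis $x<1-\epsilon$ (possible only if $\epsilon<1$) would force $\g{x}>\g{1-\epsilon}\ge\g{1+\epsilon}$ via \refP{Property:Definition:tuple_convex:inver}, a contradiction, so $\abs{x-1}\le\epsilon$ in every case.

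\textbf{Linear scaling:} I would first record that $\nu\le 1$: by \refP{Property:Definition:tuple_convex:composition_derivative}, $g'(1+s)\ge-\nu g'(1-s)=\nu\abs{g'(1-s)}$ for $s\in(0,\epsilon_0]$, so $s\mapsto(\g{1+s}-\g{1})-\nu(\g{1-s}-\g{1})$ vanishes at $0$ with nonnegative derivative on $(0,\epsilon_0)$, hence (mean value theorem) is $\ge 0$ at $\epsilon_0$; combined with $\g{1+\epsilon_0}\le\g{1-\epsilon_0}$ from \refP{Property:Definition:tuple_convex:inver} and $\g{1-\epsilon_0}-\g{1}>0$ this yields $\nu\le 1$, so that $\tfrac{\nu\epsilon}{M}\le\epsilon_0<1$ whenever $\epsilon\in(0,\epsilon_0 M]$. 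Since $\delta_1(M^{-1}\delta_2(\epsilon))=1-g_1(\g{1}+M^{-1}(\g{1+\epsilon}-\g{1}))$ and $g_1$ is strictly decreasing with $\g{g_1(y)}=y$ on its domain, the target $\delta_1(M^{-1}\delta_2(\epsilon))\ge\tfrac{\nu\epsilon}{M}$ is equivalent to $\g{1+\epsilon}-\g{1}\ge M(\g{1-\tfrac{\nu\epsilon}{M}}-\g{1})$.

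To prove this last inequality, set $t:=\epsilon/M\in(0,\epsilon_0]$. Since $1+t=(1-\tfrac1M)\cdot 1+\tfrac1M(1+\epsilon)$ lies in $[1,\infty)$ between $1$ and $1+\epsilon$, convexity of $\g{}$ there (\refP{Property:Definition:tuple_convex:convex}) gives $M(\g{1+t}-\g{1})\le\g{1+\epsilon}-\g{1}$, so it suffices to show $\g{1+t}\ge\g{1-\nu t}$. For that I would look at $\phi(s):=\g{1+s}-\g{1-\nu s}$ on $[0,t]$, which is continuous, differentiable and vanishes at $0$: convexity on $[1,\infty)$ makes $g'$ non-decreasing there, so $g'(1+s)\ge g'(1+\nu s)$ (as $\nu s\le s$), and $g'(1+\nu s)\ge\nu\abs{g'(1-\nu s)}$ by \refP{Property:Definition:tuple_convex:composition_derivative} applied at $\nu s\in(0,\epsilon_0]$; hence $\phi'(s)=g'(1+s)-\nu\abs{g'(1-\nu s)}\ge 0$ on $(0,t)$ and the mean value theorem gives $\phi(t)\ge 0$, i.e.\ $\g{1+t}\ge\g{1-\nu t}$.

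The step I expect to be the main obstacle is precisely this comparison $\g{1+t}\ge\g{1-\nu t}$: because $\delta_1$ accesses $\g{}$ only through its branch on $(0,1]$ whereas $\delta_2$ is built from its branch on $[1,\infty)$, one is forced to relate the two branches at non-symmetric arguments, and I only see how to do so by using convexity on $[1,\infty)$ and the derivative-ratio hypothesis \refP{Property:Definition:tuple_convex:composition_derivative} together; moreover the preliminary bound $\nu\le 1$ (itself a consequence of \refP{Property:Definition:tuple_convex:inver} and \refP{Property:Definition:tuple_convex:composition_derivative}) is exactly what makes $\nu s\le s$ and $\nu s\le\epsilon_0$ available in that argument.
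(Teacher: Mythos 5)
Your proof is correct, and for the linear-scaling part it follows a genuinely different route than the paper's.

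For the ``sufficiently nice'' claims the two proofs are structurally similar, but you handle \refP{Property:Definition:tuple_nice:cont_around_1} by writing $x\le 1+c$ with $c=\delta_1(\epsilon)$ and invoking \refP{Property:Definition:tuple_convex:inver} to pass from $\g{1+c}$ to $\g{1-c}=\g{1}+\epsilon$, whereas the paper first proves the auxiliary identity $\delta_1(\epsilon)=\min\{1-g_1(\g{1}+\epsilon),\,g_2(\g{1}+\epsilon)-1\}$ (by contradiction, using \refP{Property:Definition:tuple_convex:inver}) and then treats the two branches via $g_1$ and $g_2$ separately. Your version is slightly shorter and never touches $g_2$.

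For \eqref{Equation:Lemma:tuple_convex:linear_scaling} the divergence is more substantial. The paper bounds $\delta_1(M^{-1}\delta_2(\epsilon))$ from below by $h(\epsilon):=1-g_1\bigl(\g{1+M^{-1}\epsilon}\bigr)$ using convexity of $\g{}$ on $[1,\infty)$, then differentiates $h$, rewrites $h'$ via the inverse-function rule as $-M^{-1}g'(1+M^{-1}\epsilon)/g'\bigl(g_1(\g{1+M^{-1}\epsilon})\bigr)$, pushes the argument of the denominator down to $1-M^{-1}\epsilon$ using $g_1$-monotonicity together with monotonicity of $g'$, and finally integrates the ratio bound from \refP{Property:Definition:tuple_convex:composition_derivative}. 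You instead first observe that $\nu\le 1$ (a clean consequence of \refP{Property:Definition:tuple_convex:inver} and \refP{Property:Definition:tuple_convex:composition_derivative} via a mean-value argument, and a fact not stated or used in the paper), reformulate the target through the $g_1$-$\g{}$ inverse pair into $\g{1+\epsilon}-\g{1}\ge M\bigl(\g{1-\nu\epsilon/M}-\g{1}\bigr)$, peel off the factor $M$ by the same convexity argument applied to the point $1+\epsilon/M$, and reduce everything to the one-sided comparison $\g{1+t}\ge\g{1-\nu t}$, which you get from $\phi(s):=\g{1+s}-\g{1-\nu s}$, $\phi(0)=0$, and $\phi'(s)=g'(1+s)-\nu\abs{g'(1-\nu s)}\ge g'(1+\nu s)-\nu\abs{g'(1-\nu s)}\ge 0$. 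A noteworthy feature of your version is that the only monotonicity of $g'$ you ever use is on $[1,\infty)$, which is exactly where \refP{Property:Definition:tuple_convex:convex} gives it to you; the intermediate comparison is kept on the $[1,\infty)$ side by inserting $g'(1+\nu s)$ before applying the derivative-ratio hypothesis. The paper's chain, by contrast, compares $g'$ at two points in $(0,1]$, where monotonicity of $g'$ is not among the listed hypotheses (though it holds for the intended $\g{x}=x-\Ln{x}$). Your approach thus stays more visibly within the stated assumptions, at the modest cost of the extra lemma $\nu\le 1$ and the change of variables $t=\epsilon/M$.
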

\begin{proof}
	\refP{Property:Definition:tuple_nice:grow} and \refP{Property:Definition:tuple_nice:cont}
	are given by assumption.
	Due to the monotonicity the almost inverse functions
	need to be the inverse functions of $\g{}$ on $\left(0,1\right]$
	and $\left[1,\infty\right)$ respectively, and thus \refP{Property:Definition:tuple_nice:inver}
	is fulfilled.
	\par
	By the monotonicity $1$ is the unique minimizer of $\g{}$ and
	$\delta_2\left(\epsilon\right)>0$ is well defined for all $\epsilon>0$.
	In order to show \refP{Property:Definition:tuple_nice:minimizer} let
	$x\in\left(0,\infty\right)$ and $\epsilon>0$ be such that
	$\g{x}-\g{1}\leq\delta_2\left(\epsilon\right)$.
	It follows that $\g{x}\leq\g{1+\epsilon}$ and the monotonicity of $\g{}$ in
	$\left[1,\infty\right)$ yields
	\begin{align}
		\label{Equation:Lemma:tuple_convex:delta_2:upper_bound}
			x
		\leq
			1+\epsilon.
	\end{align}
	If $\epsilon\geq 1$, then $1-\epsilon\leq 0\leq x$ holds true as well.
	Now suppose that $\epsilon<1$. Then,
	$\g{x}\leq\g{1+\epsilon}\leq\g{1-\epsilon}$ and the monotonicity of $\g{}$ in
	$\left(0,1\right]$ yields $1-\epsilon\leq x$.
	Thus, for all $\epsilon>0$ one gets
	\begin{align}
		\nonumber
			1-\epsilon
		\leq
			x.
	\end{align}
	Together with \eqref{Equation:Lemma:tuple_convex:delta_2:upper_bound} this yields
	$\abs{x-1}\leq\epsilon$ and thus \refP{Property:Definition:tuple_nice:minimizer}.
	\par
	For any $\epsilon>0$ assume that
	$\epsilon_1:=1-g_1\left(\g{1}+\epsilon\right)>g_2\left(\g{1}+\epsilon\right)-1=:\epsilon_2$.
	Then,
	\begin{align}
		\nonumber
			\g{1-\epsilon_1}
		=&
			\g{g_1\left(\g{1}+\epsilon\right)}
		=
			\g{1}+\epsilon
		=
			\g{g_2\left(\g{1}+\epsilon\right)}
		=
			\g{1+\epsilon_2}.
	\end{align}
	Using that $\g{}$ is strictly monotonically increasing in $\left[1,\infty\right)$ 
	and $\g{1+\epsilon_1}\leq\g{1-\epsilon_1}$ on this yields
	\begin{align}
		\nonumber
			\g{1-\epsilon_1}
		=&
			\g{1+\epsilon_2}
		<
			\g{1+\epsilon_1}
		\leq
			\g{1-\epsilon_1}
	\end{align}
	which is a contradiction. Hence,
	$1-g_1\left(\g{1}+\epsilon\right)\leq g_2\left(\g{1}+\epsilon\right)-1$ and
	\begin{align}
		\label{Equation:Lemma:tuple_convex:delta1_equal}
			\delta_1\left(\epsilon\right)
		=
			\min\{1-g_1\left(\g{1}+\epsilon\right),
			g_2\left(\g{1}+\epsilon\right)-1\}
	\end{align}
	for all $\epsilon>0$.
	Due to the properties of $g_1,g_2$ one gets that
	$\delta_1\left(\epsilon\right)>0$ is well defined for all $\epsilon>0$.
	In order to show \refP{Property:Definition:tuple_nice:cont_around_1} let
	$x\in\left(0,\infty\right)$ and $\epsilon>0$ be such that
	$\abs{x-1}\leq\delta_1\left(\epsilon\right)$.
	This, together with \eqref{Equation:Lemma:tuple_convex:delta1_equal} yields
	\begin{align}
		\label{Equation:Lemma:tuple_convex:eq1}
			g_1\left(\g{1}+\epsilon\right)
		\leq&
			x
		\leq
			g_2\left(\g{1}+\epsilon\right).
	\end{align}
	If $x\leq 1$, then applying the monotonicity of $f$ to the left hand side of
	\eqref{Equation:Lemma:tuple_convex:eq1} yields $\g{1}+\epsilon\geq \g{x}$.
	If $x\geq 1$, then applying the monotonicity of $f$ to the right hand side of
	\eqref{Equation:Lemma:tuple_convex:eq1} yields $\g{x}\leq\g{1}+\epsilon$.
	In both cases, one gets $\abs{\g{x}-\g{1}}=\g{x}-\g{1}\leq\epsilon$
	so that \refP{Property:Definition:tuple_nice:cont_around_1} is fulfilled.
	It follows that the tuple $\left(\g{},g_1,g_2,\delta_1,\delta_2\right)$
	is sufficiently nice.
	\par
	In order to show \eqref{Equation:Lemma:tuple_convex:linear_scaling} let
	$\epsilon\in\left(0,\epsilon_0M\right]$.
	Note that due to \eqref{Equation:Lemma:tuple_convex:def:delta1delta2} one has
	\begin{align}
		\nonumber
			\delta_1\left(M^{-1}\delta_2\left(\epsilon\right)\right)
		=&
			1-g_1\left(\g{1}+M^{-1}\left(\g{1+\epsilon}-\g{1}\right)\right)
		=
			1-g_1\left(\left(1-M^{-1}\right)\g{1}+M^{-1}\g{1+\epsilon}\right).
	\end{align}
	Applying the convexity of $\g{}$ in $\left[1,\infty\right)$ and the monotonicity of
	$g_1$ to this yields
	\begin{align}
		\label{Equation:Lemma:tuple_convex:def_h}
			\delta_1\left(M^{-1}\delta_2\left(\epsilon\right)\right)
		\geq&
			1-g_1\left(\g{\left(1-M^{-1}\right)+M^{-1}\left(1+\epsilon\right)}\right)
		=
			1-g_1\left(\g{1+M^{-1}\epsilon}\right)
		=:
			h\left(\epsilon\right).
	\end{align}
	Applying basic arithmetic and that $g'$ is non-vanishing yields that $h$ is differentiable with
	\begin{align}
		\label{Equation:Lemma:tuple_convex:h_deriv_equal}
			h'\left(\epsilon\right)
		=&
			-g_1'\left(\g{1+M^{-1}\epsilon}\right)g'\left(1+M^{-1}\epsilon\right)M^{-1}
		=
			-M^{-1}\frac{g'\left(1+M^{-1}\epsilon\right)}{
				g'\left(g_1\left(\g{1+M^{-1}\epsilon}\right)\right)
			}.
	\end{align}
	Since $\g{}$ is convex on $\left[1,\infty\right)$, $g'$ is non-decreasing on $\left[1,\infty\right)$.
	Applying $\g{1+M^{-1}\epsilon}\leq\g{1-M^{-1}\epsilon}$ from the assumption with
	the monotonicity of $g_1$ and $g'$ to \eqref{Equation:Lemma:tuple_convex:h_deriv_equal}
	yields
	\begin{align}
		\label{Equation:Lemma:tuple_convex:h_deriv_inequal}
			h'\left(\epsilon\right)
		\geq&
			-M^{-1}\frac{g'\left(1+M^{-1}\epsilon\right)}{
				g'\left(g_1\left(\g{1-M^{-1}\epsilon}\right)\right)
			}
		=
			-M^{-1}\frac{g'\left(1+M^{-1}\epsilon\right)}{g'\left(1-M^{-1}\epsilon\right)}
	\end{align}
	for all $\epsilon\in\left(0,M\right)$.
	Using \eqref{Equation:Lemma:tuple_convex:def_h}, \eqref{Equation:Lemma:tuple_convex:h_deriv_inequal}
	and \refP{Property:Definition:tuple_convex:composition_derivative} from the assumption yields
	\begin{align}
		\nonumber
			\delta_1\left(M^{-1}\delta_2\left(\epsilon\right)\right)
		\geq&
			h\left(\epsilon\right)
		=
			h\left(0\right)+\int_{0}^\epsilon h'\left(x\right)dx
		=
			\int_{0}^\epsilon h'\left(x\right)dx
		\\\geq&\nonumber
			\int_{0}^\epsilon -M^{-1}\frac{g'\left(1+M^{-1}x\right)}{g'\left(1-M^{-1}x\right)}dx
		=
			\int_{0}^{M^{-1}\epsilon}
			-\frac{g'\left(1+x\right)}{g'\left(1-x\right)}dx
		\geq
			\nu\epsilon M^{-1}
	\end{align}
	for all $\epsilon\in\left(0,\epsilon_0 M\right]\subset\left(0,M\right)$.
\end{proof}
With this it is straightforward to prove that sufficiently convex tuples
also generate robust estimators.
\begin{Theorem}\label{Theorem:tuple_convex}
	Let the tuple $\left(\g{},g_1,g_2,\nu,\epsilon_0\right)$ be sufficiently convex
	and $\mathcal{H}\subset\mathbb{HPD}^M$ be closed in $\mathbb{HPD}^M$.
	For all $\XMat\in\mathcal{H}$, $0<\beta<\lambda_1\left(\XMat\right)$ and $\eta>0$
	there exists a function $\delta:\left(0,\infty\right)\rightarrow\left(0,\infty\right)$ such that
	the following holds true:
	For every $\epsilon>0$ and $\WMat\in\mathbb{HPD}^M$ with
	$\norm{\WMat-\XMat}_{2\rightarrow 2}\leq\delta\left(\epsilon\right)$,
	any minimizer $\ZMat$ of
	\begin{align}
		&\label{Equation:Theorem:tuple_convex:optimization}
			\min_{\ZMat\in\mathcal{H}}
			\sum_{m=1}^M\g{\lambda_m\left(\WMat^\frac{1}{2}\ZMat^{-1}\WMat^\frac{1}{2}\right)}
	\end{align}
	obeys $\norm{\XMat-\ZMat}_{2\rightarrow 2}\leq \epsilon$.
	In particular, $\delta$ can be chosen as
	\begin{align}
		\nonumber
			\delta_{c}\left(\epsilon\right)
		:=&
			\min\Bigg\{
				\frac{\nu M^{-1}}{2}
				\frac{\lambda_1\left(\XMat\right)}{\lambda_M\left(\XMat\right)}
				\left(
					\frac{\lambda_1\left(\XMat\right)-\beta}{\lambda_M\left(\XMat\right)+\beta}
				\right)
				\frac{g_1\left(\g{1}+\eta\right)}{g_2\left(\g{1}+\eta\right)}
				\epsilon,
				\frac{1}{2}\epsilon,
		\\&\nonumber
				\epsilon_0M\lambda_M\left(\XMat\right)
				\frac{g_2\left(\g{1}+\eta\right)}{g_1\left(\g{1}+\eta\right)}
				\left(
					\frac{\lambda_M\left(\XMat\right)+\beta}{\lambda_1\left(\XMat\right)-\beta}
				\right)^\frac{1}{2},
		\\&\nonumber
				\lambda_1\left(\XMat\right)
				\left(
					\frac{\lambda_1\left(\XMat\right)-\beta}{\lambda_M\left(\XMat\right)+\beta}
				\right)^\frac{1}{2}
				\left(1-g_1\left(\g{1}+\frac{\eta}{M}\right)\right),
		\\&\label{Equation:Theorem:tuple_convex:delta_def}
				\lambda_1\left(\XMat\right)
				\left(
					\frac{\lambda_1\left(\XMat\right)-\beta}{\lambda_M\left(\XMat\right)+\beta}
				\right)^\frac{1}{2}
				\left(1-g_1\left(\g{1+\epsilon_0}\right)\right),
				\beta
			\Bigg\}.
	\end{align}
\end{Theorem}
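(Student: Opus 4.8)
The plan is to deduce the statement from \thref{Theorem:tuple_nice} by way of \thref{Lemma:tuple_convex}. First I would apply \thref{Lemma:tuple_convex} with $\delta_1(\epsilon):=1-g_1(\g{1}+\epsilon)$ and $\delta_2(\epsilon):=\g{1+\epsilon}-\g{1}$; this shows $(\g{},g_1,g_2,\delta_1,\delta_2)$ is sufficiently nice, the $g_1,g_2$ of the convex tuple being exactly the almost inverse functions required by \thref{Definition:tuple_nice} because the strict monotonicity of $\g{}$ forces those almost inverses to coincide with the genuine inverses. Applying \thref{Theorem:tuple_nice} with the same $\XMat$, $\beta$ and $\eta$ then produces a function $\delta_n$ — the one given by \eqref{Equation:Theorem:tuple_nice:delta_def} for this $\delta_1,\delta_2$ — such that $\norm{\WMat-\XMat}_{2\rightarrow 2}\leq\delta_n(\epsilon)$ forces every minimizer $\ZMat$ of \eqref{Equation:Theorem:tuple_convex:optimization} to satisfy $\norm{\XMat-\ZMat}_{2\rightarrow 2}\leq\epsilon$. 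Each of the six terms in the minimum defining $\delta_c$ is strictly positive (using $\beta<\lambda_1(\XMat)$ and that $g_1<1$ strictly above $\g{1}$), so $\delta_c$ maps into $(0,\infty)$; it therefore suffices to establish $\delta_c(\epsilon)\leq\delta_n(\epsilon)$ for every $\epsilon>0$, since then $\norm{\WMat-\XMat}_{2\rightarrow 2}\leq\delta_c(\epsilon)$ implies $\norm{\WMat-\XMat}_{2\rightarrow 2}\leq\delta_n(\epsilon)$ and the desired conclusion transfers verbatim.

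For the comparison, write $A:=\lambda_1(\XMat)\left(\frac{\lambda_1(\XMat)-\beta}{\lambda_M(\XMat)+\beta}\right)^{1/2}$ and $c:=\frac{1}{2\lambda_M(\XMat)}\left(\frac{\lambda_1(\XMat)-\beta}{\lambda_M(\XMat)+\beta}\right)^{1/2}\frac{g_1(\g{1}+\eta)}{g_2(\g{1}+\eta)}$, so that $\delta_n(\epsilon)=\min\{A\delta_1(M^{-1}\delta_2(c\epsilon)),\,\tfrac12\epsilon,\,A\delta_1(\eta/M),\,\beta\}$. Three of the four terms of $\delta_n$ are matched directly by terms of $\delta_c$: the term $\tfrac12\epsilon$ and the term $\beta$ occur in $\delta_c$ as well, and $A\delta_1(\eta/M)=A(1-g_1(\g{1}+\eta/M))$ is the fourth term of $\delta_c$. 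The key inequality is thus the claim that $A\delta_1(M^{-1}\delta_2(c\epsilon))\geq\min\{A\nu M^{-1}c\epsilon,\,A(1-g_1(\g{1+\epsilon_0}))\}$ for every $\epsilon>0$, where the two quantities on the right are respectively the first and the fifth term of $\delta_c$. Granting this claim, $\delta_n$ is bounded below by a minimum of five quantities, each of which is one of the six terms of $\delta_c$, hence $\delta_n\geq\delta_c$, since a minimum over more terms is no larger than a minimum over fewer.

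To prove the claim I would split on whether $c\epsilon\leq\epsilon_0M$ or $c\epsilon>\epsilon_0M$. In the first case the linear estimate \eqref{Equation:Lemma:tuple_convex:linear_scaling} of \thref{Lemma:tuple_convex} applies to $\epsilon':=c\epsilon\in(0,\epsilon_0M]$ and gives $\delta_1(M^{-1}\delta_2(c\epsilon))\geq\nu c\epsilon M^{-1}$, so $A\delta_1(M^{-1}\delta_2(c\epsilon))\geq A\nu M^{-1}c\epsilon$. In the second case I would first use that $t\mapsto\delta_1(M^{-1}\delta_2(t))$ is non-decreasing — since $\delta_2$ is increasing and $t\mapsto1-g_1(\g{1}+t)$ is increasing — to obtain $\delta_1(M^{-1}\delta_2(c\epsilon))\geq\delta_1(M^{-1}\delta_2(\epsilon_0M))$; then convexity of $\g{}$ on $[1,\infty)$, together with $\epsilon_0M\geq\epsilon_0$, makes the secant slope of $\g{}$ from $1$ to $1+\epsilon_0M$ at least that from $1$ to $1+\epsilon_0$, i.e. $M^{-1}\delta_2(\epsilon_0M)\geq\delta_2(\epsilon_0)$, whence $\delta_1(M^{-1}\delta_2(c\epsilon))\geq\delta_1(\delta_2(\epsilon_0))=1-g_1(\g{1+\epsilon_0})$ and therefore $A\delta_1(M^{-1}\delta_2(c\epsilon))\geq A(1-g_1(\g{1+\epsilon_0}))$. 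This proves the claim and hence $\delta_c\leq\delta_n$, so $\delta=\delta_c$ works. Finally, for $\epsilon$ in a small enough neighborhood of $0$ only the first term of $\delta_c$ and the term $\tfrac12\epsilon$ — both linear in $\epsilon$ — are active, the other four terms being fixed positive constants, so $\delta_c$ is linear near $0$, which also yields the closing sentence of \thref{Theorem:covariance_robust}. I expect the step in the regime $c\epsilon>\epsilon_0M$ to be the main obstacle: because \eqref{Equation:Lemma:tuple_convex:linear_scaling} is available only on $(0,\epsilon_0M]$, one must argue via monotonicity and convexity of $\g{}$ that the large-argument behaviour of $\delta_1\circ(M^{-1}\delta_2)$ stays bounded below by the constant $A(1-g_1(\g{1+\epsilon_0}))$, which is precisely why that constant appears in the formula for $\delta_c$.
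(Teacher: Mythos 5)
Your proof is correct and follows the paper's own route: reduce to \thref{Theorem:tuple_nice} via \thref{Lemma:tuple_convex}, split on $c\epsilon\leq\epsilon_0 M$ versus $c\epsilon>\epsilon_0 M$, and in the large-argument regime combine monotonicity of $\delta_1\circ(M^{-1}\delta_2)$ with convexity of $\g{}$ to land on the constant $A(1-g_1(\g{1+\epsilon_0}))$. The one place you diverge is a small simplification: in the regime $c\epsilon>\epsilon_0 M$ the paper additionally lower-bounds $\tfrac12\epsilon$ by the third term of $\delta_c$ using the case condition before comparing, whereas you observe directly that $\delta_n$ is a minimum over five of the six constituents of $\delta_c$ and conclude at once; as a by-product your argument shows the third term of $\delta_c$ is actually superfluous and could be dropped from \eqref{Equation:Theorem:tuple_convex:delta_def} without weakening the theorem.
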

\begin{proof}
	Due to \thref{Lemma:tuple_convex} the tuple $\left(f,\delta_1,\delta_2,g_1,g_2\right)$
	with $\delta_1,\delta_2$ from \eqref{Equation:Lemma:tuple_convex:def:delta1delta2}
	is sufficiently nice so that one can apply \thref{Theorem:tuple_nice}.
	Let $\delta$ be from \eqref{Equation:Theorem:tuple_nice:delta_def}.
	It remains to show that $\delta_{c}\left(\epsilon\right)\leq\delta\left(\epsilon\right)$
	where $\delta_{c}$ is from \eqref{Equation:Theorem:tuple_convex:delta_def}
	since then applying \thref{Theorem:tuple_nice} yields the proof.
	\par
	Plugging in \eqref{Equation:Lemma:tuple_convex:def:delta1delta2} into
	\eqref{Equation:Theorem:tuple_nice:delta_def} yields
	\begin{align}
		\nonumber
			\delta\left(\epsilon\right)
		=&
			\min\Bigg\{
				\lambda_1\left(\XMat\right)
				\left(
					\frac{\lambda_1\left(\XMat\right)-\beta}{\lambda_M\left(\XMat\right)+\beta}
				\right)^\frac{1}{2}
				\delta_1\left(M^{-1}\delta_2\left(
					\frac{1}{2\lambda_M\left(\XMat\right)}
					\left(
						\frac{\lambda_1\left(\XMat\right)-\beta}{\lambda_M\left(\XMat\right)+\beta}
					\right)^\frac{1}{2}
					\frac{g_1\left(\g{1}+\eta\right)}{g_2\left(\g{1}+\eta\right)}
					\epsilon
				\right)\right),
				\frac{1}{2}\epsilon,
		\\&\label{Equation:Theorem:tuple_convex:eq1}
				\lambda_1\left(\XMat\right)
				\left(
					\frac{\lambda_1\left(\XMat\right)-\beta}{\lambda_M\left(\XMat\right)+\beta}
				\right)^\frac{1}{2}
				\left(1-g_1\left(\g{1}+\frac{\eta}{M}\right)\right),
				\beta
			\Bigg\}.
	\end{align}
	Now let
	\begin{align}
		\nonumber
			\epsilon
		\leq
			2\epsilon_0M\lambda_M\left(\XMat\right)
			\left(
				\frac{\lambda_M\left(\XMat\right)+\beta}{\lambda_1\left(\XMat\right)-\beta}
			\right)^\frac{1}{2}
			\frac{g_2\left(\g{1}+\eta\right)}{g_1\left(\g{1}+\eta\right)}.
	\end{align}
	It follows that
	\begin{align}
		\nonumber
			\frac{1}{2\lambda_M\left(\XMat\right)}
			\left(
				\frac{\lambda_1\left(\XMat\right)-\beta}{\lambda_M\left(\XMat\right)+\beta}
			\right)^\frac{1}{2}
			\frac{g_1\left(\g{1}+\eta\right)}{g_2\left(\g{1}+\eta\right)}
			\epsilon
		\leq
			M\epsilon_0.	
	\end{align}
	Thus, one can apply \eqref{Equation:Lemma:tuple_convex:linear_scaling}
	to \eqref{Equation:Theorem:tuple_convex:eq1} which yields
	\begin{align}
		\nonumber
			\delta\left(\epsilon\right)
		\geq&
			\min\Bigg\{
				\frac{\nu M^{-1}}{2}
				\frac{\lambda_1\left(\XMat\right)}{\lambda_M\left(\XMat\right)}
				\left(
					\frac{\lambda_1\left(\XMat\right)-\beta}{\lambda_M\left(\XMat\right)+\beta}
				\right)
				\frac{g_1\left(\g{1}+\eta\right)}{g_2\left(\g{1}+\eta\right)}
				\epsilon,
				\frac{1}{2}\epsilon,
		\\&\nonumber
				\lambda_1\left(\XMat\right)
				\left(
					\frac{\lambda_1\left(\XMat\right)-\beta}{\lambda_M\left(\XMat\right)+\beta}
				\right)^\frac{1}{2}
				\left(1-g_1\left(\g{1}+\frac{\eta}{M}\right)\right),
				\beta
			\Bigg\}
		\\\geq&\nonumber
			\delta_{c}\left(\epsilon\right).
	\end{align}
	On the other hand, assume
	\begin{align}
		\label{Equation:Theorem:tuple_convex:eq2}
			\epsilon
		>
			2\epsilon_0M\lambda_M\left(\XMat\right)
			\left(
				\frac{\lambda_M\left(\XMat\right)+\beta}{\lambda_1\left(\XMat\right)-\beta}
			\right)^\frac{1}{2}
			\frac{g_2\left(\g{1}+\eta\right)}{g_1\left(\g{1}+\eta\right)}.
	\end{align}
	Applying \eqref{Equation:Lemma:tuple_convex:def:delta1delta2} yields
	\begin{align}
		&\nonumber
			\delta_1\left(M^{-1}\delta_2\left(
				\frac{1}{2\lambda_M\left(\XMat\right)}
				\left(
					\frac{\lambda_1\left(\XMat\right)-\beta}{\lambda_M\left(\XMat\right)+\beta}
				\right)^\frac{1}{2}
				\frac{g_1\left(\g{1}+\eta\right)}{g_2\left(\g{1}+\eta\right)}
				\epsilon
			\right)\right)
		\\=&\nonumber
			\delta_1\left(M^{-1}\left(\g{1+
				\frac{1}{2\lambda_M\left(\XMat\right)}
				\left(
					\frac{\lambda_1\left(\XMat\right)-\beta}{\lambda_M\left(\XMat\right)+\beta}
				\right)^\frac{1}{2}
				\frac{g_1\left(\g{1}+\eta\right)}{g_2\left(\g{1}+\eta\right)}
				\epsilon
			}-\g{1}\right)\right)
		\\=&\nonumber
			1-g_1\left(\g{1}+M^{-1}\left(\g{1+
				\frac{1}{2\lambda_M\left(\XMat\right)}
				\left(
					\frac{\lambda_1\left(\XMat\right)-\beta}{\lambda_M\left(\XMat\right)+\beta}
				\right)^\frac{1}{2}
				\frac{g_1\left(\g{1}+\eta\right)}{g_2\left(\g{1}+\eta\right)}
				\epsilon
			}-\g{1}\right)\right)
		\\=&\nonumber
			1-g_1\left(\left(1-M^{-1}\right)\g{1}+M^{-1}\g{1+
				\frac{1}{2\lambda_M\left(\XMat\right)}
				\left(
					\frac{\lambda_1\left(\XMat\right)-\beta}{\lambda_M\left(\XMat\right)+\beta}
				\right)^\frac{1}{2}
				\frac{g_1\left(\g{1}+\eta\right)}{g_2\left(\g{1}+\eta\right)}
				\epsilon
			}\right).
	\end{align}
	Plugging \eqref{Equation:Theorem:tuple_convex:eq2} with
	the monotonicity of $\g{}$ in $\left[1,\infty\right)$
	and the monotonicity of $g_1$ into this gives
	\begin{align}
		&\nonumber
			\delta_1\left(M^{-1}\delta_2\left(
				\frac{1}{2\lambda_M\left(\XMat\right)}
				\left(
					\frac{\lambda_1\left(\XMat\right)-\beta}{\lambda_M\left(\XMat\right)+\beta}
				\right)^\frac{1}{2}
				\frac{g_1\left(\g{1}+\eta\right)}{g_2\left(\g{1}+\eta\right)}
				\epsilon
			\right)\right)
		\\>&\nonumber
			1-g_1\left(\left(1-M^{-1}\right)\g{1}+M^{-1}\g{1+\epsilon_0 M}\right).
	\end{align}
	Applying the convexity of $\g{}$ in $\left[1,\infty\right)$
	and the monotonicity of $g_1$ to this yields
	\begin{align}
		&\nonumber
			\delta_1\left(M^{-1}\delta_2\left(
				\frac{1}{2\lambda_M\left(\XMat\right)}
				\left(
					\frac{\lambda_1\left(\XMat\right)-\beta}{\lambda_M\left(\XMat\right)+\beta}
				\right)^\frac{1}{2}
				\frac{g_1\left(\g{1}+\eta\right)}{g_2\left(\g{1}+\eta\right)}
				\epsilon
			\right)\right)
		\\>&\nonumber
			1-g_1\left(\g{\left(1-M^{-1}\right)+M^{-1}\left(1+\epsilon_0 M\right)}\right)
		\\=&\nonumber
			1-g_1\left(\g{1+\epsilon_0}\right).
	\end{align}
	Plugging this and \eqref{Equation:Theorem:tuple_convex:eq2} into
	\eqref{Equation:Theorem:tuple_convex:eq1} results in
	\begin{align}
		\nonumber
			\delta\left(\epsilon\right)
		\geq&
			\min\Bigg\{
				\lambda_1\left(\XMat\right)
				\left(
					\frac{\lambda_1\left(\XMat\right)-\beta}{\lambda_M\left(\XMat\right)+\beta}
				\right)^\frac{1}{2}
				\left(1-g_1\left(\g{1+\epsilon_0}\right)\right),
				\frac{1}{2}\epsilon,
		\\&\nonumber
				\lambda_1\left(\XMat\right)
				\left(
					\frac{\lambda_1\left(\XMat\right)-\beta}{\lambda_M\left(\XMat\right)+\beta}
				\right)^\frac{1}{2}
				\left(1-g_1\left(\g{1}+\frac{\eta}{M}\right)\right),
				\beta
			\Bigg\}
		\\\geq&\nonumber
			\min\Bigg\{
				\lambda_1\left(\XMat\right)
				\left(
					\frac{\lambda_1\left(\XMat\right)-\beta}{\lambda_M\left(\XMat\right)+\beta}
				\right)^\frac{1}{2}
				\left(1-g_1\left(\g{1+\epsilon_0}\right)\right),
				\epsilon_0M\lambda_M\left(\XMat\right)
				\left(
					\frac{\lambda_M\left(\XMat\right)+\beta}{\lambda_1\left(\XMat\right)-\beta}
				\right)^\frac{1}{2}
				\frac{g_2\left(\g{1}+\eta\right)}{g_1\left(\g{1}+\eta\right)},
		\\&\nonumber
				\lambda_1\left(\XMat\right)
				\left(
					\frac{\lambda_1\left(\XMat\right)-\beta}{\lambda_M\left(\XMat\right)+\beta}
				\right)^\frac{1}{2}
				\left(1-g_1\left(\g{1}+\frac{\eta}{M}\right)\right),
				\beta
			\Bigg\}
		\\\geq&\nonumber
			\delta_{c}\left(\epsilon\right).
	\end{align}
	So in any case
	$\delta\left(\epsilon\right)\geq\delta_{c}\left(\epsilon\right)$
	which finishes the proof.
\end{proof}
The main advantage compared to \thref{Theorem:tuple_nice} is that $\delta_c$ is piecewise linear
and linear in a neighborhood around $0$ unlike $\delta$ from \thref{Theorem:tuple_nice}.
Thus, $\delta_c$ can be inverted for all $\epsilon$ small enough,
and its inverse is a linear function, here called
$\delta_c^{-1}\left(\cdot\right)=D\cdot$.
It follows that for all $\norm{\WMat-\XMat}_{2\rightarrow 2}$ small enough
one can choose $\epsilon:=\delta_c^{-1}\left(\norm{\XMat-\WMat}_{2\rightarrow 2}\right)$
and the estimation error satisfies
\begin{align}
	\label{Equation:tuple_convex:homogeinity}
		\norm{\XMat-\ZMat}_{2\rightarrow 2}
	\leq
		\delta_c^{-1}\left(\norm{\XMat-\WMat}_{2\rightarrow 2}\right)
	=
		D\norm{\XMat-\WMat}_{2\rightarrow 2}.
\end{align}
In this error bound the estimation error $\norm{\XMat-\ZMat}_{2\rightarrow 2}$
is linear in the magnitude of the perturbation $\norm{\XMat-\ZMat}_{2\rightarrow 2}$.
This property is similar to error bounds of other homogeneous estimators,
like for instance the result \cite[Theorem~1]{low_rank} for the non-negative least squares in
low-rank matrix estimation or the result \cite{sparse_psd} for the non-negative least squares with
measurement operators as in this work.
However, the homogeneity derived in \eqref{Equation:tuple_convex:homogeinity} only holds as long as the magnitude of the perturbation is
sufficiently small already.
	\section{Proof of Theorem \ref{Theorem:ad_errors}: Application to Activity Detection}
\noindent
\subsection{Trace-Log-Det Covariance Estimation}
In this subsection it is shown that the trace-log-det covariance estimator can be rewritten
as a covariance estimator of a sufficiently convex tuple.
To describe the inverse functions $g_1,g_2$, the two branches of the Lambert $W$ function given by
$W_{0}:\left[-\Exp{-1},\infty\right)\rightarrow\left[-1,\infty\right)$ and
$W_{-1}:\left[-\Exp{-1},0\right)\rightarrow\left(-\infty,-1\right]$
as introduced in \cite{lambertWfunction} are required.
These functions obey $W_i\left(y\right)\Exp{W_i\left(y\right)}=y$ for all
$y$ in their respective domains and are the inverse of the function $x\mapsto x\Exp{x}$
on the corresponding domain of definition respectively.
Further, $W_{0}\left(0\right)=0$.
Due to
$
		\frac{\Ln{4}}{4}
	=
		\frac{\Ln{2^2}}{4}
	=
		\frac{\Ln{2}}{2}
$
one gets
\begin{align}
	\nonumber
		t
	:=
		\left(-\Ln{2}\right)\Exp{-\Ln{2}}
	=
		-\frac{\Ln{2}}{2}
	=
		-\frac{\Ln{4}}{4}
	=
		\left(-\Ln{4}\right)\Exp{-\Ln{4}}.
\end{align}
Hence, $-\Ln{4}$ and $-\Ln{2}$ are the two solutions of $s\Exp{s}=t$.
Since $-\Ln{4}\leq -1$, it must correspond to the branch $W_{-1}$, and it follows that
\begin{align}
	\label{Equation:W_0:value}
		W_0\left(-\frac{\Ln{4}}{4}\right)
	=
		W_0\left(t\right)
	=
		-\Ln{2}.
\end{align}
At first, it is established that the trace-log-det covariance estimator is generated by a sufficiently convex tuple.
\begin{Lemma}\label{Lemma:trace_logdet_tuple}
	Let $\g{x}:=x-\Ln{x}$ for all $x\in\left(0,\infty\right)$,
	\begin{align}
		\label{Equation:Lemma:trace_logdet_tuple:g1g2}
			g_1\left(y\right)
		=
			-W_{0}\left(-\Exp{-y}\right)
		\TextAnd
			g_2\left(y\right)
		=
			-W_{-1}\left(-\Exp{-y}\right)
		\TextForAll
			y\in\left[1,\infty\right),
	\end{align}
	$\nu:=\frac{1-\Ln{2}}{\Ln{2}}$
	and
	$\epsilon_0:=\Ln{4}-1$.
	Then, the tuple $\left(\g{},g_1,g_2,\nu,\epsilon_0\right)$ is sufficiently convex
	and
	\begin{align}
		\label{Equation:Lemma:trace_logdet_tuple:trace_logdet=sumg}
			\Trace{\ZMat^{-1}\WMat}
			+\Ln{\Det{\ZMat}}
		=
			\sum_{m=1}^M
			\g{\lambda_m\left(
				\WMat^\frac{1}{2}
				\ZMat^{-1}
				\WMat^\frac{1}{2}
			\right)}
			+\Ln{\Det{\WMat}}
	\end{align}
	for all $\WMat,\ZMat\in\mathbb{HPD}^M$.
\end{Lemma}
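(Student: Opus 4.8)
The plan is to treat the two assertions separately: verify the seven conditions of \thref{Definition:tuple_convex} for the tuple $\left(\g{},g_1,g_2,\nu,\epsilon_0\right)$, and then establish the identity \eqref{Equation:Lemma:trace_logdet_tuple:trace_logdet=sumg} by a similarity argument. I would start from $g'\left(x\right)=1-x^{-1}=\frac{x-1}{x}$ and $g''\left(x\right)=x^{-2}>0$. These immediately give \refP{Property:Definition:tuple_convex:grow} (as $-\Ln{x}\to\infty$ when $x\to 0$ and $x$ dominates when $x\to\infty$), \refP{Property:Definition:tuple_convex:cont}, \refP{Property:Definition:tuple_convex:convex}, and the differentiability together with the fact that $g'$ vanishes only at $1$ required in \refP{Property:Definition:tuple_convex:composition_derivative}; the sign of $g'$ gives strict monotonicity on $\left(0,1\right]$ and on $\left[1,\infty\right)$, with $\g{1}=1$. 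To see that \eqref{Equation:Lemma:trace_logdet_tuple:g1g2} furnishes the inverse functions, solve $y=x-\Ln{x}$: exponentiating $-y=-x+\Ln{x}$ yields $-\Exp{-y}=\left(-x\right)\Exp{-x}$, so $-x=W_i\left(-\Exp{-y}\right)$ for the branch matching the range of $-x$. For $x\in\left(0,1\right]$ one has $-x\in\left[-1,0\right)$, the range of $W_0$; for $x\in\left[1,\infty\right)$ one has $-x\in\left(-\infty,-1\right]$, the range of $W_{-1}$; and $y\geq\g{1}=1$ forces $-\Exp{-y}\in\left[-\Exp{-1},0\right)$, which lies in the domain of both branches. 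This settles \refP{Property:Definition:tuple_convex:monotonic_g1} and \refP{Property:Definition:tuple_convex:monotonic_g2}.

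It then remains to check \refP{Property:Definition:tuple_convex:inver} and the ratio bound in \refP{Property:Definition:tuple_convex:composition_derivative}. For the former, I would set $h\left(\epsilon\right):=\g{1-\epsilon}-\g{1+\epsilon}=\Ln{\frac{1+\epsilon}{1-\epsilon}}-2\epsilon$ on $\left(0,1\right)$ and compute $h'\left(\epsilon\right)=\frac{1}{1+\epsilon}+\frac{1}{1-\epsilon}-2=\frac{2\epsilon^2}{1-\epsilon^2}\geq 0$, so $h\geq h\left(0\right)=0$. For the latter, $g'\left(1+\epsilon\right)=\frac{\epsilon}{1+\epsilon}$ and $g'\left(1-\epsilon\right)=\frac{-\epsilon}{1-\epsilon}$ give $-\frac{g'\left(1+\epsilon\right)}{g'\left(1-\epsilon\right)}=\frac{1-\epsilon}{1+\epsilon}$, which is strictly decreasing in $\epsilon$; hence on $\left(0,\epsilon_0\right]$ with $\epsilon_0=\Ln{4}-1\in\left(0,1\right)$ it is bounded below by its value at $\epsilon_0$, namely $\frac{2-\Ln{4}}{\Ln{4}}=\frac{1-\Ln{2}}{\Ln{2}}=\nu$. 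Thus the tuple is sufficiently convex. The evaluation \eqref{Equation:W_0:value} is not needed at this stage; it is recorded for later use with \thref{Theorem:tuple_convex}.

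For \eqref{Equation:Lemma:trace_logdet_tuple:trace_logdet=sumg}, the key observation is that $\WMat^\frac{1}{2}\ZMat^{-1}\WMat^\frac{1}{2}=\WMat^\frac{1}{2}\left(\ZMat^{-1}\WMat\right)\WMat^{-\frac{1}{2}}$ is similar to $\ZMat^{-1}\WMat$, so both matrices share the eigenvalues $\mu_m:=\lambda_m\left(\WMat^\frac{1}{2}\ZMat^{-1}\WMat^\frac{1}{2}\right)>0$. Then $\sum_m\mu_m=\Trace{\ZMat^{-1}\WMat}$ and $\prod_m\mu_m=\Det{\ZMat^{-1}\WMat}=\Det{\WMat}\Det{\ZMat}^{-1}$, whence
\begin{align}
	\nonumber
		\sum_{m=1}^M\g{\mu_m}
	=
		\sum_{m=1}^M\mu_m-\sum_{m=1}^M\Ln{\mu_m}
	=
		\Trace{\ZMat^{-1}\WMat}-\Ln{\Det{\WMat}}+\Ln{\Det{\ZMat}},
\end{align}
which rearranges to the claim. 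No step here is a genuine obstacle; the only point that requires some care is the branch-and-domain bookkeeping for $W_0$ and $W_{-1}$ in the inverse-function step, with all remaining steps being one-line computations.
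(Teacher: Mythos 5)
Your proof is correct and follows essentially the same route as the paper's: both verify the seven conditions of Definition \ref{Definition:tuple_convex} via direct differentiation, both reduce \refP{Property:Definition:tuple_convex:inver} to the same auxiliary function $h\left(\epsilon\right)=\Ln{\frac{1+\epsilon}{1-\epsilon}}-2\epsilon$, both obtain the ratio $\frac{1-\epsilon}{1+\epsilon}$ for \refP{Property:Definition:tuple_convex:composition_derivative}, and both establish \eqref{Equation:Lemma:trace_logdet_tuple:trace_logdet=sumg} by writing trace and log-determinant as the sum and product of the eigenvalues of $\WMat^\frac{1}{2}\ZMat^{-1}\WMat^\frac{1}{2}$ (your similarity argument being the same fact the paper uses via cyclicity of the trace and multiplicativity of the determinant). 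Your branch-and-domain bookkeeping for the Lambert $W$ function is slightly more explicit than the paper's, but there is no substantive difference.
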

\begin{proof}
	It is clear that \refP{Property:Definition:tuple_convex:grow}
	and \refP{Property:Definition:tuple_convex:cont} are fulfilled.
	By differentiation $\g{}$ is strictly monotonically decreasing in $\left(0,1\right]$
	and strictly monotonically increasing in $\left[1,\infty\right)$
	and hence is invertible in each of those intervals.
	It can be validated that $g_1,g_2$ are the inverses of $g$ in those intervals
	by putting in the properties of the Lambert
	$W$ function and considering the appropriate domain of definitions.
	Thus, \refP{Property:Definition:tuple_convex:monotonic_g1} and
	\refP{Property:Definition:tuple_convex:monotonic_g2} are fulfilled.
	\par
	Now consider the function
	$h(\epsilon):=-2\epsilon-\Ln{1-\epsilon}+\Ln{1+\epsilon}$
	which obeys
	$h\left(0\right)=0$ and
	$h'\left(\epsilon\right)=\frac{2\epsilon^2}{\left(1-\epsilon\right)\left(1+\epsilon\right)}\geq 0$
	for all $\epsilon\in\left[0,1\right)$ due to which $h\left(\epsilon\right)\geq 0$ for all
	$\epsilon\in\left[0,1\right)$.
	This is equivalent to $\g{1-\epsilon}\geq\g{1+\epsilon}$
	and yields \refP{Property:Definition:tuple_convex:inver}.
	\par
	By differentiating twice, it follows that $\g{}$ is convex on $\left(0,\infty\right)$
	and \refP{Property:Definition:tuple_convex:convex} is fulfilled.
	At last note that
	\begin{align}
		\nonumber
			-\frac{g'\left(1+\epsilon\right)}{g'\left(1-\epsilon\right)}
		=&
			-\frac{1-\frac{1}{1+\epsilon}}{1-\frac{1}{1-\epsilon}}
		=
			\frac{1-\epsilon}{1+\epsilon}.
	\end{align}
	From this and
	$\frac{\Ln{4}}{4}=\frac{\Ln{2^2}}{4}=\frac{\Ln{2}}{2}$
	it follows that
	$-\frac{g'\left(1+\epsilon\right)}{g'\left(1-\epsilon\right)}\geq\frac{2-\Ln{4}}{\Ln{4}}
		=\frac{1-\Ln{2}}{\Ln{2}}=\nu>0$
	for all $\epsilon\leq\Ln{4}-1=\epsilon_0\in\left(0,1\right)$, and
	\refP{Property:Definition:tuple_convex:composition_derivative}
	is fulfilled.
	By applying the definition of $g$ one gets
	\begin{align}
		&\nonumber
			\Trace{\ZMat^{-1}\WMat}
			+\Ln{\Det{\ZMat}}
		\\=&\nonumber
			\Trace{
				\WMat^\frac{1}{2}
				\ZMat^{-1}
				\WMat^\frac{1}{2}
			}
			-\Ln{\Det{
				\WMat^\frac{1}{2}
				\ZMat^{-1}
				\WMat^\frac{1}{2}
			}}
			+\Ln{\Det{\WMat}}
		\\=&\nonumber
			\sum_{m=1}^M
			\lambda_m\left(
				\WMat^\frac{1}{2}
				\ZMat^{-1}
				\WMat^\frac{1}{2}
			\right)
			-\Ln{
				\prod_{m=1}^M
				\lambda_m\left(
					\WMat^\frac{1}{2}
					\ZMat^{-1}
					\WMat^\frac{1}{2}
				\right)
			}
			+\Ln{\Det{\WMat}}
		\\=&\nonumber
			\sum_{m=1}^M
			\lambda_m\left(
				\WMat^\frac{1}{2}
				\ZMat^{-1}
				\WMat^\frac{1}{2}
			\right)
			-\sum_{m=1}^M\Ln{
				\lambda_m\left(
					\WMat^\frac{1}{2}
					\ZMat^{-1}
					\WMat^\frac{1}{2}
				\right)
			}
			+\Ln{\Det{\WMat}}
		\\=&\nonumber
			\sum_{m=1}^M
			\g{\lambda_m\left(
				\WMat^\frac{1}{2}
				\ZMat^{-1}
				\WMat^\frac{1}{2}
			\right)}
			+\Ln{\Det{\WMat}}.
	\end{align}
\end{proof}
Now the robustness of the trace-log-det covariance estimator can be shown.
\begin{Theorem}\label{Theorem:trace_logdet}
	Let $\mathcal{H}\subset\mathbb{HPD}^M$ be closed in $\mathbb{HPD}^M$,
	$0<\beta<\lambda_1\left(\XMat\right)$ and $\eta>0$.
	Then, there exists a function $\delta:\left(0,\infty\right)\rightarrow\left(0,\infty\right)$ such that
	the following holds true:
	For every $\epsilon>0$ and $\WMat\in\mathbb{HPD}^M$ with
	$\norm{\WMat-\XMat}_{2\rightarrow 2}\leq\delta\left(\epsilon\right)$,
	any minimizer $\ZMat$ of
	\begin{align}
		&\label{Equation:Theorem:trace_logdet:optimization}
			\min_{\ZMat\in\mathcal{H}}
			\Trace{\ZMat^{-1}\WMat}
			+\Ln{\Det{\ZMat}}
	\end{align}
	obeys $\norm{\XMat-\ZMat}_{2\rightarrow 2}\leq \epsilon$.
	In particular, $\delta$ can be chosen as
	\begin{align}
		\nonumber
			\delta_{tld}\left(\epsilon\right)
		:=&
			\min\Bigg\{
				\frac{\left(1-\Ln{2}\right)M^{-1}}{2\Ln{2}}
				\frac{\lambda_1\left(\XMat\right)}{\lambda_M\left(\XMat\right)}
				\left(
					\frac{\lambda_1\left(\XMat\right)-\beta}{\lambda_M\left(\XMat\right)+\beta}
				\right)
				\frac{W_{0}\left(-\Exp{-\left(1+\eta\right)}\right)}{
					W_{-1}\left(-\Exp{-\left(1+\eta\right)}\right)
				}
				\epsilon,
				\frac{1}{2}\epsilon,
		\\&\nonumber
				\left(\Ln{4}-1\right)M\lambda_M\left(\XMat\right)
				\left(
					\frac{\lambda_M\left(\XMat\right)+\beta}{\lambda_1\left(\XMat\right)-\beta}
				\right)^\frac{1}{2}
				\frac{W_{-1}\left(-\Exp{-\left(1+\eta\right)}\right)}{
					W_{0}\left(-\Exp{-\left(1+\eta\right)}\right)
				},
		\\&\nonumber
				\lambda_1\left(\XMat\right)
				\left(
					\frac{\lambda_1\left(\XMat\right)-\beta}{\lambda_M\left(\XMat\right)+\beta}
				\right)^\frac{1}{2}
				\left(1+W_{0}\left(-\Exp{-\left(1+\frac{\eta}{M}\right)}\right)\right),
		\\&\nonumber
				\left(1-\Ln{2}\right)\lambda_1\left(\XMat\right)
				\left(
					\frac{\lambda_1\left(\XMat\right)-\beta}{\lambda_M\left(\XMat\right)+\beta}
				\right)^\frac{1}{2},
				\beta
			\Bigg\}.
	\end{align}
\end{Theorem}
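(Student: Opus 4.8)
The plan is to reduce \thref{Theorem:trace_logdet} directly to \thref{Theorem:tuple_convex} applied to the tuple furnished by \thref{Lemma:trace_logdet_tuple}. The starting observation is that, by \eqref{Equation:Lemma:trace_logdet_tuple:trace_logdet=sumg}, for each fixed $\WMat\in\mathbb{HPD}^M$ the objective $\Trace{\ZMat^{-1}\WMat}+\Ln{\Det{\ZMat}}$ in \eqref{Equation:Theorem:trace_logdet:optimization} equals $\sum_{m=1}^M\g{\lambda_m\left(\WMat^\frac{1}{2}\ZMat^{-1}\WMat^\frac{1}{2}\right)}+\Ln{\Det{\WMat}}$, and the term $\Ln{\Det{\WMat}}$ does not depend on $\ZMat$. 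Hence, over any $\mathcal{H}\subset\mathbb{HPD}^M$, the two minimization problems have exactly the same set of minimizers, so every minimizer $\ZMat$ of \eqref{Equation:Theorem:trace_logdet:optimization} is a minimizer of \eqref{Equation:Theorem:tuple_convex:optimization} for the sufficiently convex tuple $\left(\g{},g_1,g_2,\nu,\epsilon_0\right)$ of \thref{Lemma:trace_logdet_tuple}.

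Next I would apply \thref{Theorem:tuple_convex} to this tuple, to the given closed $\mathcal{H}$, to $\XMat\in\mathcal{H}$, and to the given $\beta$ and $\eta$. This immediately provides a function $\delta$ — namely $\delta_c$ from \eqref{Equation:Theorem:tuple_convex:delta_def} — with the asserted robustness property: $\norm{\WMat-\XMat}_{2\rightarrow 2}\leq\delta\left(\epsilon\right)$ forces $\norm{\XMat-\ZMat}_{2\rightarrow 2}\leq\epsilon$ for any minimizer $\ZMat$. It then only remains to check that the explicit $\delta_{tld}$ stated in the theorem is admissible, and in fact I expect $\delta_{tld}$ to be obtained from $\delta_c$ by substituting the concrete data of the tuple. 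For this substitution I would use $\g{1}=1-\Ln{1}=1$, so $\g{1}+\eta=1+\eta$ and, by \eqref{Equation:Lemma:trace_logdet_tuple:g1g2}, $g_1\left(\g{1}+\eta\right)=-W_{0}\left(-\Exp{-\left(1+\eta\right)}\right)$ and $g_2\left(\g{1}+\eta\right)=-W_{-1}\left(-\Exp{-\left(1+\eta\right)}\right)$; the two minus signs cancel in the ratios $g_1/g_2$ and $g_2/g_1$, so the Lambert-$W$ expressions appear unsigned, exactly as in $\delta_{tld}$. Likewise $1-g_1\left(\g{1}+\tfrac{\eta}{M}\right)=1+W_{0}\left(-\Exp{-\left(1+\frac{\eta}{M}\right)}\right)$, and plugging $\nu=\frac{1-\Ln{2}}{\Ln{2}}$ and $\epsilon_0=\Ln{4}-1$ into the remaining slots of \eqref{Equation:Theorem:tuple_convex:delta_def} reproduces all the other entries of $\delta_{tld}$.

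The one genuinely non-clerical point — and the step I expect to require care — is the evaluation of the entry $\lambda_1\left(\XMat\right)\left(\tfrac{\lambda_1\left(\XMat\right)-\beta}{\lambda_M\left(\XMat\right)+\beta}\right)^\frac{1}{2}\left(1-g_1\left(\g{1+\epsilon_0}\right)\right)$ of $\delta_c$: with $1+\epsilon_0=\Ln{4}$ one has $\g{\Ln{4}}=\Ln{4}-\Ln{\Ln{4}}$, hence $\Exp{-\g{\Ln{4}}}=\frac{\Ln{4}}{4}$, and therefore $g_1\left(\g{\Ln{4}}\right)=-W_{0}\left(-\frac{\Ln{4}}{4}\right)=\Ln{2}$ by \eqref{Equation:W_0:value}. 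This gives $1-g_1\left(\g{1+\epsilon_0}\right)=1-\Ln{2}$, matching the corresponding entry of $\delta_{tld}$. Assembling all these substitutions shows $\delta_{tld}=\delta_c$, so \thref{Theorem:tuple_convex} with $\delta:=\delta_{tld}$ yields exactly the claimed statement; everything apart from the Lambert-$W$ identity \eqref{Equation:W_0:value} is routine bookkeeping.
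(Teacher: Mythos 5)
Your proposal reproduces the paper's proof essentially verbatim: both reduce \thref{Theorem:trace_logdet} to \thref{Theorem:tuple_convex} via the identity \eqref{Equation:Lemma:trace_logdet_tuple:trace_logdet=sumg}, then verify $\delta_{tld}=\delta_c$ by substituting the data of the tuple from \thref{Lemma:trace_logdet_tuple} into \eqref{Equation:Theorem:tuple_convex:delta_def}, using $\g{1}=1$ and the Lambert-$W$ identity \eqref{Equation:W_0:value} for the $1-g_1(\g{1+\epsilon_0})=1-\Ln{2}$ entry. The argument and all the bookkeeping are correct.
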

\begin{proof}
	Let $\left(\g{},g_1,g_2,\nu,\epsilon_0\right)$ be the sufficiently convex tuple
	from \thref{Lemma:trace_logdet_tuple}.
	By \eqref{Equation:Lemma:trace_logdet_tuple:trace_logdet=sumg}
	the problem \eqref{Equation:Theorem:trace_logdet:optimization} has the same minimizers as
	\eqref{Equation:Theorem:tuple_convex:optimization}.
	Applying \thref{Theorem:tuple_convex} would yield the claim if
	one can show $\delta_{tld}\left(\epsilon\right)=\delta_{c}\left(\epsilon\right)$,
	which will be done next. To do this, \eqref{Equation:Lemma:trace_logdet_tuple:g1g2}
	and the definition of $\nu$ and $\epsilon_0$
	are plugged into \eqref{Equation:Theorem:tuple_convex:delta_def}, which yields
	\begin{align}
		\nonumber
			\delta_{c}\left(\epsilon\right)
		=&
			\min\Bigg\{
				\frac{\nu M^{-1}}{2}
				\frac{\lambda_1\left(\XMat\right)}{\lambda_M\left(\XMat\right)}
				\left(
					\frac{\lambda_1\left(\XMat\right)-\beta}{\lambda_M\left(\XMat\right)+\beta}
				\right)
				\frac{g_1\left(\g{1}+\eta\right)}{g_2\left(\g{1}+\eta\right)}
				\epsilon,
				\frac{1}{2}\epsilon,
		\\&\nonumber
				\epsilon_0M\lambda_M\left(\XMat\right)
				\left(
					\frac{\lambda_M\left(\XMat\right)+\beta}{\lambda_1\left(\XMat\right)-\beta}
				\right)^\frac{1}{2}
				\frac{g_2\left(\g{1}+\eta\right)}{g_1\left(\g{1}+\eta\right)},
		\\&\nonumber
				\lambda_1\left(\XMat\right)
				\left(
					\frac{\lambda_1\left(\XMat\right)-\beta}{\lambda_M\left(\XMat\right)+\beta}
				\right)^\frac{1}{2}
				\left(1-g_1\left(\g{1}+\frac{\eta}{M}\right)\right),
		\\&\nonumber
				\lambda_1\left(\XMat\right)
				\left(
					\frac{\lambda_1\left(\XMat\right)-\beta}{\lambda_M\left(\XMat\right)+\beta}
				\right)^\frac{1}{2}
				\left(1-g_1\left(\g{1+\epsilon_0}\right)\right),
				\beta
			\Bigg\}
		\\=&\nonumber
			\min\Bigg\{
				\frac{\left(1-\Ln{2}\right)M^{-1}}{2\Ln{2}}
				\frac{\lambda_1\left(\XMat\right)}{\lambda_M\left(\XMat\right)}
				\left(
					\frac{\lambda_1\left(\XMat\right)-\beta}{\lambda_M\left(\XMat\right)+\beta}
				\right)
				\frac{W_{0}\left(-\Exp{-\left(\g{1}+\eta\right)}\right)}{
					W_{-1}\left(-\Exp{-\left(\g{1}+\eta\right)}\right)
				}
				\epsilon,
				\frac{1}{2}\epsilon,
		\\&\nonumber
				\left(\Ln{4}-1\right)M\lambda_M\left(\XMat\right)
				\left(
					\frac{\lambda_M\left(\XMat\right)+\beta}{\lambda_1\left(\XMat\right)-\beta}
				\right)^\frac{1}{2}
				\frac{W_{-1}\left(-\Exp{-\left(\g{1}+\eta\right)}\right)}{
					W_{0}\left(-\Exp{-\left(\g{1}+\eta\right)}\right)
				},
		\\&\nonumber
				\lambda_1\left(\XMat\right)
				\left(
					\frac{\lambda_1\left(\XMat\right)-\beta}{\lambda_M\left(\XMat\right)+\beta}
				\right)^\frac{1}{2}
				\left(1+W_{0}\left(-\Exp{-\left(\g{1}+\frac{\eta}{M}\right)}\right)\right),
		\\&\nonumber
				\lambda_1\left(\XMat\right)
				\left(
					\frac{\lambda_1\left(\XMat\right)-\beta}{\lambda_M\left(\XMat\right)+\beta}
				\right)^\frac{1}{2}
				\left(
					1+W_{0}\left(-\Exp{-\g{\Ln{4}}}\right)
				\right),
				\beta
			\Bigg\}.
	\end{align}
	Using the definition of $\g{}$ in this yields
	\begin{align}
		\nonumber
			\delta_{c}\left(\epsilon\right)
		=&
			\min\Bigg\{
				\frac{\left(1-\Ln{2}\right)M^{-1}}{2\Ln{2}}
				\frac{\lambda_1\left(\XMat\right)}{\lambda_M\left(\XMat\right)}
				\left(
					\frac{\lambda_1\left(\XMat\right)-\beta}{\lambda_M\left(\XMat\right)+\beta}
				\right)
				\frac{W_{0}\left(-\Exp{-\left(1+\eta\right)}\right)}{
					W_{-1}\left(-\Exp{-\left(1+\eta\right)}\right)
				}
				\epsilon,
				\frac{1}{2}\epsilon,
		\\&\nonumber
				\left(\Ln{4}-1\right)M\lambda_M\left(\XMat\right)
				\left(
					\frac{\lambda_M\left(\XMat\right)+\beta}{\lambda_1\left(\XMat\right)-\beta}
				\right)^\frac{1}{2}
				\frac{W_{-1}\left(-\Exp{-\left(1+\eta\right)}\right)}{
					W_{0}\left(-\Exp{-\left(1+\eta\right)}\right)
				},
		\\&\nonumber
				\lambda_1\left(\XMat\right)
				\left(
					\frac{\lambda_1\left(\XMat\right)-\beta}{\lambda_M\left(\XMat\right)+\beta}
				\right)^\frac{1}{2}
				\left(1+W_{0}\left(-\Exp{-\left(1+\frac{\eta}{M}\right)}\right)\right),
		\\&\nonumber
				\lambda_1\left(\XMat\right)
				\left(
					\frac{\lambda_1\left(\XMat\right)-\beta}{\lambda_M\left(\XMat\right)+\beta}
				\right)^\frac{1}{2}
				\left(
					1+W_{0}\left(-\frac{\Ln{4}}{4}\right)
				\right),
				\beta
			\Bigg\}.
	\end{align}
	Applying \eqref{Equation:W_0:value} to this results in $\delta_{c}=\delta_{tld}$.
\end{proof}
	\subsection{Signed Kernel Condition}
\label{Subsection:proof_signed_kernel_condition}
The next part of the proof of \thref{Theorem:ad_errors} is based on
the signed kernel condition from \thref{Definition:skc} and the robustness constant from
\cite[Definition~2.1]{NNLR}.
\begin{Definition}
	\label{Definition:robustness_constant}
	Let $\mathcal{A}:\mathbb{C}^N\rightarrow\mathbb{C}^{M\times M}$ be a linear operator,
	$S\in\mathbb{N}$ and $\norm{\cdot}$ a norm on $\mathbb{C}^{M\times M}$.
	The constant
	\begin{align}
		\nonumber
			\tau\left(\mathcal{A}\right)
		:=
			\inf_{\zVec\in \mathbb{R}^N_+,\xVec\in \Sigma_S^N\cap\mathbb{R}_+^N\zVec\neq\xVec}
			\frac{
				\norm{\mathcal{A}\left(\zVec-\xVec\right)}
			}{
				\norm{\zVec-\xVec}_{2}
			}
	\end{align}
	is called robustness constant.
\end{Definition}
The general norm $\norm{\cdot}$ appearing in this definition is due to the
later explained relation of the robustness constant with the
non-negative least residual estimator which is any minimizer of
\begin{align}
	\nonumber
		\argmin{\zVec\in\mathbb{R}_+^N}\norm{\mathcal{A}\left(\zVec\right)-\WMat'}.
\end{align}
So the non-negative least residual is the non-negative least squares with
the $\ell_2$-norm being replaced by the general norm $\norm{\cdot}$.
The constant $\tau\left(\mathcal{A}\right)$ depends on the choice of the norm
$\norm{\cdot}$ as well as the sparsity $S$. Its dependence is omitted for the sake of brevity.
Due to the following result from \cite[Theorem~3.2, Theorem~2.2, Proposition~2.8]{NNLR}
an signed kernel condition is the equivalent condition for robust recovery with the
non-negative least residual, and the robustness constant gives a relation
between the estimation error and the magnitude of the perturbation.
\begin{Theorem}\label{Theorem:skc_properties}
	Let $\mathcal{A}:\mathbb{C}^N\rightarrow\mathbb{C}^{M\times M}$
	have signed kernel condition of order $S$ and let $\norm{\cdot}$ be a norm on
	$\mathbb{C}^{M\times M}$. Then, $\tau\left(\mathcal{A}\right)>0$
	and
	\begin{align}
		\label{Equation:Theorem:skc_properties:nnlr_robustness}
			\norm{\xVec-\zVec}_2
		\leq
			\frac{2}{\tau\left(\mathcal{A}\right)}\norm{\WMat'-\mathcal{A}\left(\xVec\right)}
		\TextForAll
			\xVec\in\Sigma_S^N\cap\mathbb{R}_+^N,\WMat'\in\mathbb{C}^{M\times M},
			\textnormal{$\zVec$ minimizer of $\min_{\zVec\in\mathbb{R}_+^N}\norm{\mathcal{A}\left(\zVec\right)-\WMat'}$}
	\end{align}
	holds true.
\end{Theorem}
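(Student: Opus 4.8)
The plan is to prove the two assertions separately: first positivity of $\tau(\mathcal{A})$ via a compactness argument that invokes the signed kernel condition, and then the robustness inequality \refP{Equation:Theorem:skc_properties:nnlr_robustness} via a short optimality-and-triangle-inequality argument.

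For $\tau(\mathcal{A})>0$ I would argue by contradiction. If $\tau(\mathcal{A})=0$, then there are sequences $\zVec_k\in\mathbb{R}_+^N$ and $\xVec_k\in\Sigma_S^N\cap\mathbb{R}_+^N$ with $\zVec_k\neq\xVec_k$ and $\norm{\mathcal{A}(\zVec_k-\xVec_k)}/\norm{\zVec_k-\xVec_k}_2\to 0$. Set $\vVec_k:=(\zVec_k-\xVec_k)/\norm{\zVec_k-\xVec_k}_2$, so that $\norm{\vVec_k}_2=1$ and $\norm{\mathcal{A}(\vVec_k)}\to 0$. The key structural observation is that a coordinate of $\vVec_k$ can be negative only where the corresponding coordinate of $\xVec_k$ is strictly positive, since off the support of $\xVec_k$ one has $\zVec_k-\xVec_k=\zVec_k\geq 0$; hence $\SetSize{\{n\in\SetOf{N}:(\vVec_k)_n<0\}}\leq S$. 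The set of unit $\ell_2$-vectors with at most $S$ negative coordinates is a finite union of closed convex cones intersected with the unit sphere, hence compact, so after passing to a subsequence $\vVec_k\to\vVec$ with $\norm{\vVec}_2=1$. Continuity of the linear map $\mathcal{A}$ and of $\norm{\cdot}$ gives $\mathcal{A}(\vVec)=0$, and since having at least $S+1$ strictly negative coordinates is an open condition, the limit $\vVec$ still has at most $S$ negative coordinates. Thus $\vVec\in\Kernel{\mathcal{A}}\cap\mathbb{R}^N\setminus\ZeroSet$ with $\SetSize{\{n:\vVec_n<0\}}\leq S$, contradicting the signed kernel condition of order $S$. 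Hence $\tau(\mathcal{A})>0$.

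For the inequality, fix $\xVec\in\Sigma_S^N\cap\mathbb{R}_+^N$, $\WMat'\in\mathbb{C}^{M\times M}$, and a minimizer $\zVec$ of $\min_{\zVec'\in\mathbb{R}_+^N}\norm{\mathcal{A}(\zVec')-\WMat'}$. If $\zVec=\xVec$ the bound holds trivially, so assume $\zVec\neq\xVec$. Since $\xVec$ is feasible, optimality of $\zVec$ gives $\norm{\mathcal{A}(\zVec)-\WMat'}\leq\norm{\mathcal{A}(\xVec)-\WMat'}$, and then
\begin{align}
	\nonumber
		\norm{\mathcal{A}(\zVec-\xVec)}
	\leq
		\norm{\mathcal{A}(\zVec)-\WMat'}+\norm{\WMat'-\mathcal{A}(\xVec)}
	\leq
		2\norm{\WMat'-\mathcal{A}(\xVec)}
\end{align}
by the triangle inequality. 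Because $\zVec\in\mathbb{R}_+^N$, $\xVec\in\Sigma_S^N\cap\mathbb{R}_+^N$ and $\zVec\neq\xVec$, the pair $(\zVec,\xVec)$ is admissible in the infimum defining $\tau(\mathcal{A})$, so $\norm{\zVec-\xVec}_2\leq\tau(\mathcal{A})^{-1}\norm{\mathcal{A}(\zVec-\xVec)}$, and combining the two bounds yields \refP{Equation:Theorem:skc_properties:nnlr_robustness}.

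I expect the compactness step in the first part to be the main obstacle: one must package ``at most $S$ negative coordinates'' as a closed, conically structured condition that passes to limits, and verify that the limiting vector is genuinely nonzero, which is what licenses the appeal to the signed kernel condition. Once $\tau(\mathcal{A})>0$ is established, the remainder is a routine optimality argument and the constant $2$ is simply the triangle-inequality factor; this reproduces \cite[Theorem~3.2, Theorem~2.2, Proposition~2.8]{NNLR}.
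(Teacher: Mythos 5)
Your proof is correct. The paper's own proof of this statement is a pure citation, invoking Theorem~3.2 of the cited reference for unique noiseless recovery under the signed kernel condition and then Theorem~2.2 together with Proposition~2.8 of that reference to pass to the quantitative bound, whereas you give a self-contained argument going directly from the signed kernel condition to $\tau(\mathcal{A})>0$ via compactness and then to the inequality via optimality and the triangle inequality. Both routes hinge on the same sign-structure observation, namely that $\zVec-\xVec$ can have at most $S$ negative entries because $\zVec\geq 0$ off the $S$-sparse support of $\xVec$, which is exactly what connects the robustness constant to the signed kernel condition; the normalize-and-extract-a-convergent-subsequence compactness argument and the factor-$2$ triangle-inequality step are the standard ingredients those citations encapsulate. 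The only genuine difference is organizational: the paper's citation chain records a qualitative uniqueness statement first and then derives the constant, while you establish $\tau(\mathcal{A})>0$ directly. Neither is materially more elementary, but your version has the merit of being transparent as a standalone argument, and your care about closedness of the set of vectors with at most $S$ negative coordinates and about the limit being nonzero is exactly where the argument needs scrutiny.
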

\begin{proof}
	By \cite[Theorem~3.2]{NNLR} one has
	\begin{align}
		\nonumber
			\{\xVec\}
		=
			\min_{\zVec\in\mathbb{R}^N_+}\norm{\mathcal{A}\left(\zVec\right)-\mathcal{A}\left(\xVec\right)}
		\TextForAll
			\xVec\in\Sigma_S^N\cap\mathbb{R}^N_+.
	\end{align}
	By \cite[Theorem~2.2]{NNLR} together with \cite[Proposition~2.8]{NNLR}
	it follows that
	\eqref{Equation:Theorem:skc_properties:nnlr_robustness} holds true.
\end{proof}
It should be noted that a linear operator can have the signed kernel condition of order
$S$ and a non-zero but very small robustness constant $\tau\left(\mathcal{A}\right)$.
This can cause the estimation error to be very large even for very small perturbations.
In implementations observations without perturbation can have small perturbations
due to machine precision, or estimators will solve optimization problems only up to
a predefined precision effectively causing a small perturbation.
This together with a small robustness constant
can cause recovery to seemingly fail in implementations
even if no perturbation is present.
The construction in \eqref{Equation:Theorem:sample_rate:defA} exactly has these problems
and is thus only of theoretical value. For implementations
constructions with better robustness constants are required.
\par
Combining this robustness result with the robustness of the
trace-log-det covariance estimator
yields a robustness result for the relaxed maximum likelihood estimator.
\begin{Theorem}\label{Theorem:skc}
	Let $\mathcal{A}:\mathbb{C}^N\rightarrow\mathbb{C}^{M\times M}$
	have signed kernel condition of order $S$.
	For all $\SigmaMat\in\mathbb{HPD}^M$, $\xVec\in\Sigma_S^N\cap\mathbb{R}_+^N$,
	$0<\beta<\lambda_1\left(\mathcal{A}\left(\xVec\right)+\SigmaMat\right)$, $\eta>0$ with
	$\mathcal{H}:=\left\{\mathcal{A}\left(\zVec\right)+\SigmaMat:\zVec\in\mathbb{R}_+^N\right\}
		\subset\mathbb{HPD}^M$
	there exists a function $\delta:\left(0,\infty\right)\rightarrow\left(0,\infty\right)$ such that
	the following holds true:
	For every $\epsilon>0$ and $\WMat\in\mathbb{HPD}^M$ with
	$\norm{\WMat-\mathcal{A}\left(\xVec\right)-\SigmaMat}_{2\rightarrow 2}$
	$\leq\delta\left(\epsilon\right)$,
	any minimizer $\zVec$ of
	\begin{align}
		&\label{Equation:Theorem:skc:optimization}
			\min_{\zVec\in\mathbb{R}_+^N}
			\Trace{\left(\mathcal{A}\left(\zVec\right)+\SigmaMat\right)^{-1}\WMat}
			+\Ln{\Det{\mathcal{A}\left(\zVec\right)+\SigmaMat}}
	\end{align}
	obeys $\norm{\xVec-\zVec}_2\leq \epsilon$.
	In particular, $\delta$ can be chosen as
	\begin{align}
		\nonumber
			\delta_{skc}\left(\epsilon\right)
		:=&
			\min\Bigg\{
				\frac{\left(1-\Ln{2}\right)M^{-1}\tau\left(\mathcal{A}\right)}{\Ln{2}}
				\frac{\lambda_1\left(\mathcal{A}\left(\xVec\right)+\SigmaMat\right)}{\lambda_M\left(\mathcal{A}\left(\xVec\right)+\SigmaMat\right)}
				\left(
					\frac{\lambda_1\left(\mathcal{A}\left(\xVec\right)+\SigmaMat\right)-\beta}{\lambda_M\left(\mathcal{A}\left(\xVec\right)+\SigmaMat\right)+\beta}
				\right)
				\frac{W_{0}\left(-\Exp{-\left(1+\eta\right)}\right)}{
					W_{-1}\left(-\Exp{-\left(1+\eta\right)}\right)
				}
				\epsilon,		
		\\&\nonumber
				\tau\left(\mathcal{A}\right)\epsilon,
				\left(\Ln{4}-1\right)M\lambda_M\left(\mathcal{A}\left(\xVec\right)+\SigmaMat\right)
				\left(
					\frac{\lambda_M\left(\mathcal{A}\left(\xVec\right)+\SigmaMat\right)+\beta}{\lambda_1\left(\mathcal{A}\left(\xVec\right)+\SigmaMat\right)-\beta}
				\right)^\frac{1}{2}
				\frac{W_{-1}\left(-\Exp{-\left(1+\eta\right)}\right)}{
					W_{0}\left(-\Exp{-\left(1+\eta\right)}\right)
				},
		\\&\nonumber
				\lambda_1\left(\mathcal{A}\left(\xVec\right)+\SigmaMat\right)
				\left(
					\frac{\lambda_1\left(\mathcal{A}\left(\xVec\right)+\SigmaMat\right)-\beta}{\lambda_M\left(\mathcal{A}\left(\xVec\right)+\SigmaMat\right)+\beta}
				\right)^\frac{1}{2}
				\left(1+W_{0}\left(-\Exp{-\left(1+\frac{\eta}{M}\right)}\right)\right),
		\\&\label{Equation:Theorem:skc:delta_def}
				\left(1-\Ln{2}\right)\lambda_1\left(\mathcal{A}\left(\xVec\right)+\SigmaMat\right)
				\left(
					\frac{\lambda_1\left(\mathcal{A}\left(\xVec\right)+\SigmaMat\right)-\beta}{\lambda_M\left(\mathcal{A}\left(\xVec\right)+\SigmaMat\right)+\beta}
				\right)^\frac{1}{2},
				\beta
			\Bigg\}.
	\end{align}
\end{Theorem}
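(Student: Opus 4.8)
The plan is to combine \thref{Theorem:trace_logdet}, the robustness of the trace-log-det covariance estimator over a closed matrix model, with the robustness constant $\tau\left(\mathcal{A}\right)$ of \thref{Definition:robustness_constant} and \thref{Theorem:skc_properties}, which linearizes the passage from the matrix $\mathcal{A}\left(\zVec\right)$ back to $\zVec$ on $\Sigma_S^N\cap\mathbb{R}_+^N$ versus $\mathbb{R}_+^N$. Concretely, set $\XMat:=\mathcal{A}\left(\xVec\right)+\SigmaMat$ and use the model $\mathcal{H}=\left\{\mathcal{A}\left(\zVec\right)+\SigmaMat:\zVec\in\mathbb{R}_+^N\right\}$ from the statement. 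First I would check that $\mathcal{H}$ is admissible for \thref{Theorem:trace_logdet}: the cone $\left\{\mathcal{A}\left(\zVec\right):\zVec\in\mathbb{R}_+^N\right\}$ is the conic hull of the finitely many matrices $\mathcal{A}\left(\eVec_n\right)$, hence a closed polyhedral cone in $\mathbb{H}^M$, so its shift $\mathcal{H}$ by $\SigmaMat$ is closed in $\mathbb{H}^M$ and therefore closed in $\mathbb{HPD}^M$; moreover $\XMat\in\mathcal{H}$ since $\xVec\in\mathbb{R}_+^N$, $\XMat\in\mathbb{HPD}^M$ by the hypothesis on $\mathcal{H}$, and the constraint $0<\beta<\lambda_1\left(\XMat\right)$ is exactly the one assumed.

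The second step identifies the vector problem \eqref{Equation:Theorem:skc:optimization} with the matrix problem \eqref{Equation:Theorem:trace_logdet:optimization}. The objective of \eqref{Equation:Theorem:skc:optimization} depends on $\zVec$ only through $\mathcal{A}\left(\zVec\right)+\SigmaMat$, and $\mathcal{H}$ is precisely the image of $\mathbb{R}_+^N$ under $\zVec\mapsto\mathcal{A}\left(\zVec\right)+\SigmaMat$; hence the optimal values coincide and any minimizer $\zVec$ of \eqref{Equation:Theorem:skc:optimization} yields a minimizer $\ZMat:=\mathcal{A}\left(\zVec\right)+\SigmaMat$ of $\min_{\ZMat\in\mathcal{H}}\Trace{\ZMat^{-1}\WMat}+\Ln{\Det{\ZMat}}$. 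Applying \thref{Theorem:trace_logdet} with this $\XMat$, the given $\beta$, $\eta$, and an intermediate target $\epsilon'$ that is an appropriate multiple of $\tau\left(\mathcal{A}\right)\epsilon$, we obtain the function $\delta_{tld}$ of that theorem such that $\norm{\WMat-\XMat}_{2\rightarrow 2}\leq\delta_{tld}\left(\epsilon'\right)$ forces $\norm{\XMat-\ZMat}_{2\rightarrow 2}\leq\epsilon'$; by linearity of $\mathcal{A}$ this reads $\norm{\mathcal{A}\left(\xVec-\zVec\right)}_{2\rightarrow 2}=\norm{\XMat-\ZMat}_{2\rightarrow 2}\leq\epsilon'$.

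The third step inverts the robustness constant. By \thref{Theorem:skc_properties}, the signed kernel condition of order $S$ gives $\tau\left(\mathcal{A}\right)>0$ for the norm $\norm{\cdot}_{2\rightarrow 2}$. If $\zVec=\xVec$ the conclusion is immediate; otherwise, since $\xVec\in\Sigma_S^N\cap\mathbb{R}_+^N$ and $\zVec\in\mathbb{R}_+^N$, the defining infimum of $\tau\left(\mathcal{A}\right)$ yields $\norm{\xVec-\zVec}_2\leq\tau\left(\mathcal{A}\right)^{-1}\norm{\mathcal{A}\left(\xVec-\zVec\right)}_{2\rightarrow 2}\leq\tau\left(\mathcal{A}\right)^{-1}\epsilon'$, so the choice of $\epsilon'$ above makes the right-hand side at most $\epsilon$. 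It then only remains to set $\delta_{skc}:=\delta_{tld}\circ\left(\epsilon\mapsto\epsilon'\right)$, substitute $\XMat=\mathcal{A}\left(\xVec\right)+\SigmaMat$ into the explicit piecewise-linear formula for $\delta_{tld}$ from \thref{Theorem:trace_logdet}, and collect the two $\epsilon$-linear minimands against the $\epsilon$-independent ones to recover \eqref{Equation:Theorem:skc:delta_def}.

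I expect the main obstacle to be purely bookkeeping rather than conceptual. The substantive ingredients, \thref{Theorem:trace_logdet} and the positivity of $\tau\left(\mathcal{A}\right)$ from \thref{Theorem:skc_properties}, are already in hand; what requires care is (i) pinning down both directions of the optimal-value equality so that the vector/matrix minimizer correspondence is exact, and (ii) propagating the explicit $\delta$ through the substitution $\XMat=\mathcal{A}\left(\xVec\right)+\SigmaMat$ and the rescaling by $\tau\left(\mathcal{A}\right)$ so that every one of the six minimands of $\delta_{skc}$ matches \eqref{Equation:Theorem:skc:delta_def}, in particular tracking the numerical factor relating $\epsilon'$ to $\tau\left(\mathcal{A}\right)\epsilon$ and the norm identity $\norm{\mathcal{A}\left(\xVec-\zVec\right)}_{2\rightarrow 2}=\norm{\XMat-\ZMat}_{2\rightarrow 2}$.
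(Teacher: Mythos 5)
Your argument is correct and follows the same overall strategy as the paper: reduce \eqref{Equation:Theorem:skc:optimization} to the matrix problem over $\mathcal{H}$, apply \thref{Theorem:trace_logdet} to control $\norm{\ZMat-\XMat}_{2\rightarrow 2}$ for $\ZMat:=\mathcal{A}\left(\zVec\right)+\SigmaMat$ and $\XMat:=\mathcal{A}\left(\xVec\right)+\SigmaMat$, and then pass to $\norm{\xVec-\zVec}_2$ via the robustness constant. The one step you do differently is the last one, and it is a genuine (if small) improvement. The paper sets $\WMat':=\mathcal{A}\left(\zVec\right)$, observes that $\zVec$ trivially minimizes $\norm{\mathcal{A}\left(\cdot\right)-\WMat'}_{2\rightarrow 2}$, and invokes \thref{Theorem:skc_properties} to get $\norm{\xVec-\zVec}_2\leq\frac{2}{\tau\left(\mathcal{A}\right)}\norm{\ZMat-\XMat}_{2\rightarrow 2}$, hence takes intermediate target $\frac{\tau\left(\mathcal{A}\right)}{2}\epsilon$. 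You instead read the defining infimum in \thref{Definition:robustness_constant} directly for the admissible pair $\left(\zVec,\xVec\right)\in\mathbb{R}_+^N\times\left(\Sigma_S^N\cap\mathbb{R}_+^N\right)$, getting the tighter $\norm{\xVec-\zVec}_2\leq\tau\left(\mathcal{A}\right)^{-1}\norm{\ZMat-\XMat}_{2\rightarrow 2}$ and needing \thref{Theorem:skc_properties} only for $\tau\left(\mathcal{A}\right)>0$; the factor $2$ in \thref{Theorem:skc_properties} accounts for a triangle inequality that is vacuous when $\WMat'$ already lies in the range of $\mathcal{A}$, as it does here, so your intermediate target is $\tau\left(\mathcal{A}\right)\epsilon$ and your $\delta$ is twice the paper's. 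Your finitely-generated-cone justification for the closedness of $\mathcal{H}$ is also more careful than the paper's bare appeal to continuity of $\mathcal{A}$. One caveat on the bookkeeping you flag: neither your substitution $\epsilon\mapsto\tau\left(\mathcal{A}\right)\epsilon$ nor the paper's $\epsilon\mapsto\frac{\tau\left(\mathcal{A}\right)}{2}\epsilon$ reproduces the two $\epsilon$-linear minimands displayed in \eqref{Equation:Theorem:skc:delta_def}, whose coefficients are inconsistent with the paper's own claimed identity $\delta_{skc}\left(\epsilon\right)=\delta_{tld}\left(\frac{\tau\left(\mathcal{A}\right)}{2}\epsilon\right)$; that display evidently contains a harmless constant-factor slip, and the substantive content, that $\delta$ may be taken linear in $\epsilon$ with slope proportional to $M^{-1}\tau\left(\mathcal{A}\right)$, is exactly what your argument establishes.
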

\begin{proof}
	Choose the norm $\norm{\cdot}:=\norm{\cdot}_{2\rightarrow 2}$ for the robustness constant.
	Note that $\mathcal{H}$ is closed since $\mathcal{A}$ is linear and thus continuous on
	the finite-dimensional space $\mathbb{C}^N$.
	Set $\XMat:=\mathcal{A}\left(\xVec\right)+\SigmaMat$ and let $\delta_{tld}$ be from
	\thref{Theorem:trace_logdet} so that
	$\delta_{skc}\left(\epsilon\right)=\delta_{tld}\left(\frac{\tau\left(\mathcal{A}\right)}{2}\epsilon\right)$.
	Now let $\WMat$ be such that
	$\norm{\WMat-\mathcal{A}\left(\xVec\right)-\SigmaMat}_{2\rightarrow 2}
		\leq\delta_{skc}\left(\epsilon\right)$.
	It follows
	$\norm{\WMat-\XMat}_{2\rightarrow 2}\leq\delta_{skc}\left(\epsilon\right)
		=\delta_{tld}\left(\frac{\tau\left(\mathcal{A}\right)}{2}\epsilon\right)$
	and $0<\beta<\lambda_1\left(\XMat\right)$.
	If $\zVec$ is an optimizer of \eqref{Equation:Theorem:skc:optimization} then
	$\ZMat:=\mathcal{A}\left(\zVec\right)+\SigmaMat$ is an optimizer of
	\eqref{Equation:Theorem:trace_logdet:optimization}. Thus, \thref{Theorem:trace_logdet}
	yields that
	\begin{align}
		\label{Equation:Theorem:skc:eq1}
			\norm{\ZMat-\XMat}_{2\rightarrow 2}
		\leq
			\frac{\tau\left(\mathcal{A}\right)}{2}\epsilon.
	\end{align}
	Now set $\WMat':=\ZMat-\SigmaMat=\mathcal{A}\left(\zVec\right)$ so that
	$\zVec$ is an optimizer of $\min_{\zVec\in\mathbb{R}_+^N}\norm{\mathcal{A}\left(\zVec\right)-\WMat'}_{2\rightarrow 2}$.
	By \thref{Theorem:skc_properties} one gets
	\begin{align}
		\nonumber
			\norm{\zVec-\xVec}_2
		\leq&
			\frac{2}{\tau\left(\mathcal{A}\right)}\norm{\WMat'-\mathcal{A}\left(\xVec\right)}_{2\rightarrow 2}
		=
			\frac{2}{\tau\left(\mathcal{A}\right)}\norm{\ZMat-\XMat}_{2\rightarrow 2}.
	\end{align}
	Applying \eqref{Equation:Theorem:skc:eq1} to this yields the claim.
\end{proof}
To the best of the authors' knowledge, this is the first robustness result for
the relaxed maximum likelihood estimator.
Note that \eqref{Equation:Theorem:skc_properties:nnlr_robustness}
is again a robustness result, and indeed one could choose
$\delta\left(\cdot\right):=\frac{2}{\tau\left(\mathcal{A}\right)}\cdot$.
The robustness derived in \thref{Theorem:skc} is weaker, since
$\delta_{skc}$ depends negatively on the
dimension $M$. Due to this, the non-negative least residual has better robustness properties
and better recovery guarantees. An interesting question would be whether \thref{Theorem:skc}
can be improved by removing all dimensional scaling parameters
to make $\delta_{skc}$ scale like $\frac{1}{\tau\left(\mathcal{A}\right)}$
and thus make the result as good as the recovery guarantee for the
non-negative least residual.
This is left for future investigation.
\par
Now it is shown that the signed kernel condition is also an equivalent condition for successful recovery
with the relaxed maximum likelihood estimator
in the infinite antenna case.
This result is not required for the proof of \thref{Theorem:ad_errors} but is of independent interest.
\begin{Theorem}\label{Theorem:skc_recovery}
	Let $\SigmaMat\in\mathbb{HPD}^M$ and
	$\mathcal{A}:\mathbb{C}^N\rightarrow\mathbb{C}^{M\times M}$
	be a linear operator such that
	$\mathcal{H}:=\big\{\mathcal{A}\left(\zVec\right)+\SigmaMat:\zVec\in\mathbb{R}_+^N\big\}
		\subset\mathbb{HPD}^M$.
	Then, the following are equivalent.
	\begin{enumerate}
		\item
			$\mathcal{A}$ has the signed kernel condition of order $S$.
		\item
			For all $\xVec\in\Sigma_S^N\cap\mathbb{R}_+^N$
			the problem
			\begin{align}
				&\label{Equation:Theorem:skc_recovery:optimization}
					\min_{\zVec\in\mathbb{R}_+^N}
					\Trace{\left(\mathcal{A}\left(\zVec\right)+\SigmaMat\right)^{-1}\left(\mathcal{A}\left(\xVec\right)+\SigmaMat\right)}
					+\Ln{\Det{\mathcal{A}\left(\zVec\right)+\SigmaMat}}
			\end{align}
			has a unique minimizer and it is $\xVec$.
	\end{enumerate}
\end{Theorem}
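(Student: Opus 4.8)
The plan is to prove the two implications separately. The implication from the signed kernel condition to unique recovery will follow almost immediately from \thref{Theorem:skc} applied with a vanishing perturbation, while the converse will be proved in contrapositive form by an explicit kernel construction.

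For the direction $1 \Rightarrow 2$ I would fix $\xVec \in \Sigma_S^N \cap \mathbb{R}_+^N$ and set $\XMat := \mathcal{A}\left(\xVec\right) + \SigmaMat \in \mathcal{H} \subset \mathbb{HPD}^M$. By \thref{Lemma:trace_logdet_tuple} (with $\g{x} = x - \Ln{x}$), the objective of \eqref{Equation:Theorem:skc_recovery:optimization} evaluated at $\zVec$ equals $\sum_{m=1}^M \g{\lambda_m\left(\XMat^\frac{1}{2}\left(\mathcal{A}\left(\zVec\right)+\SigmaMat\right)^{-1}\XMat^\frac{1}{2}\right)} + \Ln{\Det{\XMat}}$, which is bounded below by $M\g{1} + \Ln{\Det{\XMat}}$ because $1$ is the unique global minimizer of $\g{}$; this lower bound is attained at $\zVec = \xVec$ (all eigenvalues then equal $1$), so $\xVec$ is a minimizer of \eqref{Equation:Theorem:skc_recovery:optimization}. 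Next I would invoke \thref{Theorem:skc} with $\WMat := \XMat$ and any admissible $\beta \in \left(0, \lambda_1\left(\XMat\right)\right)$ and $\eta > 0$ (the closedness hypothesis on $\mathcal{H}$ holds since $\mathcal{A}$ is linear on the finite-dimensional space $\mathbb{C}^N$, as already observed in the proof of \thref{Theorem:skc}): since $\norm{\WMat - \mathcal{A}\left(\xVec\right) - \SigmaMat}_{2\rightarrow 2} = 0 \leq \delta_{skc}\left(\epsilon\right)$ for every $\epsilon > 0$, every minimizer $\zVec$ of \eqref{Equation:Theorem:skc_recovery:optimization} obeys $\norm{\xVec - \zVec}_2 \leq \epsilon$ for all $\epsilon > 0$, hence $\zVec = \xVec$. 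Together with the fact that $\xVec$ is itself a minimizer, this means $\xVec$ is the unique minimizer, which is statement 2.

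For the direction $2 \Rightarrow 1$ I would argue by contraposition: assume $\mathcal{A}$ does not have the signed kernel condition of order $S$, so there exists $\vVec \in \Kernel{\mathcal{A}} \cap \mathbb{R}^N \setminus \ZeroSet$ with $T := \left\{n \in \SetOf{N} : \vVec_n < 0\right\}$ of cardinality at most $S$. Define $\xVec$ by $x_n := -\vVec_n > 0$ for $n \in T$ and $x_n := 0$ otherwise; then $\xVec$ is supported on $T$ and non-negative, so $\xVec \in \Sigma_S^N \cap \mathbb{R}_+^N$. Put $\zVec := \xVec + \vVec$; splitting the coordinates along $T$ and its complement gives $z_n = -\vVec_n + \vVec_n = 0$ for $n \in T$ and $z_n = \vVec_n \geq 0$ for $n \notin T$, so $\zVec \in \mathbb{R}_+^N$, and $\zVec \neq \xVec$ because $\vVec \neq 0$. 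Since $\vVec \in \Kernel{\mathcal{A}}$ we have $\mathcal{A}\left(\zVec\right) + \SigmaMat = \mathcal{A}\left(\xVec\right) + \SigmaMat$, so $\zVec$ and $\xVec$ give exactly the same value of the objective of \eqref{Equation:Theorem:skc_recovery:optimization}. Hence \eqref{Equation:Theorem:skc_recovery:optimization} cannot have $\xVec$ as its unique minimizer — if $\xVec$ were a minimizer, then the distinct point $\zVec$ would be one too — so statement 2 fails for this $\xVec$.

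I do not expect a substantial obstacle here: the argument is essentially a splicing of \thref{Theorem:skc} with the standard kernel-support bookkeeping. The only points requiring a little care are (i) noting in the forward direction that $\xVec$ genuinely attains the global lower bound of the objective, so that ``every minimizer equals $\xVec$'' upgrades to ``$\xVec$ is the unique minimizer'' rather than being vacuous, and (ii) checking the membership claims $\xVec \in \Sigma_S^N \cap \mathbb{R}_+^N$ and $\zVec \in \mathbb{R}_+^N$ in the converse, which amounts to the short sign analysis on $T$ and $\SetOf{N} \setminus T$ indicated above.
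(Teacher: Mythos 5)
Your proof is correct, but it takes a genuinely different route from the paper in both directions, and it is worth noting the trade-offs.

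In the forward direction, you establish that $\xVec$ is a minimizer via the lower bound $M\g{1}+\Ln{\Det{\XMat}}$ and then invoke \thref{Theorem:skc} with $\WMat=\XMat$ (so $\norm{\WMat-\mathcal{A}\left(\xVec\right)-\SigmaMat}_{2\rightarrow 2}=0\leq\delta_{skc}\left(\epsilon\right)$ for every $\epsilon>0$) to force every minimizer to coincide with $\xVec$. This is valid and not logically circular, since \thref{Theorem:skc} is proven earlier, but it is much heavier than necessary: the paper instead observes directly that $\zVec$ is a minimizer of \eqref{Equation:Theorem:skc_recovery:optimization} if and only if every eigenvalue of $\XMat^\frac{1}{2}\left(\mathcal{A}\left(\zVec\right)+\SigmaMat\right)^{-1}\XMat^\frac{1}{2}$ equals $1$, i.e. if and only if $\zVec-\xVec\in\Kernel{\mathcal{A}}$; since you already essentially note this when showing $\xVec$ is a minimizer, you could finish with a one-line argument from the signed kernel condition rather than the full robustness theorem.

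In the converse direction you argue by contraposition with an explicit construction: take $\vVec\in\Kernel{\mathcal{A}}\cap\mathbb{R}^N\setminus\ZeroSet$ with at most $S$ negative coordinates, set $\xVec$ to its negative part and $\zVec:=\xVec+\vVec$, and verify $\xVec\in\Sigma_S^N\cap\mathbb{R}_+^N$, $\zVec\in\mathbb{R}_+^N$, $\zVec\neq\xVec$, $\mathcal{A}\left(\zVec\right)=\mathcal{A}\left(\xVec\right)$, so uniqueness fails. This is more self-contained and more elementary than the paper's route, which first cites \cite[Theorem~3.2, Theorem~2.2]{NNLR} to rephrase the signed kernel condition as $\left(\mathbb{R}_+^N-\Sigma_S^N\cap\mathbb{R}_+^N\right)\cap\Kernel{\mathcal{A}}=\ZeroSet$ and then shows that set condition is equivalent to the uniqueness statement. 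Your construction effectively reproves the relevant half of the cited NNLR equivalence in place, avoiding the external citation; the paper's version keeps the argument shorter at the cost of relying on those external results.

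Both directions are sound; the main critique is that the forward direction imports far more machinery than the paper does, while your backward direction is arguably cleaner and more explicit.
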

\begin{proof}
	By \eqref{Equation:Lemma:trace_logdet_tuple:trace_logdet=sumg} from
	\thref{Lemma:trace_logdet_tuple} the minimizers of
	\eqref{Equation:Theorem:skc_recovery:optimization} are exactly the minimizers of
	\begin{align}
		\label{Equation:Theorem:skc_recovery:optimization_g}
			\min_{\zVec\in\mathbb{R}_+^N}
			\sum_{m=1}^M\g{\lambda_m\left(\left(\mathcal{A}\left(\xVec\right)+\SigmaMat\right)^\frac{1}{2}\left(\mathcal{A}\left(\zVec\right)+\SigmaMat\right)^{-1}\left(\mathcal{A}\left(\xVec\right)+\SigmaMat\right)^\frac{1}{2}\right)}
	\end{align}
	for some $g$ of a sufficiently convex tuple.
	Since $g$ is part of a sufficiently convex tuple,
	$1$ is the unique minimizer of $\min_{x\in\left(0,\infty\right)}g(x)$.
	It follows that $\zVec$ is a minimizer of \eqref{Equation:Theorem:skc_recovery:optimization_g}
	if and only if
	\begin{align}
		\nonumber
			\g{\lambda_m\left(\left(\mathcal{A}\left(\xVec\right)+\SigmaMat\right)^\frac{1}{2}\left(\mathcal{A}\left(\zVec\right)+\SigmaMat\right)^{-1}\left(\mathcal{A}\left(\xVec\right)+\SigmaMat\right)^\frac{1}{2}\right)}
		=
			\g{1}
	\end{align}
	for all $m\in\SetOf{M}$.
	This, on the other hand, is equivalent to
	$\mathcal{A}\left(\zVec\right)+\SigmaMat=\mathcal{A}\left(\xVec\right)+\SigmaMat$.
	It follows that
	\begin{align}
		\label{Equation:Theorem:skc_recovery:kernel}
			\textnormal{
				$\zVec$ is a minimizer of \eqref{Equation:Theorem:skc_recovery:optimization}
				if and only if $\zVec-\xVec\in\Kernel{\mathcal{A}}$
			}
	\end{align}
	holds true.
	Applying \cite[Theorem~3.2]{NNLR} and \cite[Theorem~2.2]{NNLR}
	yields that
	$\mathcal{A}$ having the signed kernel condition of order $S$
	is equivalent to
	\begin{align}
		\label{Equation:Theorem:skc_recovery:kernel_recovery}
			\left(\mathbb{R}_+^N-\Sigma_S^N\cap\mathbb{R}_+^N\right)\cap\Kernel{\mathcal{A}}=\ZeroSet.
	\end{align}
	It is now shown that the latter is equivalent to the second condition of this theorem.
	\par
	Assume that \eqref{Equation:Theorem:skc_recovery:kernel_recovery} holds true.
	Let $\xVec\in\Sigma_S^N\cap\mathbb{R}_+^N$ and $\zVec$ be a minimizer of
	\eqref{Equation:Theorem:skc_recovery:optimization}.
	Then, $\zVec-\xVec\in\mathbb{R}_+^N-\Sigma_S^N\cap\mathbb{R}_+^N$
	and by \eqref{Equation:Theorem:skc_recovery:kernel} $\zVec-\xVec\in\Kernel{\mathcal{A}}$.
	By \eqref{Equation:Theorem:skc_recovery:kernel_recovery} $\zVec=\xVec$
	and $\xVec$ is the unique minimizer of
	\eqref{Equation:Theorem:skc_recovery:optimization}.
	On the other hand, assume that
	for all $\xVec\in\Sigma_S^N\cap\mathbb{R}_+^N$ the problem
	\eqref{Equation:Theorem:skc_recovery:optimization} has the unique minimizer $\xVec$.
	To prove the converse implication, let $\vVec\in\left(\mathbb{R}_+^N-\Sigma_S^N\cap\mathbb{R}_+^N\right)\cap\Kernel{\mathcal{A}}$.
	Then, there exist
	$\zVec\in\mathbb{R}_+^N$ and $\xVec\in\Sigma_S^N\cap\mathbb{R}_+^N$ so that
	$\zVec-\xVec=\vVec\in\Kernel{\mathcal{A}}$.
	By \eqref{Equation:Theorem:skc_recovery:kernel} $\zVec$ is a minimizer of 
	\eqref{Equation:Theorem:skc_recovery:optimization}. By assumption $\zVec=\xVec$
	so that \eqref{Equation:Theorem:skc_recovery:kernel_recovery} holds true.
\end{proof}
Note that the convergence of \thref{Theorem:ad_errors} can only
hold if the the unique minimizer property in
\thref{Theorem:skc_recovery}
is fulfilled, and by \thref{Theorem:skc_recovery}
the conclusion and convergence of \thref{Theorem:ad_errors}
can only hold if the operator
$\mathcal{A}\left(\zVec\right)=\sum_{n=1}^N\aVec_n\aVec_n^Hz_n$
has the signed kernel condition of order $S$. 
By \cite[Remark~3.14]{NNLR} no other matrix
$\AMat$ can generate a linear operator
$\mathcal{A}\left(\zVec\right)=\sum_{n=1}^N\aVec_n\aVec_n^Hz_n$
with a higher order of the signed kernel condition than
the one from \thref{Theorem:sample_rate}.
Hence, the condition $S\leq\left\lceil\frac{1}{2}M^2\right\rceil-1$ in
\thref{Theorem:ad_errors} is optimal and can not be improved.
\par
The unique identifiability condition in \cite[Theorem~5]{chen}
can only guarantee recovery of $\xVec\in\mathbb{R}_+^N$ if its non-zero entries
are at specific positions.
Such recovery guarantees are called non-uniform.
The signed kernel condition, on the other hand, guarantees
recovery of all vectors $\xVec\in\Sigma_S^N\cap\mathbb{R}_+^N$
independent of where non-zero entries are according to \thref{Theorem:skc_recovery}.
Such recovery guarantees are called uniform.
	\subsection{Conclusion of Theorem \ref{Theorem:ad_errors} by a Concentration Argument}
\label{Subsection:concentration_argument}
In order to prove \thref{Theorem:ad_errors} one can now use an operator
with signed kernel condition and apply \thref{Theorem:skc} to the case of the relaxed maximum likelihood
estimator. For the non-negative least squares this will be even easier.
It remains to show that for $K$ large enough,
$\norm{\frac{1}{K}\YMat\YMat^H-\SigmaMat-\mathcal{A}\left(\xVec\right)}_{2\rightarrow 2}$
is sufficiently small with arbitrarily high probability.
This can be done by applying a concentration inequality for
sub-exponential random variables often called Bernstein type inequality.
In this subsection a precise definition of sub-exponential and sub-Gaussian
random variables is required.
\par
For a random variable $X$ define
$\norm{X}_{\psi_p}:=\inf_{t>0:\frac{\Exp{\abs{X}^p}}{t^2}\leq 2}t$.
A random variable $X$ is called sub-exponential if $\norm{X}_{\psi_1}<\infty$
and sub-Gaussian if $\norm{X}_{\psi_2}<\infty$. See \cite{high_dimensional_probability} for more information. Note that if
$X\sim\GaussianRV{0}{\sigma^2}$, then
$X$ is sub-Gaussian with $\norm{X}_{\psi_2}=2\sqrt{\frac{2}{3}}\sigma$.
\begin{Lemma}\label{Lemma:concentration}
	There exists a constant $c>0$ such that the following holds true:
	Let the $K$ columns of $\YMat\in\mathbb{C}^{M\times K}$ be mutually independent
	$\CGaussianRV{0}{\SigmaMat'}$ random variables
	for some $\SigmaMat'\in\mathbb{HPD}^M$ and $\xi>0$.
	Then
	\begin{align}
		\label{Equation:Lemma:concentration:xi_event}
			\norm{\frac{1}{K}\YMat\YMat^H-\SigmaMat'}_{2}
		\leq
			\xi
	\end{align}
	holds true with probability of at least
	\begin{align}
		\nonumber
			p'
		:=
			1-M\left(M+1\right)
			\Exp{
				-cK\min
				\left\{
					\frac{
						9\xi^2
					}{
						128M^2\sup_{m'\in\SetOf{M}}\left(\SigmaMat'_{m',m'}\right)^2
					},
					\frac{
						3\xi
					}{
						8\sqrt{2}M\sup_{m'\in\SetOf{M}}\SigmaMat'_{m',m'}
					}
				\right\}
			}.
	\end{align}
\end{Lemma}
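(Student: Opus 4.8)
The plan is to reduce the Frobenius-norm deviation \eqref{Equation:Lemma:concentration:xi_event} to finitely many scalar tail probabilities and then invoke a Bernstein-type concentration inequality for sums of independent, mean-zero sub-exponential random variables (as in \cite{high_dimensional_probability}), carefully tracking all constants so as to land exactly on the stated $p'$.

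First I would reduce to entries. Since $\norm{\frac1K\YMat\YMat^H-\SigmaMat'}_2^2 = \sum_{m,m'\in\SetOf M}\abs{\left(\frac1K\YMat\YMat^H\right)_{m,m'}-\SigmaMat'_{m,m'}}^2$ and both $\frac1K\YMat\YMat^H$ and $\SigmaMat'$ are Hermitian, it suffices to control the real and imaginary parts of the entries on and above the diagonal, which amounts to at most $M(M+1)$ real quantities. If each of these is bounded in absolute value by $\frac{\xi}{M\sqrt2}$, then recovering the lower triangle by Hermitian symmetry shows the displayed sum is at most $\xi^2$, so \eqref{Equation:Lemma:concentration:xi_event} holds.

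Next I would identify the summands as sub-exponential. Fixing $m,m'\in\SetOf M$ and writing $\left(\frac1K\YMat\YMat^H\right)_{m,m'} = \frac1K\sum_{k=1}^K Y_{m,k}\overline{Y_{m',k}}$ with $\Expect{Y_{m,k}\overline{Y_{m',k}}}=\SigmaMat'_{m,m'}$, the real part is $\Real{Y_{m,k}\overline{Y_{m',k}}} = \Real{Y_{m,k}}\Real{Y_{m',k}} + \Imag{Y_{m,k}}\Imag{Y_{m',k}}$, a sum of two products of centered real Gaussians with variances $\frac12\SigmaMat'_{m,m}$ and $\frac12\SigmaMat'_{m',m'}$. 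Using that $\norm X_{\psi_2}=2\sqrt{2/3}\,\sigma$ for $X\sim\GaussianRV{0}{\sigma^2}$, the product bound $\norm{XY}_{\psi_1}\le\norm X_{\psi_2}\norm Y_{\psi_2}$, and $\sqrt{\SigmaMat'_{m,m}\SigmaMat'_{m',m'}}\le\sup_{m''\in\SetOf M}\SigmaMat'_{m'',m''}$, one obtains $\norm{\Real{Y_{m,k}\overline{Y_{m',k}}}}_{\psi_1}\le\frac83\sup_{m''\in\SetOf M}\SigmaMat'_{m'',m''}$, and the same bound holds for the imaginary part; the $K$ terms indexed by $k$ are independent. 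Applying the Bernstein inequality to $\frac1K\sum_k(Z_k-\Expect{Z_k})$ with $Z_k=\Real{Y_{m,k}\overline{Y_{m',k}}}$ (and separately with the imaginary part), deviation level $t=\frac{\xi}{M\sqrt2}$, and the $\psi_1$-bound above, yields a universal $c>0$ for which $\Prob{\abs{\Real{\left(\frac1K\YMat\YMat^H\right)_{m,m'}-\SigmaMat'_{m,m'}}}>t}\le\Exp{-cK\min\left\{\frac{9\xi^2}{128M^2\left(\sup_{m''}\SigmaMat'_{m'',m''}\right)^2},\frac{3\xi}{8\sqrt2\,M\sup_{m''}\SigmaMat'_{m'',m''}}\right\}}$, with the same bound for the imaginary parts. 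A union bound over the at most $M(M+1)$ scalar events from the reduction then gives $p'$.

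The only genuine work is the sub-exponential norm estimate: arranging the product-of-Gaussians computation so that the $\psi_1$-norm of each summand comes out as exactly $\frac83\sup_{m''}\SigmaMat'_{m'',m''}$, since this constant is precisely what produces the numbers $128$, $8\sqrt2$ and $9$ in $p'$. The reduction to entries, the choice $t=\frac{\xi}{M\sqrt2}$, and the final union bound are routine bookkeeping.
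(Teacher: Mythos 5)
Your proposal is correct and follows essentially the same route as the paper: reduce the Frobenius-norm bound to entrywise control of the real and imaginary parts of $\frac{1}{K}\YMat\YMat^H-\SigmaMat'$, bound the $\psi_1$-norm of each centered summand by $\frac{8}{3}\sup_{m''}\SigmaMat'_{m'',m''}$ via the sub-Gaussian-product argument from \cite{high_dimensional_probability}, apply Bernstein's inequality at deviation level $\xi/(M\sqrt{2})$, and finish with a union bound over the $O(M^2)$ scalar events. Your bookkeeping of the real/imaginary-part variances ($\frac12\SigmaMat'_{m,m}$ each) and the two-term decomposition $\Real{Y_{m,k}}\Real{Y_{m',k}}+\Imag{Y_{m,k}}\Imag{Y_{m',k}}$ is in fact slightly more careful than the paper's exposition of the same step, and it lands on the same constant $\frac83$.
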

\begin{proof}
	Let $c>0$ be the numerical constant from \cite[Theorem~2.8.1]{high_dimensional_probability}.
	Let the columns of $\YMat$ be denoted by $\yVec_k$ for $k\in\SetOf{K}$.
	Given $m_1,m_2\in\SetOf{M},k\in\SetOf{K}$ define
	\begin{align}
		\nonumber
			R_{m_1,m_2,k}
		:=&
			\Real{y_{m_1,k}\CC{y_{m_2,k}}}
			-\Real{\SigmaMat'_{m_1,m_2}}
		\\=&\nonumber
			\Real{y_{m_1,k}}\Real{y_{m_2,k}}
			+\Imag{y_{m_1,k}}\Imag{y_{m_2,k}}
			-\Real{\SigmaMat'_{m_1,m_2}}
		\TextAnd
		\\\nonumber
			I_{m_1,m_2,k}
		:=&
			\Imag{y_{m_1,k}\CC{y_{m_2,k}}}
			-\Imag{\SigmaMat'_{m_1,m_2}}
		\\=&\nonumber
			\Imag{y_{m_1,k}}\Real{y_{m_2,k}}
			-\Real{y_{m_1,k}}\Imag{y_{m_2,k}}
			-\Imag{\SigmaMat'_{m_1,m_2}}.
	\end{align}
	Since
	$\yVec_k\sim\CGaussianRV{0}{\SigmaMat'}$,
	one gets
	$\Real{y_{m,k}},\Imag{y_{m,k}}\sim\GaussianRV{0}{\SigmaMat'_{m,m}}$
	and hence these are sub-Gaussian with
	\begin{align}
		\nonumber
			\norm{\Real{y_{m,k}}}_{\psi_2}
		=
			\norm{\Imag{y_{m,k}}}_{\psi_2}
		=
			2\sqrt{\frac{2}{3}}\sqrt{\SigmaMat'_{m,m}}
		\leq
			2\sqrt{\frac{2}{3}}\sup_{m'\in\SetOf{M}}
			\sqrt{\SigmaMat'_{m',m'}}.
	\end{align}
	Due to \cite[Lemma~2.7.7]{high_dimensional_probability} the random variables $R_{m_1,m_2,k}$
	and $I_{m_1,m_2,k}$ are sub-exponential random variables with
	\begin{align}
		\label{Equation:Lemma:concentration:subexp_bound}
			\norm{R_{m_1,m_2,k}}_{\psi_1}
		\leq
			\frac{8}{3}\sup_{m'\in\SetOf{M}}
			\SigmaMat'_{m',m'}
		\geq
			\norm{I_{m_1,m_2,k}}_{\psi_1}.
	\end{align}
	Since $\yVec_k\sim\CGaussianRV{0}{\SigmaMat'}$,
	it follows that
	$\Expect{\frac{1}{K}\YMat\YMat^H}=\frac{1}{K}\sum_{k=1}^K\Expect{\yVec_k\yVec_k^H}
		=\SigmaMat'$
	and thus, $\Expect{R_{m_1,m_2,k}}=0=\Expect{I_{m_1,m_2,k}}$.
	Since the columns of $\YMat$ are independent, the random variables
	$R_{m_1,m_2,k}$ for $k\in\SetOf{K}$ are mutually independent,
	and the random variables $I_{m_1,m_2,k}$ for $k\in\SetOf{K}$
	are mutually independent.
	By the Bernstein type inequality \cite[Theorem~2.8.1]{high_dimensional_probability} one gets
	\begin{align}
		\nonumber
			\Prob{\abs{\sum_{k=1}^KR_{m_1,m_2,k}}\geq t}
		\leq&
			2\Exp{-c\min\left\{\frac{t^2}{\sum_{k=1}^K\norm{R_{m_1,m_2,k}}_{\psi_1}^2},\frac{t}{\max_{k\in\SetOf{K}}\norm{R_{m_1,m_2,k}}_{\psi_1}}\right\}}
	\end{align}
	for all $t\geq 0$.
	Applying \eqref{Equation:Lemma:concentration:subexp_bound} to this and choosing
	$t:=K2^{-\frac{1}{2}}M^{-1}\xi$ yields
	\begin{align}
		&\nonumber
			\Prob{
				\abs{\frac{1}{K}\sum_{k=1}^KR_{m_1,m_2,k}}
				\geq 2^{-\frac{1}{2}}M^{-1}\xi
			}
		\\\leq&\label{Equation:Lemma:concentration:R_prob}
			2\Exp{
				-cK\min
				\left\{
					\frac{
						9\xi^2
					}{
						128M^2\sup_{m'\in\SetOf{M}}\left(\SigmaMat'_{m',m'}\right)^2
					},
					\frac{
						3\xi
					}{
						8\sqrt{2}M\sup_{m'\in\SetOf{M}}\SigmaMat'_{m',m'}
					}
				\right\}
			}
	\end{align}
	for all $m_1,m_2\in\SetOf{M},K\in\mathbb{N}$.
	Similarly, one can get
	\begin{align}
		&\nonumber
			\Prob{
				\abs{\frac{1}{K}\sum_{k=1}^KI_{m_1,m_2,k}}
				\geq 2^{-\frac{1}{2}}M^{-1}\xi
			}
		\\\leq&\label{Equation:Lemma:concentration:I_prob}
			2\Exp{
				-cK\min
				\left\{
					\frac{
						9\xi^2
					}{
						128M^2\sup_{m'\in\SetOf{M}}\left(\SigmaMat'_{m',m'}\right)^2
					},
					\frac{
						3\xi
					}{
						8\sqrt{2}M\sup_{m'\in\SetOf{M}}\SigmaMat'_{m',m'}
					}
				\right\}
			}
	\end{align}
	for all $m_1,m_2\in\SetOf{M},K\in\mathbb{N}$.
	Due to \eqref{Equation:Lemma:concentration:R_prob}, \eqref{Equation:Lemma:concentration:I_prob}
	and the symmetries $R_{m_1,m_2,k}=R_{m_2,m_1,k}$ and $I_{m_1,m_2,k}=-I_{m_2,m_1,k}$	
	the event
	\begin{align}
		\label{Equation:Lemma:concentration:RI_event}
			\abs{\frac{1}{K}\sum_{k=1}^KR_{m_1,m_2,k}}
		<
			2^{-\frac{1}{2}}M^{-1}\xi
		\TextAnd
			\abs{\frac{1}{K}\sum_{k=1}^KI_{m_1,m_2,k}}
		<
			2^{-\frac{1}{2}}M^{-1}\xi
		\TextForAll
			m_1,m_2\in\SetOf{M}
	\end{align}
	holds true with probability of at least
	\begin{align}
		\nonumber
			p'
		:=
			1-M\left(M+1\right)
			\Exp{
				-cK\min
				\left\{
					\frac{
						9\xi^2
					}{
						128M^2\sup_{m'\in\SetOf{M}}\left(\SigmaMat'_{m',m'}\right)^2
					},
					\frac{
						3\xi
					}{
						8\sqrt{2}M\sup_{m'\in\SetOf{M}}\SigmaMat'_{m',m'}
					}
				\right\}
			}.
	\end{align}
	\par
	It remains to show that if \eqref{Equation:Lemma:concentration:RI_event} is fulfilled, then
	\eqref{Equation:Lemma:concentration:xi_event} is also fulfilled.
	Thus, assume \eqref{Equation:Lemma:concentration:RI_event} is fulfilled.
	Then,
	\begin{align}
		\nonumber
			\abs{\left(\frac{1}{K}\YMat\YMat^H-\mathcal{A}\left(\xVec\right)-\SigmaMat\right)_{m_1,m_2}}^2
		=&
			\abs{\frac{1}{K}\sum_{k=1}^KR_{m_1,m_2,k}+i\frac{1}{K}\sum_{k=1}^KI_{m_1,m_2,k}}^2
		\\=&\nonumber
			\left(\frac{1}{K}\sum_{k=1}^KR_{m_1,m_2,k}\right)^2
			+\left(\frac{1}{K}\sum_{k=1}^KI_{m_1,m_2,k}\right)^2
		\leq
			M^{-2}\xi^2
	\end{align}
	for all $m_1,m_2\in\SetOf{M}$.
	It follows that
		\nonumber
	\begin{align}
			\norm{\frac{1}{K}\YMat\YMat^H-\mathcal{A}\left(\xVec\right)-\SigmaMat}_{2}
		=&
			\sqrt{\sum_{m_1=1}^M\sum_{m_2=1}^M\abs{\left(\frac{1}{K}\YMat\YMat^H-\mathcal{A}\left(\xVec\right)-\SigmaMat\right)_{m_1,m_2}}^2}
		\leq
			\xi
	\end{align}
	which finishes the proof.
\end{proof}
Combining this with \thref{Theorem:skc_properties} yields the part about the non-negative least
squares estimator in \thref{Theorem:ad_errors}.
\begin{proof}[Proof of \thref{Theorem:ad_errors} with \eqref{Equation:Theorem:ad_errors:nnlr_optimization}]
	Choose the norm $\norm{\cdot}:=\norm{\cdot}_{2\rightarrow 2}$ for the robustness constant.
	Let $\SigmaMat\in\mathbb{HPD}^M$, $\xVec\in\mathbb{R}^N_+$, $\epsilon>0$ and $p\in\left(0,1\right)$,
	and choose
	\begin{align}
		\nonumber
			K
		\geq&
			K_0
		:=
			-\frac{1}{c}\Ln{\frac{1-p}{M\left(M+1\right)}}
		\\&\label{Equation:Theorem:nnlr_errors:def_K}\hspace{30pt}
			\cdot\max\left\{
				\frac{
					512M^2\sup_{m'\in\SetOf{M}}\left(\mathcal{A}\left(\xVec\right)+\SigmaMat\right)_{m',m'}^2
				}{
					9\tau\left(\mathcal{A}\right)^2\epsilon^2
				},
				\frac{
					16\sqrt{2}M\sup_{m'\in\SetOf{M}}\left(\mathcal{A}\left(\xVec\right)+\SigmaMat\right)_{m',m'}
				}{
					3\tau\left(\mathcal{A}\right)\epsilon
				}
			\right\}
	\end{align}
	where $c>0$ is the numerical constant from \thref{Lemma:concentration}.
	Applying \thref{Lemma:concentration} with $\SigmaMat':=\mathcal{A}\left(\xVec\right)+\SigmaMat$
	and $\xi:=\frac{\tau\left(\mathcal{A}\right)}{2}\epsilon$ yields that
	\begin{align}
		\label{Equation:Theorem:nnlr_errors:delta_event}
			\norm{\frac{1}{K}\YMat\YMat^H-\mathcal{A}\left(\xVec\right)-\SigmaMat}_{2}
		\leq
			\frac{\tau\left(\mathcal{A}\right)}{2}\epsilon
	\end{align}
	is fulfilled with probability at least $p'\geq p$.
	If \eqref{Equation:Theorem:nnlr_errors:delta_event} is fulfilled,
	one can apply \thref{Theorem:skc_properties} with $\WMat':=\frac{1}{K}\YMat\YMat^H-\SigmaMat$,
	which yields
	\begin{align}
		\nonumber
			\norm{\xVec-\zVec}_2
		\leq
			\frac{2}{\tau\left(\mathcal{A}\right)}\norm{\frac{1}{K}\YMat\YMat^H-\SigmaMat-\mathcal{A}\left(\xVec\right)}_2
		\leq
			\epsilon
	\end{align}
	for any minimizer $\zVec$ of \eqref{Equation:Theorem:ad_errors:nnlr_optimization}.
\end{proof}
On the other hand, \thref{Lemma:concentration} can be combined with
\thref{Theorem:skc} to give the part about the relaxed maximum likelihood
estimator in \thref{Theorem:ad_errors}.
\begin{proof}[Proof of \thref{Theorem:ad_errors} with \eqref{Equation:Theorem:ad_errors:ml_optimization}]
	Let $\SigmaMat\in\mathbb{HPD}^M$, $\xVec\in\mathbb{R}^N_+$, $\epsilon>0$ and $p\in\left(0,1\right)$,
	and choose any $0<\beta<\lambda_1\left(\mathcal{A}\left(\xVec\right)+\SigmaMat\right)$ and $\eta>0$.
	Note that $\mathcal{H}:=\left\{\mathcal{A}\left(\zVec\right)+\SigmaMat:\zVec\in\mathbb{R}_+^N\right\}
		\subset\mathbb{HPD}^M$
	so that all conditions of \thref{Theorem:skc} are fulfilled.
	Let $\delta$ be from \thref{Theorem:skc},
	and choose
	\begin{align}
		\nonumber
			K
		\geq
			K_0
		:=&
			\max\Bigg\{
				M,
				-\frac{1}{c}\Ln{\frac{1-p}{M\left(M+1\right)}}
		\\&\label{Equation:Theorem:ml_errors:def_K}\hspace{20pt}
				\cdot\max\left\{
					\frac{
						128M^2\sup_{m'\in\SetOf{M}}\left(\mathcal{A}\left(\xVec\right)+\SigmaMat\right)_{m',m'}^2
					}{
						9\delta\left(\epsilon\right)^2
					},
					\frac{
						8\sqrt{2}M\sup_{m'\in\SetOf{M}}\left(\mathcal{A}\left(\xVec\right)+\SigmaMat\right)_{m',m'}
					}{
						3\delta\left(\epsilon\right)
					}
				\right\}
			\Bigg\}
	\end{align}
	where $c>0$ is the numerical constant from \thref{Lemma:concentration}.
	Applying \thref{Lemma:concentration} with $\SigmaMat':=\mathcal{A}\left(\xVec\right)+\SigmaMat$
	and $\xi:=\delta\left(\epsilon\right)$ yields that
	\begin{align}
		\label{Equation:Theorem:ml_errors:delta_event}
			\norm{\frac{1}{K}\YMat\YMat^H-\mathcal{A}\left(\xVec\right)-\SigmaMat}_{2\rightarrow 2}
		\leq&
			\norm{\frac{1}{K}\YMat\YMat^H-\mathcal{A}\left(\xVec\right)-\SigmaMat}_{2}
		\leq
			\delta\left(\epsilon\right)
	\end{align}
	is fulfilled with probability at least $p'\geq p$.
	Let the columns of $\YMat$ be denoted by $\yVec_k$ for $k\in\SetOf{K}$.
	Since the columns of $\YMat$ are complex normal distributed and $K\geq M$,
	$\YMat$ has full rank and thus $\frac{1}{K}\YMat\YMat^H\in\mathbb{HPD}^M$ with probability
	of at least $1$.
	If additionally \eqref{Equation:Theorem:ml_errors:delta_event}
	is fulfilled,
	one can apply \thref{Theorem:skc} with $\WMat:=\frac{1}{K}\YMat\YMat^H$,
	which yields $\norm{\xVec-\zVec}_2\leq\epsilon$ for
	any minimizer $\zVec$ of \eqref{Equation:Theorem:ad_errors:ml_optimization}.
\end{proof}
	\section{Proof of Theorem \ref{Theorem:coord_desc_convergence}: Coordinate Descent for Relaxed Maximum Likelihood Estimation}
\label{Subsection:proof_coordinate_descent}
\noindent
The proof of \thref{Theorem:coord_desc_convergence} is based on
\cite[Theorem~4.1(c)]{convergence}. For this, the compactness
of level sets is required. This is shown first for the
non-negative least residual.
\begin{Lemma}
	\label{Lemma:nnlr_level_set_compact}
	Let $\mathcal{A}:\mathbb{C}^N\rightarrow\mathbb{C}^{M\times M}$ have
	signed kernel condition of order $S$.
	For any $\WMat'\in\mathbb{C}^{M\times M}$ and $\gamma>0$ the level set
	\begin{align}
		\nonumber
			\mathcal{G}
		:=
			\left\{\zVec\in\mathbb{R}^N_+:\norm{\mathcal{A}\left(\zVec\right)-\WMat'}_{2\rightarrow 2}\leq \gamma\right\}
	\end{align}
	is compact.
\end{Lemma}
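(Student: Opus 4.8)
The plan is to show that $\mathcal{G}$ is closed and bounded in the finite-dimensional space $\mathbb{R}^N$, hence compact. Closedness is immediate: the map $\zVec\mapsto\norm{\mathcal{A}(\zVec)-\WMat'}_{2\rightarrow 2}$ is continuous (since $\mathcal{A}$ is linear on a finite-dimensional space and the norm is continuous), so $\mathcal{G}$ is the preimage of the closed set $[0,\gamma]$ intersected with the closed set $\mathbb{R}^N_+$, and therefore closed. The real work is boundedness, and this is precisely where the signed kernel condition enters through the robustness constant $\tau(\mathcal{A})$ of \thref{Definition:robustness_constant}.

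First I would invoke \thref{Theorem:skc_properties}: since $\mathcal{A}$ has the signed kernel condition of order $S$, one has $\tau(\mathcal{A})>0$, where $\tau(\mathcal{A})$ is computed with the norm $\norm{\cdot}:=\norm{\cdot}_{2\rightarrow 2}$. The definition of $\tau(\mathcal{A})$ gives, for every $\zVec\in\mathbb{R}^N_+$ and every $\xVec\in\Sigma_S^N\cap\mathbb{R}^N_+$ with $\zVec\neq\xVec$,
\begin{align}
	\nonumber
		\norm{\mathcal{A}(\zVec-\xVec)}_{2\rightarrow 2}
	\geq
		\tau(\mathcal{A})\norm{\zVec-\xVec}_2.
\end{align}
In particular, taking $\xVec=\Vec{0}\in\Sigma_S^N\cap\mathbb{R}^N_+$ yields $\norm{\mathcal{A}(\zVec)}_{2\rightarrow 2}\geq\tau(\mathcal{A})\norm{\zVec}_2$ for all $\zVec\in\mathbb{R}^N_+$. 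Then for any $\zVec\in\mathcal{G}$ the triangle inequality gives
\begin{align}
	\nonumber
		\tau(\mathcal{A})\norm{\zVec}_2
	\leq
		\norm{\mathcal{A}(\zVec)}_{2\rightarrow 2}
	\leq
		\norm{\mathcal{A}(\zVec)-\WMat'}_{2\rightarrow 2}+\norm{\WMat'}_{2\rightarrow 2}
	\leq
		\gamma+\norm{\WMat'}_{2\rightarrow 2},
\end{align}
so $\norm{\zVec}_2\leq(\gamma+\norm{\WMat'}_{2\rightarrow 2})/\tau(\mathcal{A})$, and $\mathcal{G}$ is bounded. Combining boundedness with closedness in $\mathbb{R}^N$ gives compactness by Heine–Borel.

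The only subtlety — and the main point to get right — is the application of $\tau(\mathcal{A})>0$ with $\xVec=\Vec{0}$: one must check that $\Vec{0}$ indeed lies in $\Sigma_S^N\cap\mathbb{R}^N_+$ (it does, as $S\geq 1$ and the zero vector is trivially non-negative and $S$-sparse) so that the infimum in \thref{Definition:robustness_constant} is taken over a set including the pairs $(\zVec,\Vec{0})$. I expect no real obstacle here; the argument is short once \thref{Theorem:skc_properties} supplies $\tau(\mathcal{A})>0$.
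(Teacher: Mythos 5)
Your proof is correct and takes essentially the same route as the paper: both reduce boundedness of $\mathcal{G}$ to the positivity of the robustness constant $\tau(\mathcal{A})$ guaranteed by Theorem \ref{Theorem:skc_properties}, then apply the triangle inequality. The only difference is cosmetic: you invoke the definition of $\tau(\mathcal{A})$ directly with $\xVec=\Vec{0}$ (obtaining $\norm{\mathcal{A}(\zVec)}_{2\to 2}\geq\tau(\mathcal{A})\norm{\zVec}_2$ without a factor of $2$), whereas the paper observes that any $\zVec\in\mathcal{G}$ trivially minimizes $\norm{\mathcal{A}(\zVec')-\mathcal{A}(\zVec)}_{2\to 2}$ over $\zVec'\in\mathbb{R}^N_+$ and then routes through the robustness inequality \eqref{Equation:Theorem:skc_properties:nnlr_robustness}, picking up the factor $2/\tau(\mathcal{A})$; both yield an explicit bound and the conclusion is identical.
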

\begin{proof}
	It is clear that $\mathcal{G}$ is closed, hence it remains to show that it is bounded.
	This follows solely from the fact that the operator
	$\mathcal{A}$ has the signed kernel condition of order $S$.
	Let $\xVec\in\Sigma_S^N\cap\mathbb{R}_+^N\supset\ZeroSet$ be arbitrary.
	Let $\zVec\in\mathcal{G}$, then $\zVec$ is a minimizer of
	$\min_{\zVec'\in\mathbb{R}_+^N}\norm{\mathcal{A}\left(\zVec'\right)-\mathcal{A}\left(\zVec\right)}_{2\rightarrow 2}$. By
	\thref{Theorem:skc_properties} one gets
	\begin{align}
		\nonumber
			\norm{\zVec-\xVec}_2
		\leq&
			\frac{2}{\tau\left(\mathcal{A}\right)}\norm{\mathcal{A}\left(\zVec\right)-\mathcal{A}\left(\xVec\right)}_{2\rightarrow 2}
		\leq
			\frac{2}{\tau\left(\mathcal{A}\right)}\norm{\mathcal{A}\left(\zVec\right)-\WMat'}_{2\rightarrow 2}
			+\frac{2}{\tau\left(\mathcal{A}\right)}\norm{\WMat'-\mathcal{A}\left(\xVec\right)}_{2\rightarrow 2}
		\\\leq&\nonumber
			\frac{2}{\tau\left(\mathcal{A}\right)}\gamma
			+\frac{2}{\tau\left(\mathcal{A}\right)}\norm{\WMat'-\mathcal{A}\left(\xVec\right)}_{2\rightarrow 2}.
	\end{align}
	Thus, $\mathcal{G}$ is bounded and hence compact.
\end{proof}
The compactness of level sets of the relaxed maximum likelihood estimator
follows from this.
\begin{Lemma}
	\label{Lemma:ml_level_set_compact}
	Let $\mathcal{A}:\mathbb{C}^N\rightarrow\mathbb{C}^{M\times M}$ defined by
	$\mathcal{A}\left(\zVec\right)=\sum_{n=1}^N\aVec_n\aVec_n^Hz_n$ have
	signed kernel condition of order $S$.
	Then, $\aVec_n\neq 0$ for all $n\in\SetOf{N}$. Further, for any $\SigmaMat\in\mathbb{HPD}^M$,
	$\WMat\in\mathbb{HPD}^M$ and $\gamma>0$ the level set
	\begin{align}
		\nonumber
			\mathcal{G}
		:=
			\left\{\zVec\in\mathbb{R}^N_+:
				\Trace{\left(\sum_{n=1}^N\aVec_{n}\aVec_{n}^Hz_{n}+\SigmaMat\right)^{-1}\WMat}
			+\Ln{\Det{\sum_{n=1}^N\aVec_{n}\aVec_{n}^Hz_{n}+\SigmaMat}}
			\leq \gamma\right\}
	\end{align}
	is compact.
\end{Lemma}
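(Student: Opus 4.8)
The plan is to prove the two assertions separately and to obtain compactness of $\mathcal{G}$ by reducing it to the compactness of sublevel sets of the non-negative least residual already established in \thref{Lemma:nnlr_level_set_compact}. First I would settle $\aVec_n\neq 0$. Suppose $\aVec_n=0$ for some $n\in\SetOf{N}$, and let $\vVec\in\mathbb{R}^N$ be the vector with $v_n=1$ and all other entries zero. Then $\mathcal{A}\left(\vVec\right)=\aVec_n\aVec_n^H=0$, so $\vVec\in\Kernel{\mathcal{A}}\cap\mathbb{R}^N\setminus\ZeroSet$; yet $\vVec$ has no negative coordinates, so $\SetSize{\left\{n'\in\SetOf{N}:v_{n'}<0\right\}}=0$ is not greater than $S$, contradicting the signed kernel condition. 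Hence $\aVec_n\neq 0$ for every $n$.

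Next I would check that $\mathcal{G}$ is closed in $\mathbb{R}^N$. For each $\zVec\in\mathbb{R}^N_+$ the matrix $\sum_{n=1}^N\aVec_n\aVec_n^Hz_n+\SigmaMat$ lies in $\mathbb{HPD}^M$, being the sum of the positive semidefinite $\sum_{n=1}^N\aVec_n\aVec_n^Hz_n$ and the positive definite $\SigmaMat$. Since matrix inversion, the trace and the logarithm of the determinant are continuous on $\mathbb{HPD}^M$, the objective $\zVec\mapsto\Trace{\left(\sum_{n=1}^N\aVec_n\aVec_n^Hz_n+\SigmaMat\right)^{-1}\WMat}+\Ln{\Det{\sum_{n=1}^N\aVec_n\aVec_n^Hz_n+\SigmaMat}}$ is continuous on $\mathbb{R}^N_+$. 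Thus $\mathcal{G}$ is the intersection of a closed sublevel set of this map with the closed set $\mathbb{R}^N_+$, hence closed in $\mathbb{R}^N$.

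For boundedness, fix $\zVec\in\mathcal{G}$ and set $\ZMat:=\sum_{n=1}^N\aVec_n\aVec_n^Hz_n+\SigmaMat$. From $\zVec\geq 0$ one gets $\sum_{n=1}^N\aVec_n\aVec_n^Hz_n\succeq 0$, hence $\ZMat\succeq\SigmaMat$ and $\lambda_m\left(\ZMat\right)\geq\lambda_M\left(\SigmaMat\right)>0$ for all $m\in\SetOf{M}$. Using $\Trace{\ZMat^{-1}\WMat}=\Trace{\WMat^\frac{1}{2}\ZMat^{-1}\WMat^\frac{1}{2}}\geq 0$ and $\Ln{\Det{\ZMat}}=\sum_{m=1}^M\Ln{\lambda_m\left(\ZMat\right)}$, the membership $\zVec\in\mathcal{G}$ yields
\begin{align}
	\nonumber
	\gamma
	\geq
	\Trace{\ZMat^{-1}\WMat}+\sum_{m=1}^M\Ln{\lambda_m\left(\ZMat\right)}
	\geq
	\Ln{\lambda_1\left(\ZMat\right)}+\left(M-1\right)\Ln{\lambda_M\left(\SigmaMat\right)},
\end{align}
so $\lambda_1\left(\ZMat\right)\leq\Exp{\gamma-\left(M-1\right)\Ln{\lambda_M\left(\SigmaMat\right)}}$ and therefore $\norm{\sum_{n=1}^N\aVec_n\aVec_n^Hz_n}_{2\rightarrow 2}=\norm{\ZMat-\SigmaMat}_{2\rightarrow 2}\leq\Exp{\gamma-\left(M-1\right)\Ln{\lambda_M\left(\SigmaMat\right)}}+\norm{\SigmaMat}_{2\rightarrow 2}=:\gamma'$. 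Consequently $\mathcal{G}$ is contained in $\left\{\zVec\in\mathbb{R}^N_+:\norm{\mathcal{A}\left(\zVec\right)}_{2\rightarrow 2}\leq\gamma'\right\}$, which is compact by \thref{Lemma:nnlr_level_set_compact} applied with $\WMat'=0$; hence $\mathcal{G}$ is bounded, and being also closed in the finite-dimensional space $\mathbb{R}^N$ it is compact. The only step needing care is the displayed chain: one drops the nonnegative trace term and bounds the $M-1$ smallest log-eigenvalues of $\ZMat$ from below by $\Ln{\lambda_M\left(\SigmaMat\right)}$, which isolates $\Ln{\lambda_1\left(\ZMat\right)}$ and bounds $\lambda_1\left(\ZMat\right)$ uniformly over $\mathcal{G}$; the remainder is a direct appeal to \thref{Lemma:nnlr_level_set_compact}, which is where the signed kernel condition actually enters.
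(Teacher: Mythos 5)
Your proof is correct, and the boundedness argument is genuinely different from the paper's. The paper routes through the machinery it has already built: it invokes the sufficiently convex tuple from \thref{Lemma:trace_logdet_tuple}, upgrades it via \thref{Lemma:tuple_convex} to a sufficiently nice tuple, applies the general compactness result \thref{Lemma:level_set_compact1} to the set $\mathcal{H}=\{\mathcal{A}(\zVec)+\SigmaMat:\zVec\in\mathbb{R}_+^N\}$, and uses the identity \eqref{Equation:Lemma:trace_logdet_tuple:trace_logdet=sumg} to translate back, obtaining a bound on $\norm{\mathcal{A}(\zVec)+\SigmaMat}_{2\rightarrow 2}$ before appealing to \thref{Lemma:nnlr_level_set_compact}. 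You instead derive the operator-norm bound directly and elementarily: $\ZMat\succeq\SigmaMat$ gives the lower eigenvalue bound $\lambda_m(\ZMat)\geq\lambda_M(\SigmaMat)$, the trace term is nonnegative and is dropped, and isolating $\Ln{\lambda_1(\ZMat)}$ produces the explicit bound $\lambda_1(\ZMat)\leq\Exp{\gamma-(M-1)\Ln{\lambda_M(\SigmaMat)}}$. Both proofs converge on \thref{Lemma:nnlr_level_set_compact} for the final step, which is indeed where the signed kernel condition enters. Your version is shorter, self-contained, and yields an explicit numerical bound $\gamma'$, which the paper's route does not make visible; the paper's version, on the other hand, demonstrates that the statement falls out of the general tuple framework without ad hoc estimates. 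The argument for $\aVec_n\neq 0$ is the same as the paper's, phrased as a contradiction rather than directly. One small remark: since $\ZMat-\SigmaMat=\mathcal{A}(\zVec)\succeq 0$ and $\SigmaMat\succeq 0$, you could tighten your $\gamma'$ by noting $\norm{\mathcal{A}(\zVec)}_{2\rightarrow 2}=\lambda_1(\ZMat-\SigmaMat)\leq\lambda_1(\ZMat)$, dropping the additive $\norm{\SigmaMat}_{2\rightarrow 2}$; but this is a cosmetic improvement, not a correction.
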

\begin{proof}
	Since $\mathcal{A}\left(\zVec\right)=\sum_{n=1}^N\aVec_{n}\aVec_n^Hz_n$
	has the signed kernel condition of order $S$
	every real kernel vector of $\mathcal{A}$ needs to have at least $S+1>0$ negative entries.
	Hence, the standard unit vector $\eVec$ with $e_n=1$ and $e_{n'}=0$ for all $n'\neq n$
	is not a kernel vector. It follows that $0\neq\mathcal{A}\left(\eVec\right)=\aVec_n\aVec_n^H$
	and thus $\aVec_n\neq 0$.
	\par
	Due to continuity the level set $\mathcal{G}$ is closed. It remains to show that it is bounded.
	Let $\left(\g{},g_1,g_2,\nu,\epsilon_0\right)$ be the sufficiently convex tuple of
	\thref{Lemma:trace_logdet_tuple}.
	By \thref{Lemma:tuple_convex} there exists $\delta_1,\delta_2,g_1,g_2$ such the tuple
	$\left(\g{},g_1,g_2,\delta_1,\delta_2\right)$ is sufficiently nice.
	Further, let
	$\mathcal{H}:=\left\{\mathcal{A}\left(\zVec\right)+\SigmaMat:\zVec\in\mathbb{R}_+^N\right\}
		\subset\mathbb{HPD}^M$.
	This allows one to apply \thref{Lemma:level_set_compact1}, which yields that the level set
	$\mathcal{G}':=\left\{\ZMat\in\mathcal{H}:\sum_{m=1}^M\g{\lambda_m\left(\WMat^\frac{1}{2}\ZMat^{-1}\WMat^\frac{1}{2}\right)}\leq \gamma'\right\}$
	is compact for every $\gamma'\in\mathbb{R}$.
	By \eqref{Equation:Lemma:trace_logdet_tuple:trace_logdet=sumg} it follows that
	$\mathcal{G}'':=\left\{\ZMat\in\mathcal{H}:\Trace{\ZMat^{-1}\WMat}+\Ln{\Det{\ZMat}}\leq \gamma\right\}$
	is compact, and thus there exists an $\alpha>0$ such that
	$\norm{\ZMat}_{2\rightarrow 2}\leq \alpha$ for all $\ZMat\in\mathcal{G}''$.
	Since $\mathcal{A}\left(\zVec\right)+\SigmaMat\in\mathcal{G''}$ for all
	$\zVec\in\mathcal{G}$, one gets
	$\norm{\mathcal{A}\left(\zVec\right)+\SigmaMat}_{2\rightarrow 2}\leq \alpha$
	for all $\zVec\in \mathcal{G}$.
	Thus, $\mathcal{G}\subset\left\{\zVec\in\mathbb{R}^N_+:\norm{\mathcal{A}\left(\zVec\right)+\SigmaMat}_{2\rightarrow 2}\leq \alpha\right\}$.
	The latter is a level set and due to \thref{Lemma:nnlr_level_set_compact} bounded.
	Hence, $\mathcal{G}$ is bounded and thus compact.
\end{proof}
It remains to prove the statement about stationary points of the coordinate
descent method.
\begin{proof}[Proof of \thref{Theorem:coord_desc_convergence}]
	The proof follows from \cite[Theorem~4.1(c)]{convergence}.
	Let
	\begin{align}
		\nonumber
			f_0\left(\zVec\right)
		:=&
			\Trace{\left(\sum_{n''=1}^N\aVec_{n''}\aVec_{n''}^Hz_{n''}+\SigmaMat\right)^{-1}\frac{1}{K}\YMat\YMat^H}
			+\Ln{\Det{\sum_{n''=1}^N\aVec_{n''}\aVec_{n''}^Hz_{n}+\SigmaMat}}
	\end{align}
	and
	\begin{align}
		\nonumber
			f_{n''}\left(z\right)
		:=
			\begin{Bmatrix}
					0 & \TextIf & z\geq 0
				\\
					\infty & \TextIf & z<0
			\end{Bmatrix}
	\end{align}
	for all $n''\in\SetOf{N}$
	as well as
	$f\left(\zVec\right):=f_0\left(\zVec\right)+\sum_{n''=1}^Nf_{n''}\left(z\right)$
	so that \eqref{Equation:Theorem:ad_errors:ml_optimization} can be written as
	\begin{align}
		\nonumber
			\min_{\zVec\in\mathbb{R}^N}f\left(\zVec\right).
	\end{align}
	At first, note that at the end of every for iteration
	$\SigmaMat'=\left(\SigmaMat+\sum_{n''=1}^N\aVec_{n''}\aVec_{n''}^Hx_{n''}\right)^{-1}$
	and
	$\left(\xVec'_{i,n'}\right)_{\sigma\left(n'\right)}$ is a unique minimizer of coordinate update
	\begin{align}
		\nonumber
			\min_{z\in\mathbb{R}}
		&
			f\left(x_{1},\dots,x_{\sigma\left(n'\right)-1},z,x_{\sigma\left(n'\right)+1},\dots,x_N\right)
	\end{align}
	which was established in \cite[Equation~(19)-(23)]{fengler}.
	Further, note that at the end of every iteration
	$\SigmaMat\in\mathbb{HPD}^M$, $\xVec\in\mathbb{R}_+^N$
	and thus $\SigmaMat'\in\mathbb{HPD}^M$.
	Since $\aVec_n\neq 0$ by \thref{Lemma:ml_level_set_compact},
	one gets $\aVec_n^H\SigmaMat'\aVec_n\neq 0$
	so that the next minimizer is indeed well defined and unique.
	Thus, $\left(\xVec_{i}\right)_{i\in\mathbb{N}}$
	is the sequence generated by the block coordinate descent method in
	\cite[page~478]{convergence}.
	\par
	Since $\YMat$ has full rank, one gets
	$\WMat:=\frac{1}{K}\YMat\YMat^H\in\mathbb{HPD}^M$.
	By $\xVec_0\in\mathbb{R}_+^N$ and
	applying \thref{Lemma:ml_level_set_compact} the level set
	$\mathcal{G}
		=\left\{\zVec\in\mathbb{R}^N_+:f_0\left(\zVec\right)\leq f_0\left(\xVec_0\right)\right\}
		=\left\{\zVec\in\mathbb{R}^N:f\left(\zVec\right)\leq f\left(\xVec_0\right)\right\}$
	is compact. Further, $f_0$ is smooth on $\mathbb{R}_+^N$.
	The indices $n=\sigma\left(n'\right)$
	are chosen to satisfy the essentially cyclic rule as defined in \cite[page~478]{convergence}.
	Since $f_0$ is smooth in an open neighborhood $\textnormal{dom}f_0$ around $\mathbb{R}_+^N$,
	\cite[Lemma~3.1]{convergence} yields that $f$ is regular in all $\zVec\in\mathbb{R}_+^N$.
	By \cite[Theorem~4.1(c)]{convergence} every cluster point of
	$\left(\xVec_{\left(i-1\right)N+N-1}\right)_{i\in\mathbb{N}}
		=\left(\xVec'_{i,N-1}\right)_{i\in\mathbb{N}}$
	is a stationary point and a coordinate-wise global minimum and of $f$.
	\par
	For any $n'''\in\SetOf{N}$ \thref{Algorithm:coord_desc} with the input
	$\SigmaMat+\sum_{n''=1}^N\aVec_{n''}\aVec_{n''}^H\left(\xVec_{1+n'''}\right)_{n''}$
	instead of $\SigmaMat$, with $\xVec_{1+n'''}$ instead of $\xVec_0$ and with
	$\tau\left(\cdot\right):=\sigma\left(\cdot+1+n'''\right)$ instead of $\sigma$
	generates the sequence $\left(\xVec_{i+1+n'''}\right)_{i\in\mathbb{N}}$.
	From the same argument as before it follows that for any $n'''\in\SetOf{N}$
	any cluster point of
	$\left(\xVec_{\left(i-1\right)N+N-1+1+n'''}\right)_{i\in\mathbb{N}}
		=\left(\xVec_{iN+n'''}\right)_{i\in\mathbb{N}}
		=\left(\xVec'_{i+1,n'''}\right)_{i\in\mathbb{N}}$
	is a stationary point and a coordinate-wise global minimum and of $f$.
	Since any cluster point $\left(\xVec_i\right)_{i\in\mathbb{N}}$ needs to be a cluster point
	of $\left(\xVec_{iN+n'''}\right)_{i\in\mathbb{N}}$ for some $n'''\in\SetOf{N}$,
	the claim follows.
\end{proof}
	\section{Number Of Receive Antennas}
\label{Section:number_of_receive_antennas}
In this section the scaling of the number of receive antennas $K$ in
\thref{Theorem:ad_errors} is discussed. The number of receive antennas is dependent on which of the two
estimators \eqref{Equation:Theorem:ad_errors:nnlr_optimization} and
\eqref{Equation:Theorem:ad_errors:ml_optimization} is chosen
and needs to satisfy \eqref{Equation:Theorem:nnlr_errors:def_K}
or \eqref{Equation:Theorem:ml_errors:def_K} respectively.
The first part of the maximum in \eqref{Equation:Theorem:ml_errors:def_K}
can simply be evaluated so that the second part of this maximum
is of interest.
The term $-\frac{8}{3}\Ln{\frac{1-p}{M\left(M+1\right)}}>0$ appears in both \eqref{Equation:Theorem:nnlr_errors:def_K} and \eqref{Equation:Theorem:ml_errors:def_K},
comes from the union bound, and can also be evaluated.
The term
$\sup_{m\in\SetOf{M}}
	\left(\mathcal{A}\left(\xVec\right)+\SigmaMat\right)_{m,m}$ requires knowledge of the vector
of large scale fading coefficients that is unknown prior to choosing the number of receive antennas.
However, if additional box constraints for the large scale fading coefficients are known,
this value can be bounded and evaluated easily as well.
Such constraints are usually present and known in applications,
since devices can only send at a maximum power
or are considered inactive and absorbed in the noise if
their transmit power is too small.
\par
Ignoring $\tau\left(\mathcal{A}\right)$, $\epsilon$, $\delta\left(\epsilon\right)$ as
well as the logarithmic terms the number of receive antennas $K$ needs to scale at least
in the order of $M^2$. However, the more insightful information is within terms
$\tau\left(\mathcal{A}\right)$, $\epsilon$ and $\delta\left(\epsilon\right)$.
According to \eqref{Equation:Theorem:nnlr_errors:def_K}
in the case of the non-negative least squares
the number of receive antennas
depends directly on the product $\tau\left(\mathcal{A}\right)\epsilon$.
According to \eqref{Equation:Theorem:ml_errors:def_K}
in the case of the relaxed maximum likelihood estimator
the number of receive antennas depends on
$\delta\left(\epsilon\right)$ from \thref{Theorem:skc} which can be chosen as $\delta_{skc}$.
Due to \eqref{Equation:Theorem:skc:delta_def} $\delta_{skc}$ scales like the product
$M^{-1}\tau\left(\mathcal{A}\right)\epsilon$ for $\epsilon$ small enough and $M$ large enough.
Due to this, the relaxed maximum likelihood estimator requires significantly more receive
antennas to achieve the same estimation errors in \thref{Theorem:ad_errors}.
It remains open, whether this gap can be overcome by improving \thref{Theorem:skc}.
\par
Consider $K_0=K_0\left(\epsilon\right)$ from
\eqref{Equation:Theorem:nnlr_errors:def_K}
or \eqref{Equation:Theorem:ml_errors:def_K} respectively
as a function of $\epsilon$.
If all factors but $\epsilon$ are constant and $\epsilon$ is small enough,
$\delta_{skc}$ is linear.
Thus, in both cases $K_0\left(\epsilon\right)$ is also invertible for $\epsilon$
small enough, and it scales like $K_0\left(\epsilon\right)=C\epsilon^{-2}$.
Hence, if all other factors are constant and $\epsilon=\norm{\xVec-\zVec}_2$ is small enough,
\begin{align}
	\label{Equation:scaling_K_est_error}
		K
	\geq
		K_0\left(\epsilon\right)
	=
		C\norm{\xVec-\zVec}_2^{-2}
\end{align}
is sufficient to achieve the estimation error $\norm{\xVec-\zVec}_2$.
\par
However, both estimators depend on the robustness constant $\tau\left(\mathcal{A}\right)$.
It should be noted that the robustness constants are different in each case because
they depend on different norms, see the proof of \thref{Theorem:ad_errors} with \eqref{Equation:Theorem:ad_errors:nnlr_optimization} and the proof of \thref{Theorem:skc} respectively.
However, this is not that important since independent of the norm
the robustness constant of any normalized operator $\mathcal{A}$ satisfies
\begin{align}
		\tau\left(\mathcal{A}\right)
	\leq	
		2\sqrt{\frac{2}{3}}\left(\Exp{\frac{S}{4M^2}\Ln{\frac{N}{4S}}}-1\right)^{-1}
\end{align}
by \cite[Theorem~5.2]{NNLR}.
Thus, in both cases $\tau\left(\mathcal{A}\right)$ can only stay constant
if the number of pilot symbols satisfies $M^2\asymp S\Ln{\frac{N}{4S}}$
or is higher.
If $M^2\asymp S\Ln{\frac{N}{4S}}$ is violated,
the robustness constant decreases exponentially in $\frac{S}{4M^2}\Ln{\frac{N}{4S}}$
and, ignoring other factors, the number of receive antennas also grows exponentially in
$\frac{S}{4M^2}\Ln{\frac{N}{4S}}$.
So \thref{Theorem:ad_errors} allows one to reduce the required number of pilot symbols
from $M^2\asymp S\left(\Ln{\ExpE\frac{N}{S}}\right)^2$ to $M^2\asymp S$,
but it pays the price for a significantly increased number of receive antennas due to
$\tau\left(\mathcal{A}\right)$ decreasing.
\par
Consider the choice $\epsilon:=\frac{1}{4}\min_{n:x_n\neq 0}\abs{x_n}$ in
\thref{Remark:thresholding}
and the ratio $\xi=\frac{\epsilon}{\sup_{m'\in\SetOf{M}}\left(\mathcal{A}\left(\xVec\right)+\SigmaMat\right)_{m',m'}}$,
and assume all other factors stay constant.
For large $\epsilon$ and small $\SigmaMat$ the ratio $\xi$ stays constant,
and due to \eqref{Equation:Theorem:nnlr_errors:def_K} $K$ can be chosen
constant for the non-negative least squares.
However, for small $\epsilon$ and large $\SigmaMat$ the ratio $\xi$ is small,
and according to \eqref{Equation:Theorem:nnlr_errors:def_K} $K$ needs to scale
like $\frac{1}{\xi^2}$ for the non-negative least squares.
On the other hand, for small $\epsilon$ and large $\SigmaMat$
the ratio $\xi^2$ scales roughly like
$\frac{\epsilon^2}{\sup_{m'\in\SetOf{M}}\left(\SigmaMat\right)_{m',m'}^2}$
which is basically the signal-to-noise ratio of the weakest user.
Thus, the number of active users $K$ for the non-negative least squares
has to scale like one over the signal-to-noise ratio of the weakest user
if the signal-to-noise ratio of the weakest user is small.
\par
In the case of the relaxed maximum likelihood estimator one needs to consider
the denominator $\delta\left(\epsilon\right)$ instead of $\epsilon$
due to \eqref{Equation:Theorem:ml_errors:def_K}.
$\delta\left(\epsilon\right)$ can be chosen as $\delta_{skc}\left(\epsilon\right)$ from
\eqref{Equation:Theorem:skc:delta_def} which is scaling linearly in $\epsilon$ for small
$\epsilon$. Hence, the number of active users $K$ for the relaxed maximum likelihood
estimator also has to scale like one over the signal-to-noise ratio of the weakest user
if the signal-to-noise ratio of the weakest user is small.
In any case, any change in the signal-to-noise ratio can be compensated by potentially
increasing the number of receive antennas.
	\section{Simulations}
\label{Section:Simulations}
\begin{figure}[!t]
\centering
\subfloat[Robustness constant $\tau'$ and estimation error as a function of
$S$ for adversarial large scale fading coefficients in the case of infinitely many receive antennas with $N=17,M=4$.]{\includegraphics[width=2.5in]{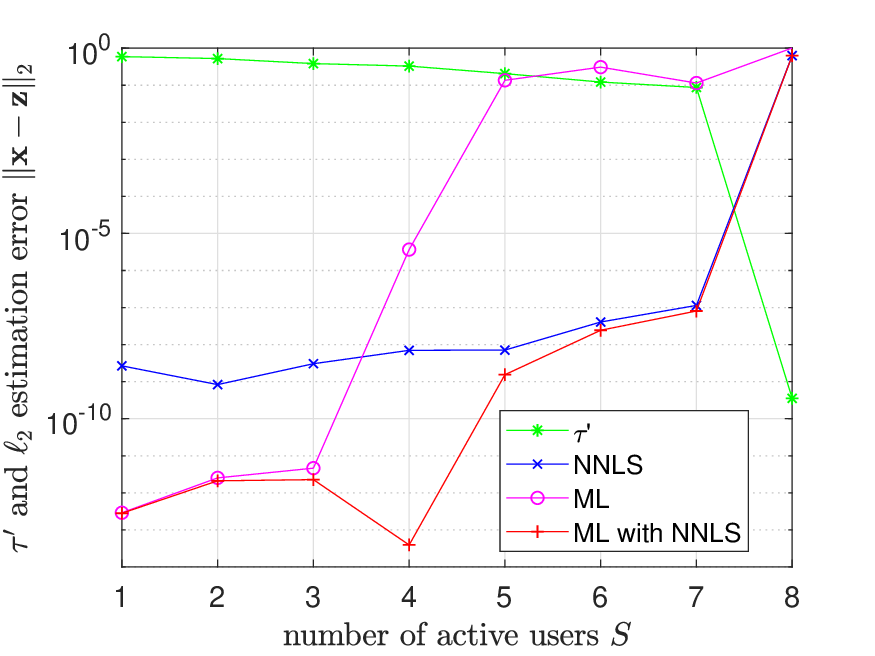}%
\label{Figure:Numerics:adversarial}}
\hfil
\subfloat[Estimation error as a function of $S$ for random large scale fading coefficients in the case of infinitely many receive antennas with $N=17,M=4$.]{\includegraphics[width=2.5in]{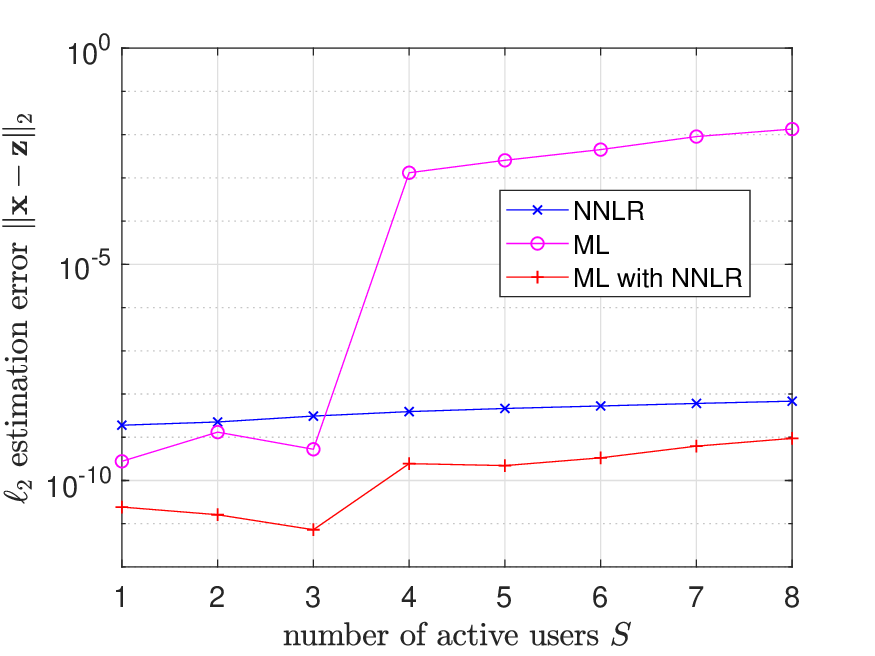}%
\label{Figure:Numerics:infinite}}
\hfil
\subfloat[Estimation error as a function of the magnitude of the perturbation $\tau$ when $\WMat$ is artificially perturbed with magnitude $\rho$ for random large scale fading coefficients with $N=17,M=4,S=7$.]{\includegraphics[width=2.5in]{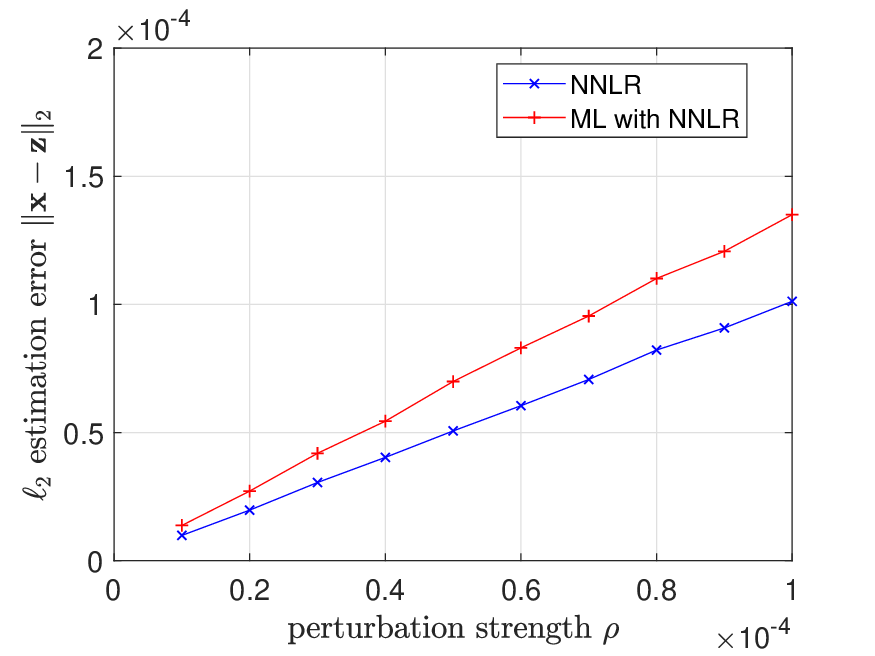}%
\label{Figure:Numerics:hermitian}}
\hfil
\subfloat[$\norm{\xVec-\zVec}_2^{-2}$ as a function of $K$ in the case of finitely many receive antennas for random large scale fading coefficients with $N=17,M=4,S=7$.]{\includegraphics[width=2.5in]{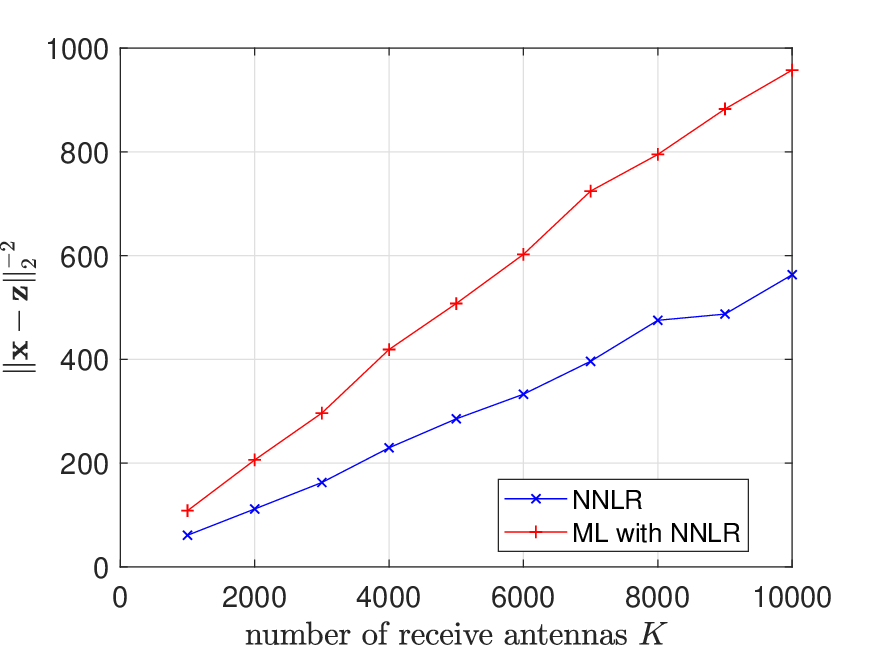}%
\label{Figure:Numerics:finite}}
\caption{Numerical verification of several results.}
\label{Figure:Numerics}
\end{figure}
In this section some numerical results are presented. 
These numerical results are not supposed to show superior performance,
but they are supposed to verify the theoretical predictions of this work.
The codebook used in the proof of \thref{Theorem:sample_rate}, i.e. from \eqref{Equation:Theorem:sample_rate:defA},
has a very bad robustness constant, thus it performs poorly in simulations.
Due to this, one codebook $\AMat$ with independent $\CGaussianRV{0}{1}$
entries is drawn and the method \cite[Theorem~3.8]{NNLR}
is used to verify that the linear operator
$\mathcal{A}\left(\zVec\right)=\sum_{n=1}^N\aVec_n\aVec_n^Hz_n$
has signed kernel condition of order $S_0$ but
does not have signed kernel condition of order $S_0+1$.
See \cite[Section~6]{NNLR} for an explanation on how to adapt this
to the complex case.
Due to the combinatorial nature of \cite[Theorem~3.8]{NNLR} and
since these results are supposed to verify theoretical predictions,
all dimensions except the number of receive antennas $K$ are kept small.
To be precise, for all simulations the codebook named above is fixed with
$M=4,N=17$ and $S_0=7$, and $\Sigma$ is chosen as $10^{-4}$ times the identity.
All convex optimization problems, including
\eqref{Equation:Theorem:ad_errors:nnlr_optimization}
and the ones appearing in \cite[Theorem~3.8]{NNLR}, will be solved with
the CVX package for Matlab \cite{CVX1,CVX2}.
The resulting minimizer of \eqref{Equation:Theorem:ad_errors:nnlr_optimization}
is denoted by NNLS in plots and in this section.
The problem \eqref{Equation:Theorem:ad_errors:ml_optimization}
is solved by \refP{Algorithm:coord_desc} with a permutation chosen uniformly at random and terminates after $100$ while iterations.
If \refP{Algorithm:coord_desc} is initialized with the zero vector, it will
be denoted by ML in plots and in this section.
If it is initialized by NNLS,
it is denoted by ML with NNLS in plots and in this section.
Given the codebook $\AMat$ let $\BMat\in\mathbb{C}^{M^2\times N}$
be the matrix whose $n$-th column is a reordering of $\aVec_n\aVec_n^H$
and further let $\BMat_{real}$ be the real part
and $\BMat_{imag}$ be the imaginary part of $\BMat$.
The method from \cite[Theorem~3.8]{NNLR} will be used with the
matrix $\begin{bmatrix}\BMat_1\\\BMat_2\end{bmatrix}\in\mathbb{R}^{2M^2\times N}$,
the norm $\norm{\cdot}=\norm{\cdot}_2$ on $\mathbb{R}^{2M^2}$
and some $S$ as input to calculate
\begin{align}
	\label{Equation:tau_prime}
		\tau'
	:=
		\inf_{
			\zVec'\in\mathbb{R}_+^N,\xVec'\in\Sigma_S^N\cap\mathbb{R}_+^N,\zVec\neq\xVec
		}
		\frac{
			\norm{
				\begin{bmatrix}\BMat_1\\\BMat_2\end{bmatrix}
				\left(\zVec'-\xVec'\right)
			}_2
		}{
			\norm{\zVec'-\xVec'}_1
		}
\end{align}
as described in \cite[Section~6]{NNLR}.
$\tau'$ is one robustness constant of the linear operator
$\mathcal{A}\left(\zVec\right)=\sum_{n=1}^N\aVec_n\aVec_n^Hz_n$;
however, it is not one of the robustness constants used in this work,
see \cite[Definition~2.1,Theorem~3.8]{NNLR}.
\par
According to \thref{Theorem:skc_recovery} and the corresponding
result for the non-negative least squares from \cite{NNLR}
the signed kernel condition is a necessary and sufficient condition
for recovery of all $\xVec\in\Sigma_S^N\cap\mathbb{R}_+^N$ to succeed
with the non-negative least squares and the relaxed maximum likelihood estimator
when no perturbations in the covariance matrix are present.
Further, according to
\cite[Theorem~3.2, Theorem~2.2, Proposition~2.8]{NNLR}
$\tau'>0$ is equivalent to
the linear operator
$\mathcal{A}\left(\zVec\right)=\sum_{n=1}^N\aVec_n\aVec_n^Hz_n$
having the signed kernel condition of order $S$.
To verify this, the constant $\tau'$ is calculated.
Further, the minimizers $\zVec',\xVec'$ of \eqref{Equation:tau_prime}
are used to create the adversarial vector of large-scale fading coefficients
$\xVec=\frac{\xVec'}{\norm{\xVec'}_2}$.
Then, the problems
\eqref{Equation:Theorem:ad_errors:nnlr_optimization}
and
\eqref{Equation:Theorem:ad_errors:ml_optimization}
are solved with
$\frac{1}{K}\YMat\YMat^H$ replaced by $\WMat:=\sum_{n=1}^N\aVec_n\aVec_n^Hx_n+\SigmaMat$.
This simulates an infinite number of receive antennas for an adversarial vector
of large-scale fading coefficients.
The results are plotted in \refP{Figure:Numerics:adversarial}.
The robustness constant $\tau'$ is non-zero for $S\leq 7$ but
nearly zero for $S=8$. Thus, the linear operator
$\mathcal{A}\left(\zVec\right)=\sum_{n=1}^N\aVec_n\aVec_n^Hz_n$
has signed kernel condition of order $S\leq S_0=7=\left\lceil\frac{1}{2}M^2\right\rceil-1$
which is exactly what is possible according to \thref{Theorem:sample_rate}.
Since $\tau'$ is non-zero for $S=8$, the linear operator
$\mathcal{A}\left(\zVec\right)=\sum_{n=1}^N\aVec_n\aVec_n^Hz_n$
likely does not have signed kernel condition of order
$S=8>\left\lceil\frac{1}{2}M^2\right\rceil-1$
which is exactly as predicted by \cite[Remark~3.14]{NNLR}.
NNLS and ML with NNLS recover the vector of large scale fading coefficients
sufficiently well for $S\leq 7$ and but fail for $S>7$.
Thus, they succeed exactly whenever the linear operator
$\mathcal{A}\left(\zVec\right)=\sum_{n=1}^N\aVec_n\aVec_n^Hz_n$
has the signed kernel condition.
The recovery with ML fails whenever $S>4$.
Since ML with NNLS succeeds for $4<S\leq 7$,
the reason for this must be that \refP{Algorithm:coord_desc}
does not find a global minimizer, and that the results only hold
for global minimizers.
It could be that \refP{Algorithm:coord_desc} does not use enough iterations,
gets stuck in a stationary point that is not a global minimizer,
or just uses a bad initialization.
In total, the simulation supports the prediction of \cite[Remark~3.14]{NNLR}, namely that
the signed kernel condition of order $S>\left\lceil\frac{1}{2}M^2\right\rceil-1$
can not be fulfilled.
Further, the simulation supports the prediction of \thref{Theorem:skc_recovery},
namely that the signed kernel condition is a necessary condition
for recovery of all $\xVec\in\Sigma_S^N\cap\mathbb{R}_+^N$ to succeed
with the non-negative least squares and the relaxed maximum likelihood estimator
when no perturbations in the covariance matrix are present.
\par
In all further simulations $\xVec$ is drawn uniformly at random from
$\Sigma_S^N\cap\mathbb{R}_+^N\cap\left\{\zVec:\norm{\zVec}_2=1\right\}$
instead of being the adversarial construction,
and for every simulation the average of
$\norm{\xVec-\zVec}_2$ over $1000$ samples is calculated and plotted.
In order to investigate the sufficiency of the signed kernel condition
for recovery, the problem
\eqref{Equation:Theorem:ad_errors:nnlr_optimization}
and
\eqref{Equation:Theorem:ad_errors:ml_optimization}
are solved with
$\frac{1}{K}\YMat\YMat^H$ replaced by the $\WMat:=\sum_{n=1}^N\aVec_n\aVec_n^Hx_n+\SigmaMat$ but this time $\xVec$ is the randomly chosen as described.
This again simulates an infinite number of receive antennas.
The results are plotted in \refP{Figure:Numerics:infinite}.
NNLS and ML with NNLS can recover $\xVec$ sufficiently for all $S\leq 7$.
Thus, the simulation supports the prediction of \thref{Theorem:skc_recovery},
namely that the signed kernel condition is a sufficient condition
for recovery of all $\xVec\in\Sigma_S^N\cap\mathbb{R}_+^N$ to succeed
with the non-negative least squares and the relaxed maximum likelihood estimator
when no perturbations in the covariance matrix are present.
NNLS and ML with NNLS exceed the theoretically guaranteed performance
as they seem to guarantee recovery even for $S>7$;
however, this does not disprove the theory.
It could be that it is just unlikely to draw a vector $\xVec$ that is similar
to the adversarial vectors created in \refP{Figure:Numerics:adversarial}
so that this just never happens in the $1000$ samples.
It should be noted that ML requires $S\leq 4$ for sufficient recovery.
The bad performance of ML compared to ML with NNLS is again due to
\refP{Algorithm:coord_desc} not reaching a global optimizer of
\eqref{Equation:Theorem:ad_errors:ml_optimization}.
Since all predictions are only about global minimizers of
\eqref{Equation:Theorem:ad_errors:ml_optimization}, ML will be omitted in
further simulations.
For other plots $S:=S_0=7$ is fixed.
\par
In total, \refP{Figure:Numerics:adversarial} and \refP{Figure:Numerics:infinite}
support the prediction that the signed kernel condition
is a sufficient and necessary condition
for recovery of all $\xVec\in\Sigma_S^N\cap\mathbb{R}_+^N$
with the non-negative least squares and the relaxed maximum likelihood estimator
when no perturbations in the covariance matrix are present.
Moreover, both simulations support the prediction that
$S\leq\left\lceil\frac{1}{2}M^2\right\rceil-1$
is the exact condition when recovery of all $\xVec\in\Sigma_S^N\cap\mathbb{R}_+^N$
can be possible.
\par
According to \thref{Theorem:skc} $\delta$ scales at worst linearly in $\epsilon$ for small enough
$\epsilon$. By an argument similar to the one used to get
\eqref{Equation:tuple_convex:homogeinity},
the estimation error $\norm{\xVec-\zVec}_2$ should scale linearly in the magnitude of the perturbation
$\norm{\WMat-\sum_{n=1}^N\aVec_n\aVec_n^Hz_n-\SigmaMat}_{2\rightarrow 2}$
as long as the magnitude of the perturbation is small.
In order to investigate this, the columns of the real and imaginary
part of $\NMat\in\mathbb{C}^{M\times M}$
are drawn mutually independently according to $\GaussianRV{0}{\IDMat}$
and $\NMat':=\frac{\NMat+\NMat^H}{\norm{\NMat+\NMat^H}_{2\rightarrow 2}}$
is normalized to create a Hermitian indefinite perturbation.
Then, 
\eqref{Equation:Theorem:ad_errors:nnlr_optimization}
and
\eqref{Equation:Theorem:ad_errors:ml_optimization}
are solved
with $\frac{1}{K}\YMat\YMat^H$ replaced by the $\WMat:=\sum_{n=1}^N\aVec_n\aVec_n^Hz_n+\SigmaMat+\rho\NMat'$
to generate a perturbation with magnitude $\rho>0$. The results are plotted in \refP{Figure:Numerics:hermitian}.
As predicted, the scaling is linear for both estimators.
\par
According to \eqref{Equation:scaling_K_est_error} and the discussion before it,
the value $\norm{\xVec-\zVec}_2^{-2}$ should scale at worst like $K$
as long as other factors remain constant and $\norm{\xVec-\zVec}_2$ is already small enough.
In order to investigate this, the problems
\eqref{Equation:Theorem:ad_errors:nnlr_optimization}
and
\eqref{Equation:Theorem:ad_errors:ml_optimization} are solved where $\YMat$ is as
specified in \thref{Theorem:ad_errors}.
The results are plotted in \refP{Figure:Numerics:finite}.
The values scale as predicted for both estimators.
	\section*{Acknowledgements}
The work of G. Caire and H. B. Zarucha was supported by the Gottfried Wilhelm Leibniz-Preis 2021 of the German Science Foundation (DFG).

\end{document}